\renewcommand{\epsilon}{\varepsilon}
\renewcommand{\phi}{\varphi}
\newcommand{\sa}{\textrm{sa}}
\newtheorem{theorem}{Theorem}[section]
\newtheorem{proposition}[theorem]{Proposition}
\newtheorem{lemma}[theorem]{Lemma}
\newtheorem{corollary}[theorem]{Corollary}
\newtheorem{definition}[theorem]{Definition}
\newtheorem{remark}[theorem]{Remark}
\newcommand{\ip}[2]{\left\langle\,#1\,|\,#2\,\right\rangle} 
\newcommand{\ket}[1]{|#1\rangle} 
\newcommand{\bra}[1]{\langle#1|} 
\newcommand{\kb}[2]{|#1\rangle\langle#2|} 
\newcommand{\bk}[2]{\langle#1|#2\rangle} 
\DeclareMathOperator{\Tr}{Tr}
\definecolor{candypink}{rgb}{0.89, 0.44, 0.48}
\begin{document}

\title[Multipartite entanglement detection via projective tensor norms]{Multipartite entanglement detection\\via projective tensor norms}

\author{Maria Anastasia Jivulescu}
\address{Department of Mathematics, Politehnica University of Timi\c soara, Victoriei Square 2, 300006 Timi\c soara, Romania.}
\email{maria.jivulescu@upt.ro}

\author{C\'ecilia Lancien}
\address{Institut de Math\'ematiques de Toulouse \& CNRS, Universit\'e Paul Sabatier, 118 route de Narbonne, F-31062 Toulouse Cedex 9, France.}
\email{clancien@math.univ-toulouse.fr}

\author{Ion Nechita}
\address{Laboratoire de Physique Th\'eorique \& CNRS, Universit\'e Paul Sabatier, 118 route de Narbonne, F-31062 Toulouse Cedex 9, France.}
\email{nechita@irsamc.ups-tlse.fr}

\begin{abstract}
We introduce and study a class of entanglement criteria based on the idea of applying local contractions to an input multipartite state, and then computing the projective tensor norm of the output. More precisely, we apply to a mixed quantum state a tensor product of contractions from the Schatten class $S_1$ to the Euclidean space $\ell_2$, which we call entanglement testers. We analyze the performance of this type of criteria on bipartite and multipartite systems, for general pure and mixed quantum states, as well as on some important classes of symmetric quantum states. We also show that previously studied entanglement criteria, such as the realignment and the SIC POVM criteria, can be viewed inside this framework. This allows us to answer in the positive two conjectures of Shang, Asadian, Zhu, and G\"uhne by deriving systematic relations between the performance of these two criteria.
\end{abstract}

\date{\today}

\maketitle

\tableofcontents

\section{Introduction}\label{sec:intro}

Determining whether a multipartite quantum system is in a \emph{separable} or an \emph{entangled} state is of prime importance in quantum information theory. Indeed, if such a quantum system is in a separable state, it means that there are no intrinsically quantum correlations between its subsystems, so that it is not providing any advantage compared to a classical system in information processing tasks. However, the problem of deciding if a multipartite quantum state is separable or entangled (and even only approximate versions of it) is known to be computationally hard \cite{Gharibian}. Standard solutions to overcome this practical difficulty consist in looking for necessary conditions to separability that would be easier to check than separability itself. These are usually dubbed \emph{entanglement criteria}, and various families of such criteria have already been extensively studied in the past.

From a mathematical point of view, a quantum state is described by a positive semidefinite operator on a complex Hilbert space having unit Schatten $1$-norm. And as we will see in more details later, for a quantum state on a multipartite system (i.e.~on a tensor product Hilbert space), being entangled is characterized by having a so-called \emph{projective Schatten $1$-norm} which is strictly larger than $1$ \cite{rudolph2000separability}. But there is no efficient way of estimating such tensor norm in general \cite{perez2004deciding}. An alternative consists in looking at other tensor norms, whose values are easier to compute and always smaller than the tensor norm characterizing entanglement (so that if they are strictly larger than $1$, then the state is guaranteed to be entangled). 

This is the approach that we take in this work. We define and study a class of entanglement criteria based on the idea of applying local contractions to an input multipartite state, and then computing the projective tensor norm of the output. More precisely, the local contractions that we consider are from the Schatten $1$-norm to the $\ell_2$-norm, i.e.~from a non-commutative space to a commutative one. This is what makes such entanglement criteria interesting in practice: they can be seen as reducing the study of mixed state entanglement to that of pure state entanglement, which is an easier task. 

Another advantage of our entanglement criteria is that their definition is independent from the number of subsystems. Several aspects are, admittedly, simpler to understand in the bipartite case, but they remain equally well-suited to the case where more than two parties are involved. In fact, one of the main issues with most well-known entanglement criteria is that they are specifically designed for bipartite systems, and generalizations to systems with more parties are not fully satisfying. Indeed, they usually consist in applying the bipartite criterion across all bipartitions, which certifies entanglement across bipartitions of the subsystems, but not genuinely multipartite entanglement \cite{horodecki2006separability}.

Well studied entanglement criteria, such as the realignment criterion \cite{chen2003realignment,rudolph2004ccnr} and the SIC POVM criterion \cite{shang2018enhanced}, are important examples in the framework we consider. Our work provides natural generalizations of these criteria to the multipartite setting, going beyond the biseparable case already discussed in the literature. Moreover, we establish an exact relation between the performance of the realignment and the SIC POVM criteria, solving in the positive two conjectures from \cite{shang2018enhanced}.

The remainder of the paper is organised as follows.  In Section \ref{sec:banach} we first recall basic facts about tensor norms on Banach spaces, and then relate them with the characterization of entanglement. With these observations at hand, we can define in Section \ref{sec:testers} the main objects of interest in this work, which we dub \emph{entanglement testers}, and establish some of their first key properties. Section \ref{sec:important-examples} is dedicated to providing explicit examples of testers, by showing that several well-known entanglement criteria (such as the celebrated realignment criterion or one based on SIC POVMs introduced more recently) can actually be seen as corresponding to a tester. In Section \ref{sec:perfect-testers} we define and characterize an important sub-class of testers: that of \emph{perfect testers}, which detect all entangled pure states. This naturally brings us to Section \ref{sec:symmetric-testers}, where we show that the examples of Section \ref{sec:important-examples} are in fact all special instances inside an important sub-class of perfect testers: that of \emph{symmetric testers}. From then on we focus for a while on these symmetric testers. We quantify how they perform in detecting the entanglement of several classes of bipartite states: pure states (Section \ref{sec:pure-states}), Werner and isotropic states (Section \ref{sec:symmetric-states}), pure states with white noise (Section \ref{sec:noisy-states}). On all these examples, we observe a systematic relation between the performance of the realignment and SIC POVM testers, proving on the way a conjecture from \cite{shang2018enhanced}. So in Section \ref{sec:conjecture} we ask whether it would hold more generally, for any bipartite state. This allows us to answer in the positive another conjecture from \cite{shang2018enhanced}. After this time spent on studying the specific case of symmetric testers we go back to a more general question in Section \ref{sec:completeness}, namely: is our family of entanglement criteria complete, i.e.~in other words, can any bipartite entangled state be detected by a tester? In Section \ref{sec:multipartite}, we take a look at what can be shown in the multipartite case. Finally, we present in Section \ref{sec:conclusions} an overview of the main results, as well as a list of the problems we have left open and some directions for future work. 

\section{Tensor products of Banach spaces and quantum entanglement}
\label{sec:banach}

We gather in this section basic definitions and facts about the different natural norms one can put on the algebraic tensor product of finite dimensional Banach spaces. Studying the different tensor norms of multipartite pure and mixed quantum states, and relating these norms to quantum entanglement, is the main theme of our work. In this sense, the current section contains the mathematical foundation on which the practical applications to quantum information are built upon. 

Let us start by recalling the definitions of the projective and the injective tensor norms for (finite dimensional) Banach spaces. 

\begin{definition}
Consider $m$ Banach spaces $A_1,\ldots,A_m$. For a tensor $x \in A_1 \otimes \cdots\otimes A_m$, we define its \emph{projective tensor norm}
\begin{equation}
\label{eq:def-projective-norm}\|x\|_\pi := \inf \left\{ \sum_{k=1}^r \|a_k^1\| \cdots \|a_k^m\| \, : \, r \in \mathbb N, \ a_k^i\in A_i,\ x = \sum_{k=1}^r a_k^1 \otimes \cdots \otimes a_k^m \right\},
\end{equation}
and its \emph{injective tensor norm}
\begin{equation}
\label{eq:def-injective-norm}\|x\|_\epsilon := \sup \left\{ \ip{\alpha^1 \otimes \cdots \otimes \alpha^m}{x} \, : \, \alpha^i \in A_i^*,\  \|\alpha^i\| \leq 1 \right\}.
 \end{equation}    
\end{definition}

It is immediate to see that the projective tensor norm can be equivalently defined as
\begin{equation} \label{eq:def-projective-norm-2}
\|x\|_\pi := \inf \left\{ \sum_{k=1}^r |\lambda_k| \, : \, r \in \mathbb N, \ x = \sum_{k=1}^r \lambda_k\, a_k^1 \otimes \cdots \otimes a_k^m, \ a_k^i\in A_i,\ \|a_k^i\|\leq 1 \right\}.
\end{equation}

The projective and injective norms are examples of \emph{tensor norms} (also known as \emph{reasonable cross-norms}): for simple tensors, we have 
$$\|a_1 \otimes \cdots \otimes a_m \|_\pi = \|a_1 \otimes \cdots \otimes a_m \|_\epsilon = \|a_1\| \cdots \|a_m\|,$$
and the same factorization property holds for the dual norms they induce on the tensor product of dual spaces $A_1^* \otimes \cdots A_m^*$. Moreover, the projective and the injective norms are dual to one another: for all $x \in A_1 \otimes \cdots \otimes A_m$, 
\begin{align*}
     \|x\|_{\pi} &= \sup_{\|\alpha\|_{A_1^* \otimes_\epsilon \cdots \otimes_\epsilon A_m^*} \leq 1} \ip{\alpha}{x} ,\\
    \|x\|_{\epsilon} &= \sup_{\|\alpha\|_{A_1^* \otimes_\pi \cdots \otimes_\pi A_m^*} \leq 1} \ip{\alpha}{x}.
\end{align*}
The last property of the projective and the injective tensor norms that we would like to mention is that they are extremal among tensor norms: for any other tensor norm $\|\cdot\|$ on $A_1 \otimes \cdots \otimes A_m$, we have 
$$\forall\ x \in A_1 \otimes \cdots \otimes A_m, \quad \|x\|_\epsilon \leq \|x\| \leq \|x\|_\pi.$$

The following fact will be crucial to the main definition from the next section. 
\begin{proposition}\label{prop:bound-norm-tensor-product-operators}
Consider $m$ linear operators $T_i : A_i \to B_i$ between Banach spaces $A_i, B_i$, $1\leq i\leq m$. Then, for any tensor norm on $B_1 \otimes \cdots \otimes B_m$,
$$\left\|\bigotimes_{i=1}^m T_i\right\|_{A_1 \otimes_\pi \cdots \otimes_\pi A_m \to B_1  \otimes \cdots \otimes B_m} = \prod_{i=1}^m \|T_i\|_{A_i\to B_i}.$$
\end{proposition}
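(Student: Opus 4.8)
The plan is to prove the two inequalities $\bigl\|\bigotimes_i T_i\bigr\| \geq \prod_i \|T_i\|$ and $\bigl\|\bigotimes_i T_i\bigr\| \leq \prod_i \|T_i\|$ separately, both relying on the reasonable cross-norm property recalled above, namely that any tensor norm factorizes on simple tensors. Throughout I write $T := \bigotimes_{i=1}^m T_i$.

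For the lower bound I would simply test $T$ on simple tensors. Since the $A_i$ are finite dimensional, for each $i$ the supremum defining $\|T_i\|_{A_i \to B_i}$ is attained at some unit vector $a_i \in A_i$, so that $\|a_i\| = 1$ and $\|T_i(a_i)\| = \|T_i\|$. Setting $x := a_1 \otimes \cdots \otimes a_m$, the cross-norm property on the domain gives $\|x\|_\pi = \prod_i \|a_i\| = 1$, while on the codomain $\|T(x)\| = \|T_1(a_1) \otimes \cdots \otimes T_m(a_m)\| = \prod_i \|T_i(a_i)\| = \prod_i \|T_i\|$. Hence $\|T\| \geq \|T(x)\| = \prod_i \|T_i\|$. (If one prefers to avoid invoking attainment, the same conclusion follows by choosing near-optimal unit vectors and letting the approximation error tend to zero.)

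For the upper bound I would exploit the second description \eqref{eq:def-projective-norm-2} of the projective norm, which is tailored precisely to decompositions into simple tensors. Fix $x$ with $\|x\|_\pi \leq 1$ and $\epsilon > 0$, and choose a decomposition $x = \sum_{k=1}^r \lambda_k\, a_k^1 \otimes \cdots \otimes a_k^m$ with $\|a_k^i\| \leq 1$ for all $i,k$ and $\sum_k |\lambda_k| \leq 1+\epsilon$. Applying $T$ and combining the triangle inequality with the cross-norm property on the codomain yields
\[
\|T(x)\| \leq \sum_{k=1}^r |\lambda_k|\, \prod_{i=1}^m \|T_i(a_k^i)\| \leq \sum_{k=1}^r |\lambda_k| \prod_{i=1}^m \|T_i\| \leq (1+\epsilon) \prod_{i=1}^m \|T_i\|,
\]
where the middle inequality uses $\|T_i(a_k^i)\| \leq \|T_i\|\,\|a_k^i\| \leq \|T_i\|$. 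Letting $\epsilon \to 0$ and taking the supremum over $x$ in the unit ball gives $\|T\| \leq \prod_i \|T_i\|$, and combining the two bounds proves the equality.

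The one point deserving care — and the closest thing to an obstacle — is the asymmetry between the two sides: the identity genuinely requires the \emph{projective} norm on the domain. Indeed, the upper bound works precisely because $\|\cdot\|_\pi$ is the largest tensor norm, so that every $x$ admits a simple-tensor decomposition whose total weight is controlled by $\|x\|_\pi$; this is what allows the triangle inequality to reduce the general estimate to the factorizing case of simple tensors. With any smaller tensor norm on the domain one would in general obtain only an inequality, so it is the maximality of $\|\cdot\|_\pi$ that upgrades the bound to an exact product formula.
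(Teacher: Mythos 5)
Your proof is correct, but it takes a more hands-on route than the paper's. The paper argues via convexity and extreme points: the extreme points of the unit ball of $A_1 \otimes_\pi \cdots \otimes_\pi A_m$ are simple tensors of unit vectors, the convex function $x \mapsto \|T(x)\|$ attains its supremum over the unit ball at such an extreme point, and on a simple tensor the output factors, so its norm is a product of norms by the cross-norm property (which the paper phrases through duality on the codomain). You instead prove the two inequalities separately: the lower bound by testing on (near-)optimal simple tensors, exactly as one would extract from the paper's argument, and the upper bound by taking an $\epsilon$-near-optimal decomposition as in \eqref{eq:def-projective-norm-2} and applying the triangle inequality together with the cross-norm factorization on the codomain. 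The trade-off is as follows. Your argument is entirely self-contained: it uses only the definition of $\|\cdot\|_\pi$ and the cross-norm axiom, it never needs attainment of suprema or compactness of unit balls (your parenthetical remark makes it valid verbatim in infinite dimensions), and it makes explicit where maximality of the projective norm on the domain enters --- your closing observation that a smaller domain norm would only yield $\|T\| \geq \prod_i \|T_i\|$ is accurate and is the right way to understand the asymmetry of the statement. The paper's argument is shorter but leans on the extreme-point characterization of the projective unit ball, a standard fact it does not prove, and on the finite-dimensional setting implicitly assumed throughout the section; unpacking that characterization essentially reproduces your decomposition argument.
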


\begin{proof}
For the sake of clarity, let us consider the case $m=2$, the proof in the general case being similar. Set $T := T_1 \otimes T_2 : A_1\otimes A_2 \to B_1 \otimes B_2$.
The extremal points of the unit ball of the projective tensor product $A_1\otimes_{\pi}A_2$ are products of the extremal points of the unit balls of the factors $A_1,A_2$. Now, on such a product input $a=a_1\otimes a_2$, the output $T(a)$ factors as $T_1(a_1)\otimes T_2(a_2)$. And the maximal inner product of such a product tensor with an element of $(B_1\otimes B_2)^*$ is attained on a product tensor. The result thus follows. 
\end{proof}

We shall now relate the different tensor norms discussed above to \emph{quantum entanglement}. First, let us recall that a \emph{multipartite pure quantum state} is a unit vector $\psi$ in a tensor product of complex Hilbert spaces $H_1 \otimes \cdots \otimes H_m$. Here, we shall always assume $m \geq 2$. Since we only consider finite dimensional Hilbert spaces, we make the identification $H_i\cong\mathbb C^{d_i}$ for $1\leq i\leq m$. The pure state $\psi$ is said to be \emph{separable} if it is a pure tensor:
$$\ket \psi = \ket{\psi_1} \otimes \cdots \otimes \ket{\psi_m}.$$

The Banach space structure we consider for each factor is $(H_i, \|\cdot\|_2)$, where each space comes equipped with its Euclidean norm. We write $\ell_2^d := (\mathbb C^d, \|\cdot\|_2)$. In the case of pure states, the relation between entanglement and tensor norms is obvious, and formally stated below.

\begin{proposition} \label{prop:sep-pure}
    A pure quantum state $\psi \in H_1 \otimes \cdots \otimes H_m$, $\|\psi\|_2=1$, is separable if and only if 
    $$\|\psi\|_\epsilon = \|\psi\|_\pi = 1.$$
\end{proposition}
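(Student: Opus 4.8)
The plan is to prove the two implications separately: the direct one rests on the cross-norm factorization recalled in this section, while the converse reduces to the equality case of Cauchy--Schwarz.

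For the direct implication, suppose $\psi$ is separable, say $\ket\psi = \ket{\psi_1}\otimes\cdots\otimes\ket{\psi_m}$. Since the Euclidean norm factorizes on simple tensors, $1 = \|\psi\|_2 = \prod_i \|\psi_i\|_2$, so after rescaling each factor I may assume $\|\psi_i\|_2 = 1$ for every $i$. As the injective and projective norms are reasonable cross-norms, they likewise factorize on a simple tensor, giving $\|\psi\|_\pi = \|\psi\|_\epsilon = \prod_i \|\psi_i\|_2 = 1$.

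For the converse I would first record the two-sided bound $\|\psi\|_\epsilon \le \|\psi\|_2 \le \|\psi\|_\pi$. This follows from the extremality of the injective and projective norms among tensor norms applied to the Hilbert tensor norm $\|\cdot\|_2$ (itself a reasonable cross-norm on $H_1 \otimes \cdots \otimes H_m$), or directly from Cauchy--Schwarz for the left inequality and the triangle inequality for the right. In particular, for a unit vector one always has $\|\psi\|_\epsilon \le 1 \le \|\psi\|_\pi$, so the hypothesis is really only constraining through $\|\psi\|_\epsilon = 1$, and I will show this single equality already forces separability. The key step is that, since each $H_i \cong \complex^{d_i}$ is its own dual in the Euclidean norm, the definition of the injective norm reads $\|\psi\|_\epsilon = \sup \{ |\ip{\alpha^1 \otimes \cdots \otimes \alpha^m}{\psi}| : \|\alpha^i\|_2 \le 1 \}$, and in finite dimension this supremum is attained, at some unit vectors $\alpha^1, \ldots, \alpha^m$. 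Then $|\ip{\alpha^1 \otimes \cdots \otimes \alpha^m}{\psi}| = 1 = \|\alpha^1 \otimes \cdots \otimes \alpha^m\|_2 \, \|\psi\|_2$, i.e.\ the Cauchy--Schwarz inequality in $H_1 \otimes \cdots \otimes H_m$ is saturated; its equality case forces $\psi$ to be a scalar multiple of $\alpha^1 \otimes \cdots \otimes \alpha^m$, the scalar having modulus one by normalization, so $\psi$ is a simple tensor and hence separable.

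I expect no serious obstacle here; the only point requiring care is the identification of the injective norm with a genuine inner-product overlap, which relies on the self-duality of $\ell_2$, together with the observation that the equality case of Cauchy--Schwarz is exactly proportionality to a product vector. For intuition in the bipartite case one could instead invoke the Schmidt decomposition $\psi = \sum_j s_j\, e_j \otimes f_j$, for which $\|\psi\|_\epsilon = \max_j s_j$ and $\|\psi\|_\pi = \sum_j s_j$ while $\sum_j s_j^2 = 1$, so that the equalities force a single nonzero Schmidt coefficient. Since no analogue of the Schmidt decomposition is available for $m \ge 3$, however, the Cauchy--Schwarz argument is the one I would carry out, as it applies uniformly to any number of parties.
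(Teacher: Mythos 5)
Your proof is correct and complete. There is, in fact, nothing in the paper to compare it against: the authors state this proposition as ``obvious'' and give no proof, so your write-up supplies details the paper omits. The forward direction is the standard cross-norm factorization, exactly as the paper's surrounding discussion suggests. For the converse, your reduction to the single equality $\|\psi\|_\epsilon = 1$ is the right move: since $\|\cdot\|_2$ is itself a reasonable cross norm on $H_1 \otimes \cdots \otimes H_m$, extremality gives $\|\psi\|_\epsilon \leq 1 \leq \|\psi\|_\pi$ for any unit vector, and then attainment of the supremum (compactness in finite dimension) combined with the equality case of Cauchy--Schwarz forces $\psi$ to be a unimodular multiple of a product vector. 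This argument has the advantage of working uniformly for every $m$, whereas the Schmidt-decomposition shortcut you mention is genuinely bipartite, as you correctly note. One small point worth making explicit: the identification $H_i^* \cong H_i$ used to rewrite the injective norm as an inner-product overlap is conjugate-linear (Riesz representation), but since you only ever take absolute values of the pairing, this has no effect on the supremum, so the step is sound.
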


Actually, the injective norm is closely related to a fundamental multipartite entanglement measure, the \emph{geometric measure of entanglement} \cite{shimony1995degree,wei2003geometric,zhu2010additivity}
\begin{equation} \label{eq:G}
G(\psi) := - \log \sup \left\{ |\ip{\phi_1 \otimes \cdots \otimes \phi_m}{\psi}|^2 \, : \, \phi_i\in H_i,\ \|\phi_i\|=1 \right\} = -2 \log \|\psi\|_\epsilon.
\end{equation}

Let us now move to the more general case of mixed quantum states, i.e.~of operators $\rho$ on $H_1\otimes\cdots\otimes H_m$ which are positive semidefinite and of unit trace. The Banach space structure we consider on each of the spaces $\mathcal B(H_i)\cong \mathcal M_{d_i}(\mathbb C)$ is that given by the \emph{Schatten $1$-norm} (or \emph{nuclear norm})
$$\|X\|_1 = \Tr\sqrt{X^*X}.$$
We write $S_1^{d} := (\mathcal M_d(\mathbb C), \|\cdot\|_1)$ for the complex Banach space. Since mixed quantum states are self-adjoint operators, we shall also consider the real Banach space
$$S_{1,sa}^d := (\mathcal M_d^{sa}(\mathbb C), \|\cdot\|_1).$$
Note that, in general, we have $S_{1,\sa}^{d} = S_1^{d} \cap \mathcal M_d^\sa(\mathbb C)$ (see e.g.~\cite[Section 1.3.2]{aubrun2017alice}). 

We recall the following fact, relating the separability problem for mixed quantum states to projective tensor norms. Although this is a well-known fact, we give the proof for the sake of completeness and in order to show-case the relation between the positivity properties of $\rho$ and its tensor norms. 

\begin{theorem}{\cite[Theorem 5]{rudolph2000separability} or \cite[Theorem 1.1]{perez2004deciding}}
	For a multipartite mixed quantum state $\rho \in \mathcal M_{d_1}(\mathbb C) \otimes \cdots \otimes \mathcal M_{d_m}(\mathbb C)$, $\rho \geq 0$, $\Tr \rho =1$, the following assertions are equivalent:
	\begin{enumerate}
	    \item $\rho$ is separable,
	    \item $\|\rho\|_{S_{1,sa}^{d_1} \otimes_\pi \cdots \otimes_\pi S_{1,sa}^{d_m}} = 1$,
	    \item $\|\rho\|_{S_1^{d_1} \otimes_\pi \cdots \otimes_\pi S_1^{d_m}} = 1$.
	\end{enumerate}
\end{theorem}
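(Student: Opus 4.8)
The plan is to prove the cycle $(1)\Rightarrow(2)\Rightarrow(3)\Rightarrow(1)$, with essentially all the content in the last implication. Two elementary facts will be used repeatedly. First, $|\Tr X|\le\|X\|_1$, and when $\|X\|_1=1$ equality forces $X=e^{i\theta}\sigma$ for some phase $\theta$ and some state $\sigma$ (seen through the polar decomposition $X=U|X|$: saturating $|\Tr(U|X|)|\le\Tr|X|$ pins down $U$ to act as a single phase on the support of $|X|$). Second, pairing any trace-one operator against $\id^{\otimes m}$, which has injective norm $\prod_i\|\id_{d_i}\|_\infty=1$, gives the universal lower bound $\|\rho\|_\pi\ge\ip{\id^{\otimes m}}{\rho}=\Tr\rho=1$, valid for both the real and the complex projective norms.

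For $(1)\Rightarrow(2)$, a separable decomposition $\rho=\sum_k p_k\,\rho_k^1\otimes\cdots\otimes\rho_k^m$ has each factor $\rho_k^i$ a state, hence an element of $S_{1,\sa}^{d_i}$ of unit $1$-norm; this is admissible for the real projective norm and yields $\|\rho\|_{S_{1,\sa}^{d_1}\otimes_\pi\cdots\otimes_\pi S_{1,\sa}^{d_m}}\le\sum_k p_k=1$, while the identity pairing gives the reverse bound. For $(2)\Rightarrow(3)$, any decomposition admissible over the real spaces (real coefficients, self-adjoint factors) is a fortiori admissible over the complex spaces, so $\|\rho\|_{S_1}\le\|\rho\|_{S_{1,\sa}}=1$, and again the identity pairing forces equality.

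The core is $(3)\Rightarrow(1)$. In finite dimension the unit ball of the complex projective norm is the convex hull of the compact set of normalized product tensors, hence compact, so by Carathéodory the infimum defining the norm is attained by a finite decomposition $\rho=\sum_k\lambda_k\,a_k^1\otimes\cdots\otimes a_k^m$ with $\|a_k^i\|_1=1$ and $\sum_k|\lambda_k|=1$. Taking the trace and bounding $|\Tr a_k^i|\le\|a_k^i\|_1=1$ gives
\[
1=\Tr\rho=\Big|\sum_k\lambda_k\prod_i\Tr a_k^i\Big|\le\sum_k|\lambda_k|\prod_i|\Tr a_k^i|\le\sum_k|\lambda_k|=1,
\]
so all inequalities saturate. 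Saturating $|\Tr a_k^i|=\|a_k^i\|_1=1$ turns each factor into $a_k^i=e^{i\theta_k^i}\sigma_k^i$ with $\sigma_k^i$ a state; setting $\mu_k:=\lambda_k\,e^{i\sum_i\theta_k^i}$ rewrites $\rho=\sum_k\mu_k\,\sigma_k^1\otimes\cdots\otimes\sigma_k^m$ with $\sum_k|\mu_k|=1$ and $\sum_k\mu_k=\Tr\rho=1$. Finally, $\sum_k\mu_k=\sum_k|\mu_k|$ forces $|\mu_k|=\operatorname{Re}\mu_k$, so each $\mu_k\ge0$, exhibiting $\rho$ as a convex combination of product states.

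I expect the main obstacle to be precisely this last implication: the norm-optimal decomposition one starts from has complex coefficients and factors that are neither self-adjoint nor positive, and the work lies in upgrading it to a genuine separable decomposition. The two rigidity phenomena that make this possible are the equality case of $|\Tr X|\le\|X\|_1$ (which converts each factor into a state up to a phase) and the equality case of the triangle inequality $|\sum_k\mu_k|\le\sum_k|\mu_k|$ (which, since the total equals the positive number $\Tr\rho$, kills all residual phases). I would be careful to justify attainment of the infimum by compactness, since the argument relies on an exact optimizer rather than an approximating sequence.
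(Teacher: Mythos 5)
Your proof is correct and follows essentially the same route as the paper's: the same cycle $(1)\Rightarrow(2)\Rightarrow(3)\Rightarrow(1)$, with the core of $(3)\Rightarrow(1)$ resting on attainment of the infimum by compactness, the equality case of $|\Tr X|\le\|X\|_1$ to turn each factor into a phase times a state, and the equality case of the triangle inequality (anchored by $\Tr\rho=1$) to kill the phases. The only cosmetic difference is that you obtain the lower bound $\|\rho\|_\pi\ge 1$ by pairing with $\id^{\otimes m}$ via duality, where the paper invokes that the projective norm dominates the global Schatten $1$-norm; these are interchangeable one-line arguments.
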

\begin{proof}
	Let us first show the implication $(1) \implies (2)$. Given a separable quantum state $\rho$, we have
	$$\rho = \sum_{k=1}^r p_k \ket{x_k^1}\bra{x_k^1} \otimes \cdots \otimes \ket{x_k^m}\bra{x_k^m},$$
	for a probability distribution $(p_k)_{k=1}^r$ and unit vectors $x_k^i \in \mathbb C^{d_i}$, $k \in [r]$, $i \in [m]$. Obviously, $\|\ket{x_k^i}\bra{x_k^i}\|_{S_{1,sa}^{d_i}} = \|x_k^i\|^2=1$, for every $k$ and $i$. So using the separable decomposition, we have
	$$\|\rho\|_{S_{1,sa}^{d_1} \otimes_\pi \cdots \otimes_\pi S_{1,sa}^{d_m}} \leq \sum_{k=1}^r p_k \prod_{i=1}^m \|\ket{x_k^i}\bra{x_k^i}\|_{S_{1,sa}^{d_i}} = 1.$$
	Recall that the projective tensor norm $S_{1,sa}^{d_1} \otimes_\pi \cdots \otimes_\pi S_{1,sa}^{d_m}$ is the largest cross norm on $\mathcal M^{sa}_{d_1}(\mathbb C) \otimes \cdots \otimes \mathcal M^{sa}_{d_m}(\mathbb C)$. So in particular it is larger than the norm $S_{1,sa}^{d_1\cdots d_m}$ on this space, i.e.
	$$\|\rho\|_{S_{1,sa}^{d_1} \otimes_\pi \cdots \otimes_\pi S_{1,sa}^{d_m}} \geq \|\rho\|_{S_{1,sa}^{d_1 \cdots d_m}} = \Tr \rho = 1.$$
	
	The implication $(2) \implies (3)$ is trivial:
	$$1=\|\rho\|_{S_{1,sa}^{d_1} \otimes_\pi \cdots \otimes_\pi S_{1,sa}^{d_m}} \geq \|\rho\|_{S_{1}^{d_1} \otimes_\pi \cdots \otimes_\pi S_{1}^{d_m}} \geq \|\rho\|_{S_{1}^{d_1\cdots d_m}} = \Tr \rho =1,$$
	where we have used the fact that the infimum in \eqref{eq:def-projective-norm} is taken over a smaller set of possible decompositions in the self-adjoint case. 
	
	Finally for the implication $(3) \implies (1)$, consider a decomposition 
	$$\rho = \sum_{k=1}^s a_k^1 \otimes \cdots \otimes a_k^m$$
	achieving the minimum in \eqref{eq:def-projective-norm}, and such that each term is non-zero. (Because of the equivalence between definitions \eqref{eq:def-projective-norm} and \eqref{eq:def-projective-norm-2}, the infimum is indeed attained in our case, as the sets $S_1^{d_1},\ldots,S_1^{d_m}$ are compact.) Above, we have $a^i_k \in \mathcal M_{d_i}(\mathbb C)$, not necessarily self-adjoint. We argue in the same way as before: 
	\begin{equation}\label{eq:entanglement-projective-norm-inequality}
	1 = \|\rho\|_{S_1^{d_1} \otimes_\pi \cdots \otimes_\pi S_1^{d_m}} = \sum_{k=1}^s \prod_{i=1}^k \|a_k^i\|_{S_1^{d_i}}  \geq  \sum_{k=1}^s \prod_{i=1}^k |\operatorname{Tr} a_k^i| \geq  \sum_{k=1}^s \prod_{i=1}^k \operatorname{Tr} a_k^i  = \operatorname{Tr} \rho =  1,
	\end{equation}
	where we have used the inequality $\|X\|_{S^1_d} \geq |\Tr X|$. 
	Hence, for each $k \in [s]$, $i \in [m]$, we have $\|a_k^i\|_{S_1^{d_i}} = |\operatorname{Tr} a_k^i|$, and thus $a_k^i = \omega_k^i b_k^i$ for some phase factor $\omega_k^i \in \mathbb C$, $|\omega_k^i|=1$, and positive semidefinite operators $b_k^i \in \mathcal M_{d_i}(\mathbb C)$. Let us define $\omega_k :=\prod_{i=1}^m \omega_k^i$ (satisfying $|\omega_k|=1$) and 
	$$\tau_k := \left| \prod_{i=1}^m \operatorname{Tr} a_k^i \right| = \prod_{i=1}^m \operatorname{Tr} b_k^i > 0.$$
	From \eqref{eq:entanglement-projective-norm-inequality}, we have 
	$$1 = \sum_{k=1}^s \tau_k = \left|\sum_{k=1}^s \omega_k \tau_k\right|,$$
	and thus the $\omega_k$'s are simultaneously equal to $\pm 1$. Since $\rho$ is positive semidefinite, they all must be equal to $+1$, and thus
	$$\rho = \sum_{k=1}^s b_k^1 \otimes \cdots \otimes b_k^m$$
	for positive semidefinite operators $b^i_k$, finishing the proof. 
\end{proof}

To summarize, we have seen that the entanglement of pure, resp.~mixed, quantum states is related to the projective tensor product of $\ell_2$, resp.~$S_1$, Banach spaces. This connection will be discussed at length in the next section.

\section{Entanglement testers}
\label{sec:testers}

We introduce in this section the main tool developed in this work, \emph{entanglement testers}. Mathematically, these are linear applications from the space of matrices (physically, mixed quantum states) to the space of vectors (physically, pure quantum states). We shall study these maps and their properties from the point of view of Banach spaces, so we shall endow the space of matrices with the Schatten $1$-norm $S_1$ and the set of vectors with the Euclidean norm $\ell_2$. In the context of quantum information theory, these are the natural norms for the vector spaces, when studying mixed and pure quantum states respectively.  

To a $n$-tuple of matrices $(E_1, \ldots, E_n) \in \left(\mathcal M_d(\mathbb C)\right)^n$, we associate the linear map
$$ \mathcal E: X\in \mathcal M_d(\mathbb C) \mapsto \sum_{k=1}^n \Tr(E_k^*X) \ket k \in \mathbb C^n, $$
where $\{\ket k\}_{k=1}^n$ is some fixed orthonormal basis of $\mathbb C^n$ (see Figure \ref{fig:E-tester}). In a similar manner, to a $n$-tuple of matrices $(F_1, \ldots, F_n) \in \left(\mathcal M_d(\mathbb C)\right)^n$ we associate the $\mathbb R$-linear map $\mathcal F: \mathcal M_d^{sa}(\mathbb C) \to \mathbb C^n$ defined in the obvious manner. Note that if the matrices $F_i$ are themselves self-adjoint, the map $\mathcal F$ is valued in $\mathbb R^n$. We introduce now the main definition of this paper. 

\begin{figure}[H]
    \centering
    \includegraphics{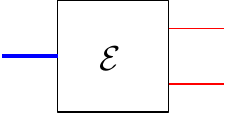}\qquad\qquad\qquad\includegraphics{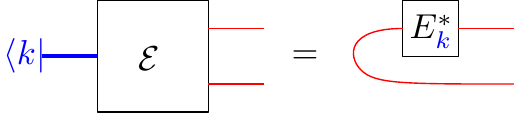}\qquad\qquad\qquad\includegraphics{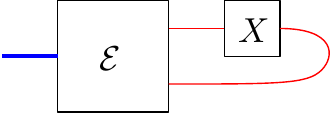}
    \caption{Left panel: an entanglement tester $\mathcal E$. The two wires (red, thin, dimension $d$) on the right are the input, while the wire on the left (blue, thick, dimension $n$) is the output. Center panel: the operator $\mathcal E$ in terms of the matrices $E_k$. Right panel: the vector $\mathcal E(X) \in \mathbb C^n$, for an input matrix $X \in \mathcal M_d(\mathbb C)$.}
    \label{fig:E-tester}
\end{figure}

\begin{definition} \label{def:tester}
	A $\mathbb C$-linear map $\mathcal E$ as above is called a $\mathbb C$-tester if $\|\mathcal E\|_{S_1^{d} \to \ell_2^{n}} = 1$. Similarly, an $\mathbb R$-linear map $\mathcal F$ is called an $\mathbb R$-tester if $\|\mathcal F\|_{S_1^{d,\sa} \to \ell_{2}^{n}} = 1$. 
\end{definition}

In the definition above, we distinguish between the real (self-adjoint) and the complex (general) cases. The following lemma shows that one can extend an $\mathbb R$-tester to a $\mathbb C$-tester. 

\begin{lemma}
Given an $\mathbb R$-linear map $\mathcal F:S_{1,\sa}^d \to \ell_{2}^n$, one can define its complexification
$$ \mathcal F' : X+\mathrm{i}Y \in S_1^d \mapsto \mathcal F(X) + \mathrm{i} \mathcal F(Y) \in \ell_2^n. $$

	We have $\|\mathcal F' \|_{S_1^{d} \to \ell_2^{n}} = \|\mathcal F \|_{S_1^{d,\sa} \to \ell_{2 }^{n}}$.
	In particular, if $\mathcal F$ is an $\mathbb R$-tester, then $\mathcal F'$ is a $\mathbb C$-tester.
\end{lemma}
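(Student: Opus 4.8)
The plan is to show the two operator norms coincide by proving both inequalities. The easy direction is $\|\mathcal F'\|_{S_1^d \to \ell_2^n} \geq \|\mathcal F\|_{S_{1,\sa}^d \to \ell_2^n}$, which is immediate: the supremum defining $\|\mathcal F'\|$ ranges over all $Z \in S_1^d$ with $\|Z\|_1 \leq 1$, and this set contains all self-adjoint $X$ with $\|X\|_1 \leq 1$, on which $\mathcal F'$ agrees with $\mathcal F$. So restricting the supremum to self-adjoint inputs can only decrease it, giving the inequality.

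For the reverse inequality $\|\mathcal F'\|_{S_1^d \to \ell_2^n} \leq \|\mathcal F\|_{S_{1,\sa}^d \to \ell_2^n}$, I would take an arbitrary $Z = X + \mathrm{i}Y$ with $X, Y$ self-adjoint and estimate $\|\mathcal F'(Z)\|_2 = \|\mathcal F(X) + \mathrm{i}\mathcal F(Y)\|_2$. The goal is to bound this by $\|\mathcal F\| \cdot \|Z\|_1$. The natural move is to exploit a phase/rotation trick: for any real angle $\theta$, the operator $\mathrm{Re}(\mathrm{e}^{-\mathrm{i}\theta} Z)$ is self-adjoint, so $\|\mathcal F(\mathrm{Re}(\mathrm{e}^{-\mathrm{i}\theta} Z))\|_2 \leq \|\mathcal F\| \cdot \|\mathrm{Re}(\mathrm{e}^{-\mathrm{i}\theta} Z)\|_1 \leq \|\mathcal F\| \cdot \|Z\|_1$, where the last step uses that the self-adjoint part $\mathrm{Re}(W) = (W+W^*)/2$ satisfies $\|\mathrm{Re}(W)\|_1 \leq \|W\|_1$ (a contraction under $W \mapsto (W+W^*)/2$ in Schatten norm). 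By $\mathbb R$-linearity, $\mathcal F(\mathrm{Re}(\mathrm{e}^{-\mathrm{i}\theta} Z)) = \mathrm{Re}(\mathrm{e}^{-\mathrm{i}\theta}\mathcal F'(Z))$ in the sense of taking the $\theta$-rotated real part componentwise of the vector $\mathcal F'(Z) \in \mathbb C^n$.

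The key step is then to recover the full Euclidean norm of the complex vector $v := \mathcal F'(Z) \in \mathbb C^n$ from its rotated real parts. Writing $v = (v_1, \ldots, v_n)$, one checks $\|\mathrm{Re}(\mathrm{e}^{-\mathrm{i}\theta} v)\|_2^2 = \sum_k \big(\mathrm{Re}(\mathrm{e}^{-\mathrm{i}\theta} v_k)\big)^2$, and I would choose $\theta$ (or average over $\theta$) so that this quantity reaches $\|v\|_2^2$. Averaging is cleanest: $\frac{1}{2\pi}\int_0^{2\pi} \|\mathrm{Re}(\mathrm{e}^{-\mathrm{i}\theta} v)\|_2^2 \, \mathrm{d}\theta = \tfrac12 \|v\|_2^2$, so $\sup_\theta \|\mathrm{Re}(\mathrm{e}^{-\mathrm{i}\theta} v)\|_2^2 \geq \tfrac12\|v\|_2^2$, which only yields a factor-$\sqrt2$ loss and is not tight enough. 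The right approach is to pick the optimal $\theta$ pointwise: since the map $\theta \mapsto \mathrm{Re}(\mathrm{e}^{-\mathrm{i}\theta} v)$ traces out a suitable curve, one shows $\max_\theta \|\mathrm{Re}(\mathrm{e}^{-\mathrm{i}\theta} v)\|_2 = \|v\|_2$ by diagonalizing the real quadratic form $\theta \mapsto \|\mathrm{Re}(\mathrm{e}^{-\mathrm{i}\theta}v)\|_2^2 = \tfrac12\|v\|_2^2 + \tfrac12\,\mathrm{Re}(\mathrm{e}^{-2\mathrm{i}\theta}\langle \bar v, v\rangle)$ and noting its maximum over $\theta$ equals $\tfrac12\|v\|_2^2 + \tfrac12|\sum_k v_k^2| \leq \|v\|_2^2$, with a matching lower bound. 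This is where one must be slightly careful, so I expect this extraction of $\|v\|_2$ from the self-adjoint restrictions to be the main obstacle; everything else is routine. Combining, $\|\mathcal F'(Z)\|_2 = \|v\|_2 \leq \|\mathcal F\|\cdot\|Z\|_1$, finishing the reverse inequality and hence the equality. The final claim, that $\mathcal F'$ is a $\mathbb C$-tester when $\mathcal F$ is an $\mathbb R$-tester, follows instantly since the norms are equal to $1$ simultaneously.
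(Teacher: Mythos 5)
Your easy direction is fine, and you correctly isolate the crux: recovering $\|v\|_2$, for $v=\mathcal F'(Z)$, from the rotated real parts $\mathrm{Re}(\mathrm{e}^{-\mathrm{i}\theta}v)$. But that step is false, and your own displayed formula disproves it. You compute
\[
\max_{\theta}\left\|\mathrm{Re}\left(\mathrm{e}^{-\mathrm{i}\theta}v\right)\right\|_2^2=\frac12\|v\|_2^2+\frac12\left|\sum_{k=1}^n v_k^2\right| ,
\]
and this equals $\|v\|_2^2$ only when $v$ is a unimodular multiple of a real vector; there is no ``matching lower bound''. Concretely, for $v=(1,\mathrm{i})$ one has $\sum_k v_k^2=0$, so $\mathrm{Re}(\mathrm{e}^{-\mathrm{i}\theta}v)=(\cos\theta,\sin\theta)$ has norm $1$ for \emph{every} $\theta$, while $\|v\|_2=\sqrt2$. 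Thus the rotation trick, whether by averaging or by optimizing $\theta$, can never yield more than $\|\mathcal F'\|\leq\sqrt2\,\|\mathcal F\|$. A second, independent issue is that the identity $\mathcal F(\mathrm{Re}(\mathrm{e}^{-\mathrm{i}\theta}Z))=\mathrm{Re}(\mathrm{e}^{-\mathrm{i}\theta}\mathcal F'(Z))$ requires $\mathcal F$ to take values in $\mathbb R^n$ (i.e.\ the matrices $F_k$ defining $\mathcal F$ to be self-adjoint), which is not among the hypotheses: in general $\mathcal F(\cos\theta\,X+\sin\theta\,Y)=\cos\theta\,\mathcal F(X)+\sin\theta\,\mathcal F(Y)$ is not the componentwise rotated real part of $\mathcal F'(Z)$ when $\mathcal F(X),\mathcal F(Y)$ are complex.

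You should also know that the wall you hit is not a weakness of your particular method: the factor $\sqrt2$ is attained, so no argument can close this gap under the stated hypotheses. Take $d=n=2$, $F_1=\sigma_x$, $F_2=\sigma_y$, so that $\mathcal F(X)=(\Tr(\sigma_xX),\Tr(\sigma_yX))$. On the extreme points $\pm\kb{x}{x}$ of the unit ball of $S_{1,\sa}^2$ one gets $\|\mathcal F(\kb{x}{x})\|_2^2=4|x_1|^2|x_2|^2\leq 1$, so $\mathcal F$ is an $\mathbb R$-tester (even with real outputs); yet $\mathcal F'(\kb{2}{1})=(\Tr(\sigma_x\kb{2}{1}),\Tr(\sigma_y\kb{2}{1}))=(1,-\mathrm{i})$ has Euclidean norm $\sqrt2$ --- exactly your degenerate case $\sum_k v_k^2=0$. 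For comparison, the paper's own proof takes a different route (evaluate at an extreme point $\kb{a}{b}$ of the $S_1^d$ ball, split into the self-adjoint parts $X=(\kb{a}{b}+\kb{b}{a})/2$ and $Y=(\kb{a}{b}-\kb{b}{a})/(2\mathrm{i})$, then a Pythagoras step), but it relies on the same two delicate points: the Pythagoras identity needs $\mathcal F(X),\mathcal F(Y)$ real, and its final estimate needs $\|X\|_1^2+\|Y\|_1^2\leq 1$, which fails when $a\perp b$ (both trace norms are then $1$). So the correct takeaway from your attempt is not that a cleverer choice of $\theta$ is needed, but that equality of the two norms requires additional assumptions on $\mathcal F$ beyond $\mathbb R$-linearity and contractivity on the self-adjoint part.
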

\begin{proof}
	The inequality $\|\mathcal F'\| \geq \|\mathcal F\|$ is clear. For the converse, consider an extreme point $\kb{a}{b}$ of $S_1^d$ for which 
	$\|\mathcal F'\| = \|\mathcal F'(\kb{a}{b})\|$. We have 
	\begin{align*}
	\|\mathcal F'\|^2 &= \|\mathcal F'(\kb{a}{b})\|^2 \\
	& = \left\|\mathcal F'\left(\frac{\kb{a}{b}+\kb{b}{a}}{2}\right)\right\|^2 + \left\|\mathcal F'\left(\frac{\kb{a}{b}-\kb{b}{a}}{2\mathrm{i}}\right)\right\|^2\\
	&\leq \frac{\|\mathcal F\|^2}{4} \left( 2\|\kb{a}{b}\|_1^2+2\|\kb{b}{a}\|_1^2 \right)\\
	&= \|\mathcal F\|^2.
	\end{align*}
\end{proof}

In this paper, we shall focus mainly on the theory of $\mathbb C$-testers, to which we refer simply as (entanglement) \emph{testers}. In some sections (e.g.~Section \ref{sec:perfect-testers}) we shall want to differentiate between the self-adjoint case and the general one. To do this, we shall explicitly use the more precise notions of $\mathbb R$-testers and $\mathbb C$-testers.

We now look at tensor products of testers. Given $m$ sets of operators $E_i=\{E_{i;k}\}_{k=1}^{n_i}$, $1\leq i\leq m$, consider the respective maps
$$\mathcal E_i : X\in\mathcal M_{d_i}(\mathbb C) \mapsto \sum_{k=1}^{n_i} \mathrm{Tr}\left( E_{i;k}^* X \right) |k\rangle \in \mathbb C^{n_i}.$$
The tensor product of these $m$ maps acts on multipartite matrices $X \in \mathcal M_{d_1}(\mathbb C) \otimes \cdots \otimes \mathcal M_{d_m}(\mathbb C)$ as 
$$\mathcal E_1 \otimes \cdots \otimes \mathcal E_m(X)=\sum_{k_1=1}^{n_1}\ldots \sum_{k_m=1}^{n_m} \mathrm{Tr}\left( E_{1;k_1}^*\otimes \cdots \otimes E_{m;k_m}^* X \right) |k_1\cdots k_m\rangle.$$

Note that, from a physical perspective, the application $\mathcal E_1 \otimes \cdots \otimes \mathcal E_m$ maps mixed quantum states to pure quantum states, of possibly different dimensions (see Figure \ref{fig:testers}). This brings us to the the main theoretical insight of this section, the following corollary of Proposition \ref{prop:bound-norm-tensor-product-operators}.

\begin{figure}[H]
    \centering
    \includegraphics{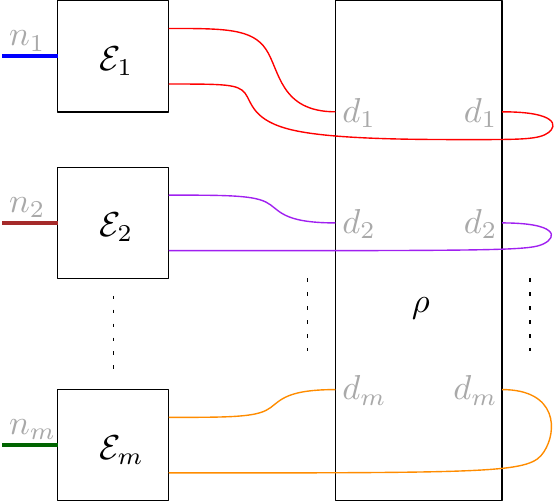}
    \caption{Applying a tensor product of entanglement testers $\mathcal E_1 \otimes \mathcal E_2 \otimes \cdots \otimes \mathcal E_m$ to a multipartite mixed quantum state $\rho$ results in a multipartite pure state.}
    \label{fig:testers}
\end{figure}

\begin{corollary} \label{cor:criterion}
Let $E_i=\{E_{i;k}\}_{k=1}^{n_i}$, $1\leq i\leq m$, be $m$ sets of operators as above, and let $\mathcal E_1, \ldots, \mathcal E_m$ be the corresponding linear maps. Then, for any $X \in \mathcal M_{d_1}(\mathbb C)\otimes\cdots\otimes \mathcal M_{d_m}(\mathbb C)$, we have
\[ \left\| \mathcal E_1 \otimes \cdots \otimes \mathcal E_m(X) \right\|_{\ell_2^{n_1} \otimes_\pi\cdots \otimes_\pi \ell_2^{n_m}} \leq \|\mathcal E_1\|_{S_1^{d_1} \to \ell_2^{n_1}} \cdots \|\mathcal E_m\|_{S_1^{d_m} \to \ell_2^{n_m}} \|X\|_{S_1^{d_1} \otimes_\pi \cdots \otimes_\pi S_1^{d_m}} \, . \]

In particular, if the $\mathcal E_i$'s are testers (real or complex), then for any multipartite quantum state $\rho$, the following implication holds:
\[ \rho \text{ separable} \ \Longrightarrow\ \left\| \mathcal E_1 \otimes \cdots \otimes \mathcal E_m(\rho) \right\|_{\ell_2^{n_1} \otimes_\pi\cdots \otimes_\pi \ell_2^{n_m}} \leq 1.\]

Reciprocally, we have the following \emph{entanglement criterion}: if the $\mathcal E_i$'s are testers, then
$$\left\| \mathcal E_1 \otimes \cdots \otimes \mathcal E_m(\rho) \right\|_{\ell_2^{n_1} \otimes_\pi\cdots \otimes_\pi \ell_2^{n_m}} >1 \implies \rho \text{ is entangled}.$$
\end{corollary}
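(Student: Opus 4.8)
The statement is a direct consequence of Proposition \ref{prop:bound-norm-tensor-product-operators} combined with the norm characterization of separability (the theorem of Rudolph / Pérez-García et al.). The plan is to deduce the norm inequality first, then read off the two implications as immediate corollaries.

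\emph{Step 1: the norm inequality.} Each $\mathcal E_i$ is a linear operator $S_1^{d_i}\to\ell_2^{n_i}$ between Banach spaces, so Proposition \ref{prop:bound-norm-tensor-product-operators} applies directly with $A_i=S_1^{d_i}$ and $B_i=\ell_2^{n_i}$. It gives
\[ \left\|\bigotimes_{i=1}^m \mathcal E_i\right\|_{S_1^{d_1}\otimes_\pi\cdots\otimes_\pi S_1^{d_m}\to\ell_2^{n_1}\otimes_\pi\cdots\otimes_\pi\ell_2^{n_m}} = \prod_{i=1}^m \|\mathcal E_i\|_{S_1^{d_i}\to\ell_2^{n_i}}, \]
where I use the projective tensor norm on the target, which is a legitimate choice of tensor norm. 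Applying the definition of operator norm to a fixed input $X$ then yields exactly the claimed bound. I should note that for $\mathbb R$-testers one works over the self-adjoint spaces $S_{1,\sa}^{d_i}$, and the complexification lemma lets one pass freely between the real and complex pictures; but since quantum states $\rho$ are self-adjoint, either reading gives the same conclusion on states.

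\emph{Step 2: the separability implication.} If each $\mathcal E_i$ is a tester, then by definition $\|\mathcal E_i\|_{S_1^{d_i}\to\ell_2^{n_i}}=1$, so the product on the right collapses to $1$ and the bound becomes
\[ \left\|\mathcal E_1\otimes\cdots\otimes\mathcal E_m(X)\right\|_{\ell_2^{n_1}\otimes_\pi\cdots\otimes_\pi\ell_2^{n_m}} \leq \|X\|_{S_1^{d_1}\otimes_\pi\cdots\otimes_\pi S_1^{d_m}}. \]
Now if $\rho$ is separable, the cited theorem (assertion $(3)$) says precisely that $\|\rho\|_{S_1^{d_1}\otimes_\pi\cdots\otimes_\pi S_1^{d_m}}=1$. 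Substituting $X=\rho$ gives $\left\|\mathcal E_1\otimes\cdots\otimes\mathcal E_m(\rho)\right\|_{\ell_2^{n_1}\otimes_\pi\cdots\otimes_\pi\ell_2^{n_m}}\leq 1$, which is the first implication.

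\emph{Step 3: the entanglement criterion.} The final statement is just the contrapositive of Step 2: if the output projective norm exceeds $1$, then $\rho$ cannot be separable, hence is entangled. There is no genuine obstacle here — the real content is entirely packaged into Proposition \ref{prop:bound-norm-tensor-product-operators} (the multiplicativity of operator norms under projective tensor products) and the separability-norm theorem. The only point requiring a little care is the bookkeeping over the real versus complex spaces when the $\mathcal E_i$ are $\mathbb R$-testers, which the complexification lemma handles.
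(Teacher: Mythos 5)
Your proposal is correct and follows exactly the route the paper intends: the norm inequality is read off from Proposition \ref{prop:bound-norm-tensor-product-operators} applied with $A_i = S_1^{d_i}$, $B_i = \ell_2^{n_i}$ (projective norm on the target), and the two implications then follow from the separability--norm theorem of Rudolph/P\'erez-Garc\'ia et al.\ together with its contrapositive, which is precisely why the paper states this as an immediate corollary without further proof. Your remark on handling $\mathbb{R}$-testers via the self-adjoint spaces $S_{1,\sa}^{d_i}$ (using assertion (2) of that theorem) is the right bookkeeping and matches the paper's framework.
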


In the rest of the paper, we shall study the \emph{power} of the entanglement criterion formulated above. We shall investigate which entangled states can be detected by a given family of testers, and which testers are best at detecting entanglement. Moreover, we shall see in the following sections that many know entanglement criteria fall into this framework. 

Let us now mention that, in the same way that the map $\mathcal E_1 \otimes \cdots \otimes \mathcal E_m$ gives an entanglement criterion, its inverse (assuming it exists) gives a \emph{separability criterion}. Indeed, using the same notation as above, and assuming that each map $\mathcal E_i$ is invertible, we have
$$ \|\rho\|_{S_1^{d_1} \otimes_\pi \cdots \otimes_\pi S_1^{d_m}} \leq \|\mathcal E_1^{-1}\|_{\ell_2^{n_1} \to S_1^{d_1}} \cdots \|\mathcal E_m^{-1}\|_{\ell_2^{n_m} \to S_1^{d_m}} \left\| [\mathcal E_1 \otimes \cdots \otimes E_m](\rho) \right\|_{\ell_2^{n_1} \otimes_\pi \cdots \otimes_\pi \ell_2^{n_m}}.$$
Hence, for any multipartite mixed quantum state $\rho$, the following implication holds 
\begin{equation}\label{eq:separability-criterion}
\left\| \mathcal E_1 \otimes \cdots \otimes \mathcal E_m(\rho) \right\|_{\ell_2^{n_1} \otimes_\pi\cdots \otimes_\pi \ell_2^{n_m}} \leq \frac{1}{\|\mathcal E_1^{-1}\|_{\ell_2^{n_1} \to S_1^{d_1}} \cdots \|\mathcal E_m^{-1}\|_{\ell_2^{n_m}\to S_1^{d_m}}} \ \Longrightarrow\  \rho \text{ is separable}.
\end{equation}
We postpone the discussion of these separability criteria to Section \ref{sec:important-examples}, where we shall see that they can only certify trivial separable states, so they are not useful in practice.

\subsection{Entanglement testers and their associated test operator}  \label{sec:test-operator} \hfill\smallskip

Given a set of operators $E=\{E_k\}_{k=1}^n$ on $\mathbb C^d$, let $\mathcal E:\mathcal{M}_d(\mathbb{C}) \to \mathbb{C}^n$ be the corresponding linear map, i.e.~
$$ \mathcal{E}:X\in \mathcal{M}_d(\mathbb{C}) \mapsto \sum_{k=1}^n \mathrm{Tr}(E_k^* X) \ket{k} \in \mathbb{C}^n . $$ 
We impose that $\mathcal E$ is a tester, as defined in Definition \ref{def:tester}. We recall that this means that $\|\mathcal{E}\|_{S_1^d\rightarrow\ell_2^n} = 1$, i.e.
\[ \max_{\|X\|_1\leq 1} \|\mathcal{E}(X)\|_2 =1 . \]
Now, given $X\in \mathcal{M}_d(\mathbb{C})$, we have
\[ \|\mathcal{E}(X)\|_2 = \left( \sum_{k=1}^n |\mathrm{Tr}(E_k^* X) |^2 \right)^{1/2} = \left( \sum_{k=1}^n \mathrm{Tr}( E_k^* \otimes E_k X\otimes X^* ) \right)^{1/2} . \]
Hence, setting 
\begin{equation} \label{eq:T_E} T_E:=\sum_{k=1}^n E_k\otimes E_k^* , \end{equation}
we want that
\begin{equation} \label{eq:norm-T_E} \max_{\|X\|_1\leq 1} \mathrm{Tr}( T_E^* X\otimes X^* ) = \max_{\|X\|_1\leq 1} \langle T_E,  X\otimes X^* \rangle =1 . \end{equation}
We call the operator $T_E$ on $\mathbb C^d\otimes\mathbb C^d$, defined in equation \eqref{eq:T_E}, the \emph{test operator} associated to the tester $\mathcal E$.

We would now like to characterize the set of test operators on $\mathbb C^d\otimes\mathbb C^d$. With this aim in view, given a tester $\mathcal E:S_1^d\rightarrow\ell_2^n$, let us define the completely positive map $\mathcal{T}_E:\mathcal{M}_d(\mathbb C) \to \mathcal{M}_d(\mathbb C)$ having the $E_k$'s as Kraus operators, i.e.
\[  \mathcal{T}_E:X\in\mathcal{M}_d(\mathbb C) \mapsto \sum_{k=1}^n E_kXE_k^* \in\mathcal{M}_d(\mathbb C) .\]
Then, denoting by $\Theta_E$ the Choi operator associated to $\mathcal{T}_E$, i.e.
\[ \Theta_E := \sum_{i,j=1}^d \mathcal{T}_E(\kb{i}{j}) \otimes \kb{i}{j}) ,\]
it is easy to check that we actually have
\begin{equation}\label{eq:Theta_E}
\Theta_E = \sum_{k=1}^n \kb{e_k}{e_k} , 
\end{equation}
where, for each $1\leq k\leq n$, $e_k\in\mathbb C^d\otimes\mathbb C^d$ is the vector version of $E_k\in\mathcal M_d(\mathbb C)$, i.e.~$\ket{e_k}=\sum_{i,j=1}^d \bra{i}E_k\ket{j}\ket{ij}$. Another way of writing this is, in terms of the operator $T_E$ defined in \eqref{eq:T_E}, is
\begin{equation} \label{eq:T-Theta} T_E=\Theta_E^{\Gamma}F , \end{equation}
where $\Gamma$ stands for the partial transposition and $F$ for the flip operator
\begin{align}
    \nonumber F : \mathbb C^d \otimes \mathbb C^d &\to \mathbb C^d \otimes \mathbb C^d\\
    \label{eq:def-flip} x \otimes y &\mapsto y \otimes x.
\end{align}

Yet another way of relating the operator $\Theta_E$ to the linear map $\mathcal E$ is via the relation 
$$\Theta_E = \mathcal E^*\mathcal E,$$
since we can re-write the application $\mathcal E$ as 
$$\mathcal E = \sum_{i=1}^k \kb{k}{e_k},$$
once we identify (as vector spaces) $\mathcal M_d(\mathbb C)$ with $\mathbb C^{d^2}$. Graphical representations of the operators $\Theta_E$ and $T_E$ are provided in Figure \ref{fig:Theta-T}.

\begin{figure}[H]
    \centering
    \includegraphics[align=c]{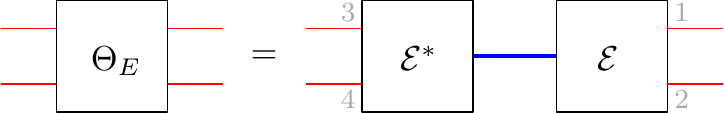}\qquad\qquad\includegraphics[align=c]{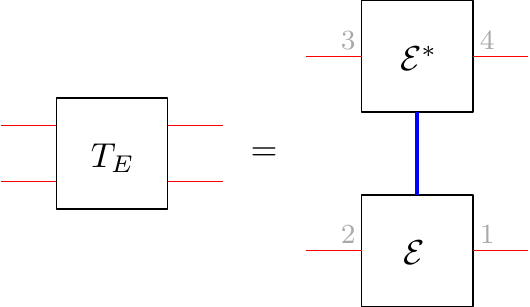}
    \caption{Graphical representations of the operators $\Theta_E, T_E \in \mathcal M_{d^2}(\mathbb C)$. Compare with equations \eqref{eq:Theta_E} and \eqref{eq:T_E}.}
    \label{fig:Theta-T}
\end{figure}

\begin{lemma} \label{lem:T-Theta}
The set of test operators on $\mathbb C^d\otimes\mathbb C^d$ is
\[ \left\{ \Theta^{\Gamma}F : \Theta\geq 0,\ \|\Theta\|_{S_{\infty,sa}^d\otimes_{\epsilon}S_{\infty,sa}^d} =1 \right\} . \]
\end{lemma}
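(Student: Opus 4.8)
The plan is to reduce the whole statement to a single equivalence about the positive operator $\Theta_E = \mathcal E^*\mathcal E$: namely, that $\mathcal E$ is a tester if and only if $\Theta_E \geq 0$ and $\|\Theta_E\|_{S_{\infty,sa}^d \otimes_\epsilon S_{\infty,sa}^d} = 1$. Granting this, the lemma follows by transport along the relation $T_E = \Theta_E^\Gamma F$ from \eqref{eq:T-Theta}. Indeed, the map $\Theta \mapsto \Theta^\Gamma F$ is a bijection of $\mathcal M_{d^2}(\mathbb C)$, since partial transposition is an involution and $F$ is invertible with $F^2 = \id$. So to get the claimed set equality I would argue in two directions: every test operator equals $\Theta_E^\Gamma F$ with $\Theta_E \geq 0$ and $\|\Theta_E\|_\epsilon = 1$; and conversely, given any $\Theta \geq 0$ with $\|\Theta\|_\epsilon = 1$, a rank-one decomposition $\Theta = \sum_k \kb{e_k}{e_k}$ devectorizes into matrices $E_k$ whose associated map $\mathcal E$ satisfies $\Theta_E = \Theta$, hence (by the equivalence) is a tester with $T_E = \Theta^\Gamma F$. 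This also records that $\mathcal E \mapsto \Theta_E$ surjects onto the positive operators.

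The heart of the matter is therefore the identity
\[ \max_{\|X\|_1 \leq 1} \ip{\mathrm{vec}(X)}{\Theta\,\mathrm{vec}(X)} \;=\; \|\Theta\|_{S_{\infty,sa}^d \otimes_\epsilon S_{\infty,sa}^d} \qquad (\Theta \geq 0), \]
whose left-hand side equals $\max_{\|X\|_1\leq 1}\langle T, X\otimes X^*\rangle$ with $T = \Theta^\Gamma F$, i.e.\ precisely the quantity controlling the tester condition \eqref{eq:norm-T_E} (this is exactly the computation $\|\mathcal E(X)\|_2^2 = \ip{\mathrm{vec}(X)}{\Theta_E\,\mathrm{vec}(X)}$ already recorded via $\Theta_E = \mathcal E^*\mathcal E$). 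For the left-hand side, I would observe that $X \mapsto \ip{\mathrm{vec}(X)}{\Theta\,\mathrm{vec}(X)}$ is a convex quadratic form (because $\Theta \geq 0$), so its maximum over the compact convex body $\{\|X\|_1 \leq 1\}$ is attained at an extreme point, i.e.\ at a rank-one $X = \kb ab$ with $\|a\|=\|b\|=1$. Since $\mathrm{vec}(\kb ab) = \ket a \otimes \ket{\bar b}$ and conjugation is a bijection of the unit sphere, the left-hand side equals $\max_{\|u\|=\|v\|=1}\ip{u\otimes v}{\Theta(u\otimes v)}$.

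For the right-hand side I would unfold the definition \eqref{eq:def-injective-norm} of the injective norm using $(S_{\infty,sa}^d)^* = S_{1,sa}^d$ together with the fact that the extreme points of the unit ball of $S_{1,sa}^d$ are the signed rank-one projections $\pm\kb uu$; the pairing of $\kb uu \otimes \kb vv$ with $\Theta$ is $\ip{u\otimes v}{\Theta(u\otimes v)}$, and because $\Theta \geq 0$ the signs are irrelevant, so the injective norm is again $\max_{\|u\|=\|v\|=1}\ip{u\otimes v}{\Theta(u\otimes v)}$. Matching the two computations gives the identity, hence the equivalence, hence the lemma. I expect the main obstacle to be precisely this bookkeeping: confirming that the vectorization convention sends the complex extreme points $\kb ab$ of the $S_1$-ball to product vectors $\ket a \otimes \ket{\bar b}$, and that this reconciles with the self-adjoint extreme points $\pm\kb uu$ computing the injective norm — the key structural fact being that positivity of $\Theta$ is exactly what makes the two sides agree. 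The bijectivity of $\Theta\mapsto\Theta^\Gamma F$ and the surjectivity of $\mathcal E\mapsto\Theta_E$ are then routine.
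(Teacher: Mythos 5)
Your proposal is correct and follows essentially the same route as the paper's proof: both arguments reduce the tester normalization $\max_{\|X\|_1\leq 1}\langle T, X\otimes X^*\rangle = 1$ and the injective norm $\|\Theta\|_{S_{\infty,sa}^d\otimes_{\epsilon}S_{\infty,sa}^d}$ to the common quantity $\max_{\|u\|_2=\|v\|_2=1}\bra{u\otimes v}\Theta\ket{u\otimes v}$, using extremality of rank-one operators in the unit ball of $S_1^d$ (resp.\ $\pm\kb{u}{u}$ in that of $S_{1,sa}^d$), the vectorization identity $\mathrm{vec}(\kb{a}{b})=\ket{a}\otimes\ket{\bar b}$, and positivity of $\Theta$ to discard the signs. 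Your explicit rank-one decomposition of $\Theta$, replacing the paper's appeal to the Choi--Jamio\l{}kowski isomorphism, is the same construction the paper itself spells out immediately after the lemma, so this is a difference of packaging rather than of substance.
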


\begin{proof}
We know from equation \eqref{eq:T-Theta} that $T$ is a non-normalized test operator on $\mathbb C^d\otimes\mathbb C^d$ if and only if $T=\Theta^{\Gamma}F$, where $\Theta$ is the Choi operator associated to a completely positive map on $\mathcal{M}_d(\mathbb C)$. By the Choi-Jamio\l{}kowski isomorphism, this is the same as saying that $\Theta$ is a positive semidefinite operator on $\mathbb C^d\otimes\mathbb C^d$. 

Let us now turn to the normalization condition given by \eqref{eq:norm-T_E}, i.e.
\[ \max\left\{ \mathrm{Tr}(T^* X\otimes X^*) : X\in S_{1}^d \right\} = 1 ,\]
By extremality in $S_1^d$ of rank one operators of the form $\kb{x}{y}$, where $x,y$ are unit vectors in $\mathbb C^d$, the latter is equivalent to
\[ \max\left\{ \bra{x\otimes y}T\ket{y\otimes x} : x,y\in\mathbb C^d,\ \|x\|_2=\|y\|_2= 1 \right\} = 1 . \]
In terms of $\Theta$, this reads
\[ \max\left\{ \bra{x\otimes \bar{y}}\Theta\ket{x\otimes \bar{y}} : x,y\in\mathbb C^d,\ \|x\|_2=\|y\|_2=1 \right\} = 1 . \]
By extremality in $S_{1,sa}^d$ of rank one operators of the form $\pm\kb{x}{x}$, where $x$ is a unit vector in $\mathbb C^d$, and because $\Theta$ is additionally positive semidefinite, the condition above is simply
\[ \max\left\{ \mathrm{Tr}(\Theta^* X\otimes Y) : X,Y\in S_{1,sa}^d \right\} = 1 ,\]
i.e.~by definition $\|\Theta\|_{S_{\infty,sa}^d\otimes_{\epsilon}S_{\infty,sa}^d} =1$.
\end{proof}

Note that Lemma \ref{lem:T-Theta} also provides a canonical way of constructing operators $\{E_k\}_{k=1}^{n}$ on $\mathbb C^d$ corresponding to a given test operator $T$ on $\mathbb C^d\otimes\mathbb C^d$. The strategy is to look at $\Theta=(TF)^{\Gamma}$, and diagonalize it as
\[ \Theta= \sum_{k=1}^n \lambda_k \kb{x_k}{x_k} , \]
with $1\leq n\leq d^2$, $\lambda_1,\ldots,\lambda_n>0$, $\{x_1,\ldots,x_n\}$ orthonormal family in $\mathbb C^d\otimes \mathbb C^d$. Then, we just have to define for each $1\leq k\leq n$, $E_k$ as being the matrix version of $\ket{e_k}:=\sqrt{\lambda_k}\ket{x_k}$. By construction, we have
\[ T=\sum_{k=1}^n E_k\otimes E_k^* . \]
This means that any test operator on $\mathbb C^d\otimes\mathbb C^d$ can be decomposed into a sum of at most $d^2$ terms of the form $E_k\otimes E_k^*$, where the $E_k$'s are orthogonal operators on $\mathbb C^d$.

\subsection{Equivalent testers} \hfill\smallskip

We consider now the notion of equivalent testers, characterizing pairs of testers which detect the same sets of entangled states. The definition below is motivated by the fact that the projective tensor norm on a tensor product of $\ell_2$ spaces is invariant by local unitary operators. So, applying such an operator to the output of a tensor product of testers does not change the outcome of the entanglement test. 

\begin{definition}\label{def:equivalent-testers}
    Two testers $\mathcal E, \mathcal F:S_1^d \to \ell_2^n$ are called \emph{equivalent} if there exists a unitary operator $U \in \mathcal U(n)$ such that, for all $X \in \mathcal M_d(\mathbb C)$, we have 
    $$\mathcal F(X) = U \mathcal E(X).$$
\end{definition}

A simple calculation shows that the operators $(F_j)_{j=1}^n$ defining the tester $\mathcal F$ are related to the operators $(E_k)_{k=1}^n$ defining $\mathcal E$ by the relation 
\begin{equation}\label{eq:Fj-equiv-Ek}
    \forall\ 1\leq j \leq n, \quad F_j = \sum_{k=1}^n \bar U_{jk} E_k.
\end{equation}

\begin{proposition}\label{prop:equal-T-equivalent}
    Two testers $\mathcal E, \mathcal F:S_1^d \to \ell_2^n$ are equivalent if and only if they have the same test operator.
\end{proposition}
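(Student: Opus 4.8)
The plan is to prove both directions by unwinding the definitions, using the explicit relation~\eqref{eq:Fj-equiv-Ek} connecting the defining operators of equivalent testers to the unitary $U$. The key computational fact I will rely on is the formula $T_E = \sum_{k=1}^n E_k \otimes E_k^*$ from~\eqref{eq:T_E}, together with the equivalent relation $\Theta_E = \mathcal E^* \mathcal E$, since the latter makes the dependence on $\mathcal E$ transparent and the unitary invariance immediate.

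For the forward direction, suppose $\mathcal E$ and $\mathcal F$ are equivalent, so $\mathcal F = U \mathcal E$ for some $U \in \mathcal U(n)$. Working through the Choi-operator description, I would compute $\Theta_F = \mathcal F^* \mathcal F = (U\mathcal E)^*(U \mathcal E) = \mathcal E^* U^* U \mathcal E = \mathcal E^* \mathcal E = \Theta_E$, where the cancellation $U^* U = \id$ is exactly the point. Since the test operator is determined by its $\Theta$ via $T = \Theta^\Gamma F$ (see~\eqref{eq:T-Theta}), equality of the $\Theta$'s forces $T_F = \Theta_F^\Gamma F = \Theta_E^\Gamma F = T_E$. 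Alternatively, one can argue directly with~\eqref{eq:Fj-equiv-Ek}: substituting $F_j = \sum_k \bar U_{jk} E_k$ into $T_F = \sum_j F_j \otimes F_j^*$ gives a double sum $\sum_{j,k,\ell} \bar U_{jk} U_{j\ell}\, E_k \otimes E_\ell^*$, and performing the sum over $j$ first produces $\sum_{j} \bar U_{jk} U_{j\ell} = (U^* U)_{\ell k} = \delta_{k\ell}$, collapsing the expression to $\sum_k E_k \otimes E_k^* = T_E$. Either route is routine.

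For the converse, assume $T_E = T_F$. By the relation $T = \Theta^\Gamma F$ and the invertibility of the maps $\Gamma$ (partial transpose) and $F$ (flip), this is equivalent to $\Theta_E = \Theta_F$. Now I invoke $\Theta_E = \mathcal E^* \mathcal E$ and $\Theta_F = \mathcal F^* \mathcal F$, so that $\mathcal E^* \mathcal E = \mathcal F^* \mathcal F$ as positive operators on $\mathbb C^{d^2}$. The standard fact that two operators with the same Gram operator $A^* A = B^* B$ differ by a unitary on the range — i.e.\ there exists a partial isometry, extendable to a unitary $U \in \mathcal U(n)$, with $\mathcal F = U \mathcal E$ — then yields equivalence. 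Concretely, if $\mathcal E$ and $\mathcal F$ have polar decompositions with the same positive part $(\mathcal E^*\mathcal E)^{1/2}$, the unitary (or co-isometry) parts are related by such a $U$.

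The main obstacle is the converse: extracting a genuine unitary $U \in \mathcal U(n)$ from the equality $\mathcal E^* \mathcal E = \mathcal F^* \mathcal F$ requires some care because $\mathcal E, \mathcal F : \mathbb C^{d^2} \to \mathbb C^n$ need not be invertible, so one really gets a partial isometry between ranges that must be extended to a full unitary on $\mathbb C^n$ (this is possible precisely because both maps land in $\mathbb C^n$ and their kernels have matching dimensions, forced by the rank equality $\operatorname{rank}\mathcal E = \operatorname{rank}(\mathcal E^*\mathcal E) = \operatorname{rank}(\mathcal F^*\mathcal F) = \operatorname{rank}\mathcal F$). I would phrase this via the polar decomposition: write $\mathcal E = W_{\mathcal E} |\mathcal E|$ and $\mathcal F = W_{\mathcal F} |\mathcal E|$ (same $|\mathcal E| = |\mathcal F|$), then set $U := W_{\mathcal F} W_{\mathcal E}^*$ on the common range and extend arbitrarily by a unitary on the orthogonal complement; one checks $U \mathcal E = \mathcal F$ and that the extension does not affect this identity since everything vanishes off the range of $|\mathcal E|$.
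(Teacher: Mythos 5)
Your proof is correct. Both directions are sound: the forward direction (either via $\Theta_F = (U\mathcal E)^*(U\mathcal E) = \mathcal E^*\mathcal E = \Theta_E$ followed by $T = \Theta^\Gamma F$, or via the direct computation collapsing $\sum_j \bar U_{jk}U_{jl} = \delta_{kl}$) matches the paper's argument exactly. Where you diverge is the converse: the paper passes from $\Theta_E = \Theta_F$ to the equality of the completely positive maps $\mathcal T_E = \mathcal T_F$ and then cites the unitary freedom of Kraus decompositions as a black box (Nielsen--Chuang Theorem 8.2, or Watrous Corollary 2.23), whereas you work directly with the factorization $\Theta_E = \mathcal E^*\mathcal E = \mathcal F^*\mathcal F = \Theta_F$ and extract the unitary by hand, via polar decompositions $\mathcal E = W_{\mathcal E}|\mathcal E|$, $\mathcal F = W_{\mathcal F}|\mathcal E|$ and a unitary extension of the partial isometry $W_{\mathcal F}W_{\mathcal E}^*$. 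In effect you have inlined a self-contained proof of the special case of the Kraus-uniqueness theorem that is actually needed. The paper's route is shorter and delegates to a standard result; yours is more elementary and makes explicit why the hypothesis that both testers map into the \emph{same} output space $\ell_2^n$ matters: the rank equality $\operatorname{rank}\mathcal E = \operatorname{rank}\mathcal F$ (forced by $\mathcal E^*\mathcal E = \mathcal F^*\mathcal F$) guarantees that $(\operatorname{ran}\mathcal E)^\perp$ and $(\operatorname{ran}\mathcal F)^\perp$ inside $\mathbb C^n$ have equal dimension, so the partial isometry between the ranges extends to a genuine element of $\mathcal U(n)$ --- which is precisely the same-cardinality hypothesis hidden in the cited theorem. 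One cosmetic remark: in your extension step, what needs to have matching dimensions are the orthogonal complements of the \emph{ranges} in $\mathbb C^n$ (the cokernels), not the kernels in $\mathbb C^{d^2}$; both follow from the rank equality, so the argument stands as written.
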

\begin{proof}
    One direction is immediate: assuming that the operators $F_j$ are given by \eqref{eq:Fj-equiv-Ek}, we have 
    $$T_F = \sum_{j=1}^n F_j \otimes F_j^* = \sum_{j,k,l=1}^n \bar U_{jk} U_{jl} E_k \otimes E_l^* = \sum_{k=1}^n E_k \otimes E_k^* = T_E.$$
    
    For the other direction, note that $T_E = T_F$ implies $\Theta_E = \Theta_F$, hence the completely positive maps associated to the testers are identical: $\mathcal T_E = \mathcal T_F$. The conclusion follows from the fact that two different Kraus decompositions of a completely positive map are related by a unitary transformation as in \eqref{eq:Fj-equiv-Ek} (see \cite[Theorem 8.2]{nielsen2010quantum} or \cite[Corollary 2.23]{watrous2018theory}).
\end{proof}

\subsection{Practical interest} \label{sec:interest} \hfill\smallskip

Note that, in the bipartite case, an $\ell_2^n\otimes_{\pi}\ell_2^n$ norm can be simply seen as an $S_1^n$ norm. Indeed, let $\mathcal{E},\mathcal{F}:\mathcal{M}_{d}(\mathbb{C})\to \mathbb{C}^{n}$ be defined by 
\[ \mathcal{E}(X) = \sum_{k=1}^{n} \mathrm{Tr}(E_k^* X) \ket{k} \text{ and } \mathcal{F}(X) = \sum_{l=1}^{n} \mathrm{Tr}(F_l^* X) \ket{l} \, . \] 
We then have, for any $X \in \mathcal{M}_d(\mathbb{C}) \otimes \mathcal{M}_d(\mathbb{C})$,
\[ \mathcal{E}\otimes\mathcal{F}(X) = \sum_{k,l=1}^{n} \mathrm{Tr}( E_k^* \otimes F_l^*  X ) \ket{kl} \in \mathbb{C}^{n}\otimes\mathbb{C}^{n} \, , \]
which we can identify with
\[ \sum_{k,l=1}^n \mathrm{Tr}( E_k^*\otimes F_l^* X ) \kb{k}{l} \in \mathcal{M}_n(\mathbb{C}) \, . \]
In this way, the $\ell_2^n\otimes_{\pi}\ell_2^n$ norm of the first element is nothing else than the $S_1^n$ norm of the second one. Now, computing an $S_1^n$ norm is much cheaper than computing an $S_1^d \otimes_{\pi} S_1^d$ norm. The practical interest of our approach is thus clear in the bipartite case. 

But what about the difference in computational cost in the multipartite case? Assume that we have a multipartite system with $m$ subsystems. In this case deciding entanglement consists in computing an $(S_1^d)^{\otimes_{\pi}{m}}$ norm, i.e.~as we have just explained, an $(\ell_2^d\otimes_{\pi}\ell_2^d)^{\otimes_{\pi}{m}}$ norm. Now, an important property of the projective norm is that it is associative: given Banach spaces $X_1,X_2,X_3$, $X_1\otimes_{\pi}X_2\otimes_{\pi}X_3 = (X_1 \otimes_{\pi} X_2) \otimes_{\pi} X_3 = X_1 \otimes_{\pi} (X_2 \otimes_{\pi} X_3)$. Hence, an $(\ell_2^d\otimes_{\pi}\ell_2^d)^{\otimes_{\pi}m}$ norm can actually be seen as an $(\ell_2^d)^{\otimes_{\pi}2m}$ norm. On the other hand, deciding whether maps $\mathcal{E}_1,\ldots,\mathcal{E}_m$ detect entanglement here consists in computing an $(\ell_2^n)^{\otimes_{\pi} m}$ norm. This means that what we gain with our approach is a factor $2$ in the number of tensor products. 

In addition to being associative, the projective norm is also commutative: given Banach spaces $X_1,X_2$, $X_1\otimes_{\pi}X_2 = X_2 \otimes_{\pi} X_1$. This means that, for any state $\rho$ on $(\mathbb C^d)^{\otimes m}$, for any permutation $\sigma\in\mathcal S_{2m}$, denoting by $\rho_{\sigma}$ the matrix obtained from the matrix $\rho$ by permuting its indices according to $\sigma$, we have
\[ \|\rho\|_{(\ell_2^d)^{\otimes_{\pi}2m}} = \|\rho_{\sigma}\|_{(\ell_2^d)^{\otimes_{\pi}2m}} . \]
We could thus enlarge even further our family of entanglement criteria to: if a state $\rho$ on $(\mathbb C^d)^{\otimes m}$ is separable, then for any testers $\mathcal E_1,\ldots,\mathcal E_m:S_1^d\to\ell_2^n$ and any permutation $\sigma\in\mathcal S_{2m}$,
\[ \|\mathcal E_1\otimes\cdots\otimes \mathcal E_m(\rho_{\sigma}) \|_{(\ell_2^n)^{\otimes_{\pi}m}} \leq 1 . \]

Considering arbitrary permutations has however one important drawback: one loses the local aspect of the tester maps from Definition \ref{def:tester}. It is in this extended sense that we shall prove the completeness of entanglement criteria defined by testers for mixed bipartite states in Section \ref{sec:completeness}. 

It is important to mention the case of the permutations corresponding to partial transpositions. Indeed, if $I \subseteq [m]$ is the set of indices that are partially transposed and $\rho^{\Gamma_I}$ is the corresponding matrix, we have the following equality:
$$\|\mathcal E_1\otimes\cdots\otimes \mathcal E_m(\rho^{\Gamma_I})\|_{(\ell_2^n)^{\otimes_{\pi}m}} = \|\mathcal E_1'\otimes\cdots\otimes \mathcal E_m'(\rho)\|_{(\ell_2^n)^{\otimes_{\pi}m}},$$
where $\mathcal E_1',\ldots,\mathcal E_m'$ are defined as
$$\mathcal E_i':= \begin{cases}
\mathcal E_i^\flat \quad \text{if } i \in I\\
\mathcal E_i \quad \text{if } i \notin I
\end{cases} ,$$
with $\mathcal E_i^\flat$ the tester whose operators are the transposition of those of $\mathcal E_i$, so that $\mathcal E_i^\flat$ acts as $\mathcal E_i^\flat(X) = \mathcal E_i(X^\top)$ (see Figure \ref{fig:EF-X}).

\begin{figure}[H]
    \centering
    \includegraphics{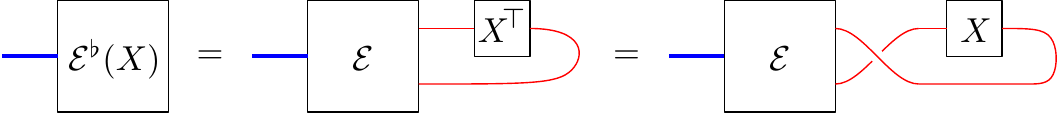}
    \caption{Implementing the transposition with a tester.}
    \label{fig:EF-X}
\end{figure}

\section{Important examples of testers}\label{sec:important-examples}

We now show that several well-known entanglement criteria can actually be seen as part of our framework, i.e.~as being associated to an entanglement tester $\mathcal E$.  
The way they are usually presented, all these examples appear as being designed for bipartite systems only. However, with our point of view, it is clear that they can naturally be extended to any number of subsystems (see Section \ref{sec:multipartite}).

\subsection{Maps defined from matrix bases} \hfill\smallskip

The first important example that enters into our framework is the celebrated realignment criterion \cite{chen2003realignment,rudolph2004ccnr}. Given an orthonormal basis $\{\ket{i}\}_{i=1}^d$ of $\mathbb{C}^d$, the set $R=\{R_{ij}\}_{i,j=1}^d = \{ |i\rangle \langle j|\}_{i,j=1}^d$ defines a map 
\[ \mathcal{R}: X \mapsto \sum_{i,j=1}^d \bra{i}X\ket{j} \ket{ij}. \]
The $\ell_2^{d^2}$ norm of $\mathcal{R}(X)$ is thus the $S_2^d$ norm of $X$, hence $\|\mathcal R(X)\|_{\ell_2^{d^2}} = \|X\|_{S_2^d} \leq \|X\|_{S_1^d}$, which implies that $\|\mathcal R\|_{S_1^d \to \ell_2^{d^2}} \leq 1$. Moreover, considering $X = \kb{x}{x}$ for a unit vector $x \in \mathbb C^d$ shows that actually $\|\mathcal R\|_{S_1^d \to \ell_2^{d^2}} = 1$. Thus $\mathcal R$ is a tester. And conversely, 
\[ \mathcal{R}^{-1}: x \mapsto \sum_{i,j=1}^d x_{ij} \kb{i}{j} \, , \]
so that $\|\mathcal R^{-1}\| = \sqrt d$. In other words, this means that, with the proper identification $\mathcal R = \mathrm{id}$, these norm estimates read $\|\mathrm{id}\|_{S_1^d \to S_2^d} = 1$ and $\|\mathrm{id}\|_{S_2^d \to S_1^d} = \sqrt d$ (see Figure \ref{fig:R-S}).

Note that the test operator $T_R$ associated to the realignment map $\mathcal{R}$ is the flip operator $F$ from equation \eqref{eq:def-flip}. Indeed,
$$T_R= \sum_{i,j=1}^d R_{ij}\otimes R_{ij}^* = \sum_{i,j=1}^d \kb{i}{j}\otimes\kb{j}{i} = F. $$

One can generalise the previous example to so-called cross-norm criteria in other local matrix bases than the one of matrix units. Given any orthonormal basis $G=\{ G_k \}_{k=1}^{d^2}$ of $\mathcal M_d(\mathbb C)$, the corresponding map is defined as
\[ \mathcal G : X \mapsto \sum_{k=1}^{d^2} \mathrm{Tr}(G_k^* X) \ket{k} \, . \]
Just as $\mathcal R$, it is such that the $\ell_2^{d^2}$ norm of $\mathcal{G}(X)$ is the $S_2^d$ norm of $X$, so that $\|\mathcal G\| = 1$. And conversely, 
\[ \mathcal{G}^{-1}: x \mapsto \sum_{k=1}^{d^2} x_{k} G_k\, , \]
so that $\|\mathcal G^{-1}\| = \sqrt d$. Hence exactly as for the map $\mathcal R$, the map $\mathcal G$ can be seen as the identity map from $S_1^d$ to $\ell_2^{d^2}\cong S_2^d$. The testers $\mathcal R$ and $\mathcal G$ are equivalent in the sense of Definition \ref{def:equivalent-testers}.

In \cite{sarbicki2020correlation}, deformed versions of these criteria, based on observed correlations in local matrix bases, were studied. This family of criteria actually enters in our framework as well. Let us briefly explain how. Let $G=\{ G_k \}_{k=1}^{d^2}$ be an orthonormal basis of $\mathcal M_d(\mathbb C)$, fix $x\geq 0$, and define
\[ \widetilde{\mathcal G}_x : X \mapsto x\mathrm{Tr}(G_1^* X) \ket{1} + \sum_{k=2}^{d^2} \mathrm{Tr}(G_k^* X) \ket{k} \, . \]
Assume now that $G=\{ G_k \}_{k=1}^{d^2}$ is a canonical orthonormal basis, i.e.~with one of its elements proportional to the identity, here $G_1=I/\sqrt{d}$, and the others traceless. We then have, for any $X$ such that $\|X\|_1\leq 1$,
\begin{align*} 
\|\widetilde{\mathcal G}_x(X)\|_2 & = \left( \sum_{k=1}^{d^2} |\mathrm{Tr}(G_k^* X)|^2 - (1-x^2)|\mathrm{Tr}(G_1^* X)|^2 \right)^{1/2} \\
& = \left( \mathrm{Tr}(XX^*) - \frac{1-x^2}{d}|\mathrm{Tr} (X)|^2 \right)^{1/2} \\
& \leq \left( 1- \frac{1-x^2}{d} \right)^{1/2} \, . 
\end{align*}
Hence, the map 
\[ \mathcal G_x:= \left( \frac{d}{d-1+x^2} \right)^{1/2} \widetilde{\mathcal G}_x \,  \]
is such that $\|\mathcal G_x\|=1$, and thus provides an entanglement criterion. This is in fact nothing else than a rephrasing of \cite[Theorem 1]{sarbicki2020correlation}.

\subsection{Maps defined from \texorpdfstring{$2$}{2}-designs} \hfill\smallskip

We discuss in this section testers coming from spherical 2-designs; an important special case corresponds to the entanglement criterion based on SIC POVMs introduced in \cite[Section IV]{shang2018enhanced} (see also \cite{lai2018entanglement}).
Given a spherical $2$-design $\{ |x_k \rangle\}_{k=1}^{d^2}$ of $\mathbb{C}^d$ with $d^2$ elements (see equation \eqref{defsic} below for the definition), the set $S = \left\{ S_k := \sigma |x_k \rangle \langle x_k| \right \}_{k=1}^{d^2}$ defines a map 
\[ \mathcal{S}: X \mapsto \sigma \sum_{k=1}^{d^2} \bra{x_k}X\ket{x_k} \ket{k}. \]
It is such that
\begin{align} 
\nonumber\|\mathcal{S}(X)\|_2 & = \sigma \left( \sum_{k=1}^{d^2} |\bra{x_k}X\ket{x_k}|^2 \right)^{1/2} \\
\nonumber& = \sigma \left( \sum_{k=1}^{d^2} \mathrm{Tr}\left( \kb{x_k}{x_k}^{\otimes 2}X\otimes X^*\right) \right)^{1/2} \\
\label{eq:norm-S-X}& = \sigma \left( \frac{2d}{d+1} \mathrm{Tr}\left(\frac{I+F}{2} X\otimes X^*\right) \right)^{1/2}, 
\end{align}
where the last equality is because, by definition of a spherical $2$-design, 
\begin{equation}\label{defsic}
 \frac{1}{d^2} \sum_{k=1}^{d^2} \kb{x_k}{x_k}^{\otimes 2} = \frac{I+F}{d(d+1)}. \end{equation} 
This implies that $\|\mathcal S\| = \sigma\sqrt{2d/(d+1)}$, so in order to obtain the correct normalization for the map $\mathcal S$, one needs to fix 
\[ \sigma=\sqrt{\frac{d+1}{2d}}. \]
Let us now compute $\|\mathcal{S}^{-1}\|$. We have $\mathcal S^{-1} = \mathcal S^* (\mathcal S \mathcal S^*)^{-1}$, with
\begin{align*}
& \mathcal S = \sum_{k=1}^{d^2} \mathrm{Tr}(S_k^* \cdot)\ket{k} ,\\
& \mathcal S^* = \sum_{k=1}^{d^2} S_k\bra{k}.
\end{align*}
Using the symmetry of the $S_k$'s, the Gram matrix $G=\mathcal S \mathcal S^*$ is easily computed as
$$G = \sum_{k,l=1}^{d^2} \mathrm{Tr} (S_k S_l) \ket{k} \bra{l} = \frac{1}{2d}J + \frac 1 2 I,$$
where $J$ is the all ones $d^2 \times d^2$ matrix. The inverse of $G$ is thus
$$G^{-1} = 2I - \frac{2d}{d+1} \ket{v}\bra{v},$$
where the unit vector $v\in \mathbb C^{d^2}$ is defined as
\begin{equation}\label{eq:def-v}
\ket v = \frac 1 d \sum_{k=1}^{d^2}\ket{k}.
\end{equation}
We thus have
$$\mathcal S^{-1} = 2 \mathcal S^* - \sqrt\frac{2d}{d+1} \ket I \bra v,$$
where $\ket I$ is the vectorization of the $d \times d$ identity matrix. So for a given $y \in \mathbb C^{d^2}$, we have 
$$\mathcal S^{-1}(\ket{y}) = 2 \sum_{k=1}^{d^2}y_kS_k - \sqrt\frac{2}{d(d+1)}\left(\sum_{k=1}^{d^2}y_k\right) I = \sum_{k=1}^{d^2} y_kM_k,$$
where we have defined the $d\times d$ matrices $M_k$ as
$$M_k:\,=  \sqrt{\frac{2}{d}}\left( \sqrt{d+1}\kb{x_k}{x_k} -\frac{1}{\sqrt{d+1}}I \right) .$$
Hence,
\begin{align*}
\underset{\|y\|_2\leq 1}{\max} \left\| \sum_{k=1}^{d^2}y_kM_k \right\|_1 
& = \underset{\|y\|_2\leq 1}{\max} \underset{\|Y\|_{\infty}\leq 1}{\max} \left| \sum_{k=1}^{d^2} y_k \mathrm{Tr}(M_kY) \right| \\
& =  \underset{\|Y\|_{\infty}\leq 1}{\max} \underset{\|y\|_2\leq 1}{\max} \left| \sum_{k=1}^{d^2} y_k \mathrm{Tr}(M_kY) \right| \\
& = \underset{\|Y\|_{\infty}\leq 1}{\max} \left( \sum_{k=1}^{d^2} |\mathrm{Tr}(M_kY)|^2 \right)^{1/2} \\
& = \underset{\|Y\|_{\infty}\leq 1}{\max} \left( \sum_{k=1}^{d^2} \mathrm{Tr}(M_k\otimes M_k^* Y\otimes Y^*) \right)^{1/2} .
\end{align*}
Now by definition, 
\[ M_k\otimes M_k^* = \frac{2}{d}\left( (d+1)\kb{x_k}{x_k}^{\otimes 2}- \kb{x_k}{x_k}\otimes I - I\otimes\kb{x_k}{x_k} + \frac{1}{d+1}I \right). \]
And therefore,
\[ \sum_{k=1}^{d^2} M_k\otimes M_k^* = \frac{2}{d}\left( d(I+F)-2dI+\frac{d^2}{d+1}I \right) = 2\left( F-\frac{1}{d+1}I \right). \]
This implies that
\begin{align*}
\underset{\|Y\|_{\infty}\leq 1}{\max} \left( \sum_{k=1}^{d^2} \mathrm{Tr}(M_k\otimes M_k^* Y\otimes Y^*) \right)^{1/2} 
& = \underset{\|Y\|_{\infty}\leq 1}{\max} \left( 2 \mathrm{Tr}\left( \left(F-\frac{1}{d+1}I \right) Y\otimes Y^*\right) \right)^{1/2} \\
& = \sqrt{2} \underset{\|Y\|_{\infty}\leq 1}{\max} \left(  \mathrm{Tr}(YY^*)-\frac{1}{d+1}|\mathrm{Tr}(Y)|^2\right)^{1/2} \\
& =\sqrt{2d}.
\end{align*}
And we have thus eventually shown that $\|\mathcal{S}^{-1}\|=\sqrt{2d}$.

Note that the test operator $T_S$ associated to the $2$-design map $\mathcal{S}$ is the projector on the symmetric subspace $(I+F)/2$. Indeed, by definition of a $2$-design, as recalled in equation \eqref{defsic}, we have
$$T_S= \sum_{k=1}^{d^2} S_{k}\otimes S_{k}^* = \frac{d+1}{2d} \sum_{k=1}^{d^2} \kb{x_k}{x_k}\otimes\kb{x_k}{x_k} = \frac{I+F}{2},$$
which is the projection on the symmetric subspace of $\mathbb C^d \otimes \mathbb C^d$ (see also Figure \ref{fig:R-S}).

\begin{figure}[H]
    \centering
    \includegraphics{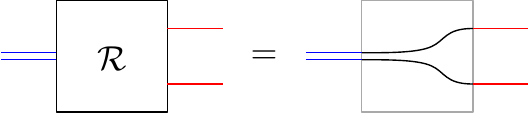}\qquad\qquad\qquad\qquad\includegraphics{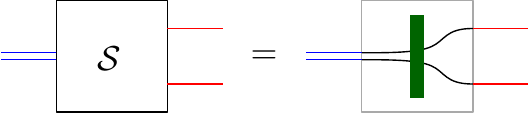}
    \caption{The diagrams for the realignment tester $\mathcal R$ and the SIC POVM tester $\mathcal S$. The (green) solid rectangle in the diagram for $\mathcal S$ is the symmetrization operation in Penrose's graphical formalism. The two blue output wires are treated as a single vector space $\mathbb C^n = \mathbb C^{d^2}$.}
    \label{fig:R-S}
\end{figure}
 
\subsection{Separability criteria}  \hfill\smallskip

Recall from Section \ref{sec:testers} that linear maps $\mathcal E$ also provide separability criteria (see equation \eqref{eq:separability-criterion}). It follows that maps $\mathcal E:S_1^d \to \ell_2^{n}$ with minimal $\| \mathcal E\| \cdot \| \mathcal E^{-1}\|$ should give the best combined (entanglement + separability) criterion. 

The two examples that we will be mostly focusing on in this paper are the matrix unit map $\mathcal R$ and the $2$-design map $\mathcal S$. Note from the above discussion that, for $\mathcal R,\mathcal S$ normalized to have norm $1$, we have $\|\mathcal{S}^{-1}\|=\sqrt{2d} > \sqrt{d} = \|\mathcal{R}^{-1}\|$. So the $2$-design map gives a `worse' combined (entanglement + separability) criterion than the matrix unit map, in the sense that
\[ \|\mathcal S\| \cdot \|\mathcal{S}^{-1}\| > \|\mathcal R\| \cdot \|\mathcal{R}^{-1}\| \, .  \] 
Note however that this goes in the opposite direction as the one suggested by the observations in \cite{shang2018enhanced}.

\begin{remark}
The Banach-Mazur distance between Banach spaces $X,Y$ is defined as the infimum over all maps $\mathcal{T}:X\to Y$ of $\|\mathcal{T}\|\cdot\|\mathcal{T}^{-1}\|$. In \cite[Theorem 45.2]{tomczak1989banach} it is shown that the Banach-Mazur distance between $S_1^d$ and $\ell_2^{d^2} \cong S_2^d$ is $\sqrt d$. This means that the realignment criterion is the `best' possible, when one is looking for simultaneous entanglement and separability criteria.
\end{remark}

The separability criterion defined by the realignment map has very little practical interest, as we shall see next for the case of bipartite quantum states. A quantum state $\rho \in \mathcal M_d(\mathbb C) \otimes \mathcal M_d(\mathbb C)$ is certified separable by the realignment criterion from \eqref{eq:separability-criterion} whenever 
$$\|\mathcal R^{\otimes 2}(\rho)\|_1  \leq \frac{1}{d}.$$
Note however that we have 
$$\|\mathcal R^{\otimes 2}(\rho)\|_{1} \geq \|\mathcal R^{\otimes 2}(\rho)\|_2 = \|\rho\|_2 \geq \frac{1}{d}\|\rho\|_1 = \frac 1 d,$$
with equality if and only if the spectrum of $\rho$ is flat, i.e.~$\rho = I/d^2$. Hence, the only bipartite state which is certified separable by the realignment criterion is the maximally mixed state. 

\section{Perfect testers}\label{sec:perfect-testers}

In this section, we introduce and study a special class of entanglement testers, called perfect testers, which are strong enough to detect any pure entangled state.

\subsection{Definition and characterization of perfect testers} \hfill\smallskip

\begin{definition}
	Let $d \geq 2$. A $\mathbb C$-tester $\mathcal E : S_1^{d} \to \ell_2^{n}$ is called $\mathbb C$-\emph{perfect} if, for any pure states $\phi,\chi \in \mathbb C^{d} \otimes \mathbb C^{d}$, at least one of them entangled, 
	$$\|\mathcal E ^{\otimes 2}(\ket{\phi}\bra{\chi})\|_{\ell_2^{n} \otimes_\pi \ell_2^{n}} >1.$$
	  
	An $\mathbb R$-tester $\mathcal F : S_1^{d,\sa} \to \ell_2^{n}$ is called $\mathbb R$-\emph{perfect} if, for any pure entangled state $\phi \in \mathbb C^{d} \otimes \mathbb C^{d}$, 
	$$\|\mathcal F ^{\otimes 2}(\ket{\phi}\bra{\phi})\|_{\ell_2^{n} \otimes_\pi \ell_2^{n}} >1.$$
\end{definition}

We state and prove below the main result of this section, a characterization of perfect testers, both in the real and the complex cases.

\begin{theorem} \label{th:perfect-testers}
	Consider a $\mathbb C$-linear map $\mathcal E : S_1^{d} \to \ell_2^{n}$. The following statements are equivalent:
	\begin{enumerate}
	    \item $\mathcal E$ is a $\mathbb C$-perfect tester,
	    \item The norm $\|\mathcal E \|_{S_1^{d} \to \ell_2^{d}} = 1$ is attained at all the extremal points of the unit ball of $S_1^{d}$: for all unit vectors $x,y\in \mathbb C^{d}$ we have $\|\mathcal E(\kb{x}{y})\|_2 = 1$,
	    \item $\mathcal E$ is an isometry $S_2^{d} \to \ell_2^{n}$.
	\end{enumerate}
	
	Similarly, for an $\mathbb R$-linear map $\mathcal F : S_1^{d,sa} \to \ell_2^{n}$, the following are equivalent: 
	\begin{enumerate}
	    \item $\mathcal F$ is an $\mathbb R$-perfect tester,
	    \item The norm $\|\mathcal F \|_{S_1^{d,sa} \to \ell_2^{n}} = 1$ is attained at all the extremal points of the unit ball of $S_1^{d,sa}$: for all unit vector $x\in \mathbb C^{d}$ we have $\|\mathcal E(\kb{x}{x})\|_2 = 1$.
	\end{enumerate}
\end{theorem}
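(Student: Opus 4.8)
The plan is to establish the cyclic chain of implications $(1)\Rightarrow(2)\Rightarrow(3)\Rightarrow(1)$ in the complex case, and then handle the real case by the same argument restricted to self-adjoint extremal points. First I would unpack the condition defining a tester together with the test-operator formalism from Section \ref{sec:test-operator}: recall that $\|\mathcal E(\kb xy)\|_2^2 = \bra{x\otimes y}T_E\ket{y\otimes x}$, and that the tester normalization $\|\mathcal E\|_{S_1^d\to\ell_2^n}=1$ is exactly the statement that this quantity is at most $1$ for all unit $x,y$, by extremality of rank-one operators $\kb xy$ in the unit ball of $S_1^d$ (as used in Lemma \ref{lem:T-Theta}).

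For $(2)\Rightarrow(3)$, I would observe that condition $(2)$ says $\|\mathcal E(\kb xy)\|_2=1$ for \emph{all} unit vectors $x,y$, i.e.\ $\mathcal E$ preserves the norm of every rank-one operator. Since $\|\kb xy\|_{S_2^d}=\|x\|_2\|y\|_2$, this is precisely the statement that $\mathcal E$ is a $2$-norm isometry on the rank-one operators; I would then promote this to a genuine isometry $S_2^d\to\ell_2^n$ by polarization, using that the map $X\mapsto\langle\mathcal E(X),\mathcal E(X)\rangle = \langle\Theta_E X, X\rangle_{HS}$ is a Hermitian quadratic form and that agreement of $\|\mathcal E(\cdot)\|_2$ with $\|\cdot\|_{S_2}$ on all rank-ones forces $\Theta_E=\mathcal E^*\mathcal E$ to equal the identity on $S_2^d\cong\mathbb C^{d^2}$ (the operators $\kb xy$ span, and their associated Hermitian forms determine the form). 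The implication $(3)\Rightarrow(1)$ is then the substantive direction: assuming $\mathcal E$ is an isometry on $S_2^d$, I want to show that for any $\phi,\chi$ with at least one entangled, $\|\mathcal E^{\otimes2}(\kb\phi\chi)\|_{\ell_2^n\otimes_\pi\ell_2^n}>1$. Here I would use the bipartite identification from Section \ref{sec:interest}: the $\ell_2^n\otimes_\pi\ell_2^n$ norm of $\mathcal E^{\otimes2}(\kb\phi\chi)$ equals the $S_1^n$ (nuclear) norm of a reshuffled matrix, and since $\mathcal E$ is an $S_2$-isometry this nuclear norm dominates the $S_2^n$ norm, which by the isometry property equals $\|\kb\phi\chi\|_{S_2}=\|\phi\|_2\|\chi\|_2=1$, with equality in $\|\cdot\|_1\ge\|\cdot\|_2$ iff the reshuffled matrix has rank one — which corresponds exactly to $\phi,\chi$ both being product vectors.

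The implication $(1)\Rightarrow(2)$ I would prove by contraposition: if $\|\mathcal E(\kb{x_0}{y_0})\|_2<1$ for some unit $x_0,y_0$, then taking $\phi=\chi=x_0\otimes y_0$ (a \emph{separable} state, but I need an entangled witness) requires more care — instead I would perturb, choosing $\phi$ a slightly entangled state near $x_0\otimes y_0$ and $\chi$ correspondingly, and use continuity of $X\mapsto\|\mathcal E^{\otimes2}(X)\|_\pi$ together with the strict inequality at the product point to conclude that the tester fails to detect this nearby entangled state, contradicting perfectness. I expect the main obstacle to be exactly this direction and the equality-analysis in $(3)\Rightarrow(1)$: pinning down that equality in the nuclear-vs-Hilbert–Schmidt bound forces the reshuffled operator $\mathcal E^{\otimes2}(\kb\phi\chi)$ to be rank one, and translating that rank-one condition back through the isometry into the separability of \emph{both} $\phi$ and $\chi$, is where the genuine content lies; the tensor-norm/reshuffling bookkeeping and the polarization step are routine by comparison.
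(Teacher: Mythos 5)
Your implications $(2)\Rightarrow(3)$ and $(3)\Rightarrow(1)$ are essentially sound. The first is the paper's own route: the values of the quadratic form $X \mapsto \langle \mathcal E(X),\mathcal E(X)\rangle$ on all rank-one operators force $\mathcal E^*\mathcal E = I$; be aware, though, that your parenthetical justification (``the operators $\kb{x}{y}$ span'') is not by itself a reason — knowing a Hermitian form on a spanning set does not determine it. What does work, and is the paper's Lemma \ref{lem:product-numerical-range}, is that the product projectors $\kb{x}{x}\otimes\kb{\bar y}{\bar y}$ span $\mathcal M_{d^2}(\mathbb C)$, so the identity $\bra{x\otimes\bar y}(\mathcal E^*\mathcal E - I)\ket{x\otimes\bar y}=0$ for \emph{all} unit $x,y$ kills the operator. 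Your $(3)\Rightarrow(1)$ via $\|M\|_{S_1}\geq\|M\|_{S_2}=1$, with equality iff $M$ has rank one, is a correct variant of the paper's argument: since the isometry carries the orthonormal family $\{\kb{a_i}{c_j}\}$ to an orthonormal family, the matricization $M$ has rank equal to the product of the Schmidt ranks of $\phi$ and $\chi$, so rank one forces both to be product vectors. The paper instead computes the projective norm exactly as $\bigl(\sum_i\sqrt{p_i}\bigr)\bigl(\sum_j\sqrt{q_j}\bigr)$, which is more informative but needs the same ingredients.

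The genuine gap is in $(1)\Rightarrow(2)$. You perturb around the point $\phi=\chi=x_0\otimes y_0$. But $\kb{x_0\otimes y_0}{x_0\otimes y_0}=\kb{x_0}{x_0}\otimes\kb{y_0}{y_0}$, so the limiting value in your continuity argument is $\|\mathcal E(\kb{x_0}{x_0})\|_2\,\|\mathcal E(\kb{y_0}{y_0})\|_2$: it involves only \emph{diagonal} rank-ones and carries no information about $\|\mathcal E(\kb{x_0}{y_0})\|_2$, which is the quantity your contrapositive hypothesis constrains. Your claimed ``strict inequality at the product point'' can simply fail: for the SIC POVM tester $\mathcal S$ one has $\|\mathcal S(\kb{x_0}{y_0})\|_2^2=\tfrac12\bigl(1+|\langle x_0|y_0\rangle|^2\bigr)<1$ whenever $x_0$ and $y_0$ are not parallel, yet $\|\mathcal S(\kb{x_0}{x_0})\|_2=\|\mathcal S(\kb{y_0}{y_0})\|_2=1$, so the value at your product point equals $1$ and no contradiction can be extracted. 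The repair — which is exactly what the paper does — is to perturb around two \emph{different} product vectors: take $\phi_k$ entangled near $x_0\otimes x'$ and $\chi_k$ entangled near $y_0\otimes y'$, so that $\kb{\phi_k}{\chi_k}\to\kb{x_0}{y_0}\otimes\kb{x'}{y'}$, whose image under $\mathcal E^{\otimes 2}$ has projective norm $\|\mathcal E(\kb{x_0}{y_0})\|_2\,\|\mathcal E(\kb{x'}{y'})\|_2\leq\|\mathcal E(\kb{x_0}{y_0})\|_2<1$, contradicting perfectness for large $k$. The freedom to take $\phi\neq\chi$ in the definition of a $\mathbb C$-perfect tester is precisely what gives access to the off-diagonal rank-ones; it is also exactly where the complex and real cases diverge (the SIC tester is $\mathbb R$-perfect but not $\mathbb C$-perfect), which is why your same-point perturbation is fine for the real statement but fails for the complex one.
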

\begin{proof}
	Let us start with the complex case, and prove the implication $(3) \implies (1)$. Writing the Schmidt decompositions
	\begin{align*}
		\ket \phi &= \sum_i \sqrt{p_i} \ket{a_i} \otimes \ket{b_i}\\
		\ket \chi &= \sum_j \sqrt{q_j} \ket{c_j} \otimes \ket{d_j},
	\end{align*}
	we have 
	$$\mathcal E ^{\otimes 2}(\ket{\phi}\bra{\chi}) = \sum_{ij} \sqrt{p_iq_j} \mathcal E(\ket{a_i}\bra{c_j}) \otimes \mathcal E(\ket{b_i}\bra{d_j}).$$
	Since $\mathcal E$ is an isometry, one recognizes above the Schmidt decomposition of the left hand side, so 
	$$\| \mathcal E ^{\otimes 2}(\ket{\phi}\bra{\chi}) \|_{\ell_2^{n} \otimes_\pi \ell_2^{n}} = \sum_{ij}\sqrt{p_iq_j} = \left(\sum_i \sqrt{p_i}\right)\left(\sum_j \sqrt{q_j}\right),$$
	which implies the claim, since at least one of the last two factors is strictly larger than $1$. 
	
	Let us now move to the implication $(1) \implies (2)$. Consider any four unit vectors $x,y,x',y' \in \mathbb C^{d}$, and pick orthogonal unit vectors $x_\perp, y_\perp,x'_\perp,y'_\perp \in \mathbb C^{d}$ (we assume here $d\geq 2$). Define, for $k \in \mathbb N$, the entangled vectors
	\begin{align*}
		\ket{\phi_k} &= \sqrt{1-\frac 1 k} \ket{x} \otimes \ket{x'} + \sqrt{\frac 1 k} \ket{x_\perp} \otimes \ket{x'_\perp} \\
		\ket{\chi_k} &= \sqrt{1-\frac 1 k} \ket{y} \otimes \ket{y'} + \sqrt{\frac 1 k} \ket{y_\perp} \otimes \ket{y'_\perp}.
	\end{align*}
	Using the hypothesis that $\mathcal E$ is a $\mathbb C$-perfect tester, we have $\|\mathcal E ^{\otimes 2}(\ket{\phi_k}\bra{\chi_k})\|_{\ell_2^{n} \otimes_\pi \ell_2^{n}}>1$, hence
	$$\|\mathcal E(\ket{x}\bra{y})\|_{\ell_2^{n}} \|\mathcal E(\ket{x'}\bra{y'})\|_{\ell_2^{n}} = \lim_{k\to \infty} \|\mathcal E ^{\otimes 2}(\ket{\phi_k}\bra{\chi_k})\|_{\ell_2^{n} \otimes_\pi \ell_2^{n}} \geq 1.$$
	Note that since $\mathcal E$ is a $\mathbb C$-tester, we also have $\|\mathcal E(\ket{x}\bra{y})\|_{\ell_2^{n}}\leq 1$ and $\|\mathcal E(\ket{x'}\bra{y'})\|_{\ell_2^{n}} \leq 1$, so actually both norms are equal to $1$, proving the claim that $\mathcal E$ preserves the Euclidean norms of unit rank operators. 
	
	Finally, the implication $(2) \implies (3)$ follows from Lemma \ref{lem:product-numerical-range} applied to $\mathcal E^*\mathcal E$, in which we identify $\mathcal M_d(\mathbb C) \cong \mathbb C^d \otimes \mathbb C^d$.
	
	The equivalence in the real case can be proven in a similar manner.
\end{proof}

\begin{remark}
Amongst the examples of testers presented in Section \ref{sec:important-examples}, we see that the ones defined from matrix bases are $\mathbb C$-perfect testers, while the ones defined from $2$-designs are only $\mathbb R$-perfect testers. 

Note also that the two conditions above corresponding to the real case are not equivalent to the stronger condition

(3') $\mathcal F$ is an isometry $S_2^{d,sa} \to \ell_2^{n}$.

\noindent Indeed, there exist $\mathbb R$-linear maps preserving the Euclidean norm of any unit rank self-adjoint matrix $\kb{x}{x}$ which are not isometries. An example is the SIC POVM tester $\mathcal{S}$ described in Section \ref{sec:important-examples}. Indeed, from equation \eqref{eq:norm-S-X}, we have $\|\mathcal S(\kb{x}{x})\|_2 = 1$ for any unit vector $x \in \mathbb C^d$. However, we have 
$$\|\mathcal S(I)\|_2 = \left\| \sqrt\frac{d+1}{2d} I \right\|_2 = \sqrt{\frac{d+1}{2}} < \sqrt{d}= \|I\|_2.$$
\end{remark}

The statement of the following lemma is very similar to \cite[Lemma 2.1]{johnston2011characterizing}. Also, its proof can be seen to follow from \cite[Proposition 3]{puchala2011product}.

\begin{lemma}\label{lem:product-numerical-range}
	Let $A \in M_{d_1d_2}(\mathbb C)$ be such that for all unit vectors $x \in \mathbb C^{d_1}$, $y \in \mathbb C^{d_2}$, 
	\begin{equation}\label{eq:product-numerical-range-general}
	\bra {x \otimes y} A \ket{x \otimes y} = 1.
	\end{equation}
	Then, $A = I_{d_1d_2}$. 
\end{lemma}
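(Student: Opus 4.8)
The plan is to show that the hypothesis \eqref{eq:product-numerical-range-general}, which constrains $A$ only on the product vectors $x \otimes y$, in fact forces $A = I_{d_1 d_2}$. The key idea is that the polarization identity lets us upgrade the information about the diagonal quadratic form $\bra{x \otimes y} A \ket{x \otimes y}$ on product vectors into information about the full bilinear form $\bra{x \otimes y} A \ket{x' \otimes y'}$ on products, and that this suffices to pin down $A$ completely because product vectors span the whole space $\mathbb C^{d_1} \otimes \mathbb C^{d_2}$.

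First I would fix arbitrary unit vectors and apply polarization in each tensor factor separately. Concretely, for fixed $y$, the map $x \mapsto \bra{x \otimes y} A \ket{x \otimes y}$ is a quadratic form in $x$ that is constant (equal to $1$) on the unit sphere; by homogeneity it equals $\|x\|^2$ for all $x$, so the associated sesquilinear form $(x, x') \mapsto \bra{x \otimes y} A \ket{x' \otimes y}$ must equal $\langle x | x' \rangle$ by the uniqueness of the form recovered via polarization. Performing the same argument in the second factor, I would obtain
\[
\bra{x \otimes y} A \ket{x' \otimes y'} = \langle x | x' \rangle \, \langle y | y' \rangle = \bra{x \otimes y} I_{d_1 d_2} \ket{x' \otimes y'}
\]
for all vectors $x, x', y, y'$. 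Since simple tensors $x \otimes y$ (respectively $x' \otimes y'$) span $\mathbb C^{d_1} \otimes \mathbb C^{d_2}$, the two operators $A$ and $I_{d_1 d_2}$ have identical matrix elements in a spanning set, hence $A = I_{d_1 d_2}$.

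The one point requiring a little care — and the step I expect to be the main obstacle — is justifying the bifactor polarization cleanly, i.e. recovering the \emph{off-diagonal} product matrix elements $\bra{x \otimes y} A \ket{x' \otimes y'}$ with \emph{both} factors varying. The cleanest route is to do the two polarizations in sequence: first hold the second factor fixed and polarize in the first (obtaining $\bra{x \otimes y} A \ket{x' \otimes y} = \langle x | x' \rangle$ for all $x, x'$ and each fixed unit $y$), then for fixed $x, x'$ polarize in the second factor. A subtlety is that polarization in the second step is applied to the sesquilinear form $y \mapsto \bra{x \otimes y} A \ket{x' \otimes y}$, whose diagonal value $\langle x | x' \rangle$ is a constant independent of $y$ rather than literally $\|y\|^2$; but constancy on the unit sphere together with the quadratic homogeneity in $y$ still determines the full form, yielding the clean factorized expression above. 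Alternatively, one may invoke \cite[Proposition 3]{puchala2011product} as noted, but the direct polarization argument is self-contained and elementary.
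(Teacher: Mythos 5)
Your proof is correct, but it takes a more self-contained route than the paper. The paper's argument is a two-liner: rewrite \eqref{eq:product-numerical-range-general} as $\mathrm{Tr}\left((A-I)\,\kb{x}{x}\otimes\kb{y}{y}\right)=0$ and then invoke, without proof, the fact that the $\mathbb C$-linear span of the operators $\kb{x}{x}\otimes\kb{y}{y}$ is all of $\mathcal M_{d_1d_2}(\mathbb C)$, so that $A-I$ pairs to zero with everything and must vanish. You instead prove the needed injectivity directly by iterated polarization: first fixing $y$ and recovering $\bra{x\otimes y}A\ket{x'\otimes y}=\bk{x}{x'}$ from the constancy of the quadratic form on the unit sphere, then polarizing in the second factor to get the full product matrix elements $\bra{x\otimes y}A\ket{x'\otimes y'}=\bk{x}{x'}\bk{y}{y'}$, and concluding since product vectors span. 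The two arguments rest on the same underlying principle (over $\mathbb C$, a sesquilinear form is determined by its diagonal, which is exactly why the paper's spanning claim holds), so neither is fundamentally more general; what yours buys is that it is elementary and complete as written, whereas the paper's is shorter but outsources the key spanning fact. You also correctly flag and resolve the one genuine subtlety in your route, namely that in the second polarization step the diagonal value is the constant $\bk{x}{x'}$ rather than $1$, which homogeneity plus uniqueness of the polarized form handles.
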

\begin{proof}
Re-write \eqref{eq:product-numerical-range-general} as 
	$$\mathrm{Tr} \left( (A-I) \ket{x}\bra{x} \otimes \ket{y}\bra{y} \right) = 0.$$
	Since the $\mathbb C$-linear span of $\ket{x}\bra{x} \otimes \ket{y}\bra{y}$ is the whole matrix algebra $\mathcal M_{d_1d_2}(\mathbb C)$, we get $A-I = 0$. 
\end{proof}

\subsection{Test operators associated to perfect testers} \hfill\smallskip

We have seen in the previous subsection that $\mathbb C$-perfect testers are precisely those for which the map $\mathcal E$ is an isometry $S_2^d \to \ell_2^n$. 

\begin{theorem}
Let $\mathcal E :S_1^d \to \ell_2^{d^2}$ be a $\mathbb C$-perfect tester. The test operator of $\mathcal E$ is then $T_E = F$, the flip operator. Hence, by Proposition \ref{prop:equal-T-equivalent}, $\mathcal E$ is equivalent to the realignment tester $\mathcal R$, in the sense of Definition \ref{def:equivalent-testers}.
\end{theorem}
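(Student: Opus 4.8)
The plan is to show that a $\mathbb C$-perfect tester $\mathcal E : S_1^d \to \ell_2^{d^2}$ (which by Theorem~\ref{th:perfect-testers} is an isometry $S_2^d \to \ell_2^{d^2}$, necessarily surjective since $\dim S_2^d = d^2 = \dim \ell_2^{d^2}$) has test operator equal to the flip $F$, and then invoke Proposition~\ref{prop:equal-T-equivalent}. Recall from equation \eqref{eq:T-Theta} that $T_E = \Theta_E^\Gamma F$ with $\Theta_E = \mathcal E^* \mathcal E$. So it suffices to prove $\Theta_E = I_{d^2}$, since then $T_E = I^\Gamma F = F$ (as $I = I \otimes I$ is invariant under partial transposition). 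Equivalently, since the realignment tester $\mathcal R$ already satisfies $T_R = F$ and is itself a $\mathbb C$-perfect tester, establishing $T_E = F = T_R$ immediately yields equivalence via Proposition~\ref{prop:equal-T-equivalent}.

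The first step is to translate the isometry condition into a statement about $\Theta_E = \mathcal E^* \mathcal E$. Since $\mathcal E$ is an isometry from $S_2^d$ to $\ell_2^{d^2}$, for every $X \in \mathcal M_d(\mathbb C)$ we have $\langle \mathcal E^* \mathcal E \, X, X\rangle = \|\mathcal E(X)\|_2^2 = \|X\|_{S_2^d}^2 = \langle X, X\rangle$, where the inner product on $\mathcal M_d(\mathbb C)$ is the Hilbert--Schmidt one. Under the vectorization identification $\mathcal M_d(\mathbb C) \cong \mathbb C^{d^2}$ (so that the $S_2$ norm becomes the Euclidean norm), this says exactly that the positive semidefinite operator $\Theta_E$ satisfies $\langle v, \Theta_E v\rangle = \langle v, v\rangle$ for all $v \in \mathbb C^{d^2}$. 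Hence $\Theta_E = I_{d^2}$.

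The second step is the conclusion: plugging $\Theta_E = I$ into \eqref{eq:T-Theta} gives
$$T_E = \Theta_E^\Gamma F = I^\Gamma F = I F = F.$$
Since the realignment tester $\mathcal R$ has test operator $T_R = F$ as computed in Section~\ref{sec:important-examples}, we have $T_E = T_R$, and Proposition~\ref{prop:equal-T-equivalent} then gives that $\mathcal E$ is equivalent to $\mathcal R$ in the sense of Definition~\ref{def:equivalent-testers}, completing the proof.

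I do not expect any serious obstacle here, as the heavy lifting was already done in Theorem~\ref{th:perfect-testers} (identifying perfect testers with $S_2 \to \ell_2$ isometries) and in the structural identity \eqref{eq:T-Theta}. The only point requiring a small amount of care is the identification $\Theta_E = \mathcal E^* \mathcal E$ together with the correct normalization of the vectorization map so that the Hilbert--Schmidt inner product matches the Euclidean inner product on $\mathbb C^{d^2}$; once this bookkeeping is fixed, the statement $\Theta_E = I$ is immediate from the isometry property, and the rest is a one-line substitution.
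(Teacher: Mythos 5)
Your proposal is correct and follows essentially the same route as the paper: invoke Theorem \ref{th:perfect-testers} to identify $\mathcal E$ as an $S_2^d \to \ell_2^{d^2}$ isometry, deduce $\Theta_E = \mathcal E^*\mathcal E = I_{d^2}$, and conclude $T_E = \Theta_E^\Gamma F = F$ followed by Proposition \ref{prop:equal-T-equivalent}. The only difference is that you spell out the polarization/vectorization bookkeeping that the paper leaves implicit, which is fine.
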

\begin{proof}
    Since $\mathcal E$ is an isometry, we have (identifying $\mathcal M_d(\mathbb C)$ with $\mathbb C^{d^2}$ as vector spaces) $\Theta_E = \mathcal E^*\mathcal E = I_{d^2}$, and thus $T_E = F$.
\end{proof}

Let us now analyze the case of $\mathbb R$-perfect testers. 
Having characterized them in Theorem \ref{th:perfect-testers}, we would now like to re-express what this means at the level of the associated test operator. In other words: given an $\mathbb R$-linear map $\mathcal E : S_1^{d,sa} \to \ell_2^{n}$ such that the norm $\|\mathcal E \|_{S_1^{d,sa} \to \ell_2^{n}} = 1$ is attained at all the extremal points of the unit ball of $S_1^{d,sa}$, how is its associated test operator $T_E$ characterized? We state the answer bellow.

\begin{theorem} \label{th:perfect-tester-T}
If an $\mathbb R$-linear map $\mathcal E : S_1^{d,sa} \to \ell_2^{n}$ is an $\mathbb R$-perfect tester, then its associated test operator $T_E$ is of the form
\[ T_E=\frac{I+F}{2}+T'_E, \] 
with $T'_E$ orthogonal to $(I+F)/2$.
\end{theorem}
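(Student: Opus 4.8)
The plan is to translate the defining property of an $\mathbb{R}$-perfect tester into a single scalar identity for $T_E$ and then read off the orthogonality by averaging over the unit sphere.

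First I would recall from Theorem \ref{th:perfect-testers} that $\mathcal E$ being $\mathbb R$-perfect means exactly $\|\mathcal E(\kb{x}{x})\|_2 = 1$ for every unit vector $x\in\mathbb C^d$. Feeding the self-adjoint input $X=\kb{x}{x}$ (so that $X^*=X$ and $X\otimes X^* = \kb{x}{x}^{\otimes 2} = \ket{x\otimes x}\bra{x\otimes x}$) into the identity \eqref{eq:norm-T_E}, namely $\|\mathcal E(X)\|_2^2 = \Tr(T_E^*\,X\otimes X^*)$, this becomes $\bra{x\otimes x}T_E^*\ket{x\otimes x}=1$. Since the left-hand side is real, taking conjugates yields the equivalent statement
\[ \bra{x\otimes x}T_E\ket{x\otimes x} = 1 \quad\text{for all unit } x\in\mathbb C^d. \]

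The key observation is that $\ket{x\otimes x}$ lives in the symmetric subspace $\mathrm{Sym}^2(\mathbb C^d)$, whose projector is precisely $\tfrac{I+F}{2}$; this explains why that operator is the natural symmetric part to split off. To extract the orthogonality I would integrate the displayed identity against the normalized uniform (Haar) measure $\mathrm{d}x$ on the unit sphere of $\mathbb C^d$. Using the standard moment identity $\int \ket{x\otimes x}\bra{x\otimes x}\,\mathrm{d}x = \tfrac{I+F}{d(d+1)}$ — which is nothing but the $2$-design relation \eqref{defsic} for the uniform measure, forced by $U\otimes U$-invariance and fixed by the trace — I obtain
\[ 1 = \int \bra{x\otimes x}T_E\ket{x\otimes x}\,\mathrm{d}x = \Tr\!\left(T_E\,\frac{I+F}{d(d+1)}\right) = \frac{2}{d(d+1)}\left\langle \tfrac{I+F}{2},\,T_E\right\rangle, \]
where $\langle A,B\rangle = \Tr(A^*B)$ is the Hilbert–Schmidt inner product and I used that $\tfrac{I+F}{2}$ is self-adjoint. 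Hence $\langle \tfrac{I+F}{2},T_E\rangle = \tfrac{d(d+1)}{2}$. Setting $T'_E := T_E - \tfrac{I+F}{2}$ and recalling that $\tfrac{I+F}{2}$ is a projector with $\|\tfrac{I+F}{2}\|_{HS}^2 = \Tr\tfrac{I+F}{2} = \tfrac{d(d+1)}{2}$, I conclude
\[ \left\langle \tfrac{I+F}{2},\,T'_E\right\rangle = \left\langle \tfrac{I+F}{2},\,T_E\right\rangle - \left\langle \tfrac{I+F}{2},\,\tfrac{I+F}{2}\right\rangle = \frac{d(d+1)}{2} - \frac{d(d+1)}{2} = 0, \]
which is exactly the claimed decomposition.

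The only delicate points are the conjugation bookkeeping relating $T_E$, $T_E^*$ and the self-adjoint input (handled above), and justifying the moment identity; the latter is standard, and if one prefers to avoid measures it can be replaced by summing the pointwise identity over any spherical $2$-design. I expect the main obstacle to be presentational rather than mathematical, since the averaging argument uses only that the pointwise value $1$ holds on average. A sharper route, which yields $\tfrac{I+F}{2}\,T_E\,\tfrac{I+F}{2} = \tfrac{I+F}{2}$ and hence the orthogonality \emph{a fortiori}, is to run a polarization argument on $\mathrm{Sym}^2(\mathbb C^d)$ in the spirit of Lemma \ref{lem:product-numerical-range}: split the symmetric compression of $T_E$ into its Hermitian and anti-Hermitian parts and use that the rank-one operators $\ket{x\otimes x}\bra{x\otimes x}$ span $\mathcal B(\mathrm{Sym}^2(\mathbb C^d))$.
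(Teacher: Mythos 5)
Your proof is correct for the statement as literally written, but it takes a genuinely different, and strictly weaker, route than the paper's. The paper proves the theorem via Lemma \ref{lem:perfect-tester-T}: writing $\Pi_S=(I+F)/2$ and $\Pi_A=(I-F)/2$, it decomposes $T_E=\Pi_S T_E\Pi_S + T_E'$ and establishes the full compression identity $\Pi_S T_E \Pi_S = \Pi_S$, by using the pointwise constraints $\bra{x\otimes x}T_E\ket{x\otimes x}=1$ for \emph{all} unit vectors $x$ together with the fact that the vectors $x\otimes x$ span the symmetric subspace --- precisely the polarization argument you sketch in your closing remark. Your argument instead integrates the same family of constraints against Haar measure and keeps only their average, which produces the single scalar identity $\Tr\bigl(\tfrac{I+F}{2}T_E\bigr)=\tfrac{d(d+1)}{2}$, i.e.\ Hilbert--Schmidt orthogonality of $T_E':=T_E-\tfrac{I+F}{2}$ to $\tfrac{I+F}{2}$. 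Since that is exactly what the theorem asserts (and your conjugation bookkeeping, the moment identity, and the normalization $\Tr\bigl(\tfrac{I+F}{2}\bigr)=\tfrac{d(d+1)}{2}$ are all handled correctly), the proof is complete; it is also pleasantly short and, as you note, works verbatim with any spherical $2$-design in place of the Haar measure. What is lost is structure: the block identity $\Pi_S T_E\Pi_S=\Pi_S$ is actually \emph{equivalent} to the hypothesis (because the operators $\kb{x\otimes x}{x\otimes x}$ span the operators on $\mathrm{Sym}^2(\mathbb C^d)$), whereas your scalar conclusion is also satisfied by operators such as $\tfrac{I+F}{2}+A$ with $A$ a nonzero traceless operator supported on the symmetric subspace, which cannot arise from an $\mathbb R$-perfect tester. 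So the paper's route extracts the full characterization of test operators of $\mathbb R$-perfect testers --- which is what motivates the family $T_\delta$ in equation \eqref{eq:T_delta} --- while your route trades that information for a one-line averaging argument; upgrading to the ``sharper route'' you mention would essentially reproduce the paper's proof.
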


Let $\mathcal E : S_1^{d,sa} \to \ell_2^{n}$ be such that, for all unit vector $x\in\mathbb C^d$, $\|\mathcal E (\kb{x}{x})\|=1$. As we have already seen before, this means that, for all unit vector $x\in\mathbb C^d$, $\bra{x\otimes x}T_E\ket{x\otimes x}=1$. The statement in Theorem \ref{th:perfect-tester-T} is thus an immediate consequence of Lemma \ref{lem:perfect-tester-T} below.

\begin{lemma} \label{lem:perfect-tester-T}
Let $T$ be an operator on $\mathbb{C}^d\otimes\mathbb{C}^d$ such that, for any unit vector $x\in\mathbb{C}^d$, $\bra{x\otimes x}T\ket{x\otimes x}=1$. Then, 
\[  T=\frac{I+F}{2}+T', \] 
with $T'$ orthogonal to $(I+F)/2$.
\end{lemma}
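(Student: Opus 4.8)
The plan is to compute the Hilbert--Schmidt projection of $T$ onto the line spanned by the symmetric projector $P:=\frac{I+F}{2}$ and to check that this projection equals $P$ exactly, so that the orthogonal remainder is the desired $T'$. Concretely, I would set $T':=T-P$ and reduce the whole statement to the single scalar identity $\langle P,T'\rangle=0$, where $\langle A,B\rangle:=\mathrm{Tr}(A^*B)$ is the Hilbert--Schmidt pairing. Since $P$ is a self-adjoint projector, $\langle P,T'\rangle=\mathrm{Tr}(PT)-\mathrm{Tr}(P^2)=\mathrm{Tr}(PT)-\mathrm{Tr}(P)$, so everything reduces to proving $\mathrm{Tr}(PT)=\mathrm{Tr}(P)=\frac{d(d+1)}{2}$.

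The key point I would exploit is that the hypothesis only sees $T$ through the vectors $\ket{x\otimes x}$, which all lie in the symmetric subspace, and that averaging their rank-one projectors reconstructs $P$. Namely, I would invoke the continuous analogue of \eqref{defsic},
\[ \int \kb{x\otimes x}{x\otimes x}\, d\mu(x) = \frac{I+F}{d(d+1)} = \frac{2}{d(d+1)}\,P, \]
with $\mu$ the uniform probability measure on the unit sphere of $\mathbb C^d$. This identity follows from Schur's lemma (the left-hand side commutes with every $U\otimes U$, hence lies in $\mathrm{span}\{I,F\}$; it is supported on the symmetric subspace; and its trace is $\int\bk{x\otimes x}{x\otimes x}\,d\mu=1$, which fixes the constant), or one may simply average over the $d^2$ points of a spherical $2$-design and quote \eqref{defsic} verbatim.

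The main computation then integrates the hypothesis against $T$: using $\bra{x\otimes x}T\ket{x\otimes x}=1$ for every unit vector $x$,
\[ \frac{2}{d(d+1)}\,\mathrm{Tr}(PT)=\int \bra{x\otimes x}T\ket{x\otimes x}\, d\mu(x)=\int 1\, d\mu(x)=1, \]
whence $\mathrm{Tr}(PT)=\frac{d(d+1)}{2}=\mathrm{Tr}(P)$. This yields $\langle P,T'\rangle=0$, i.e.\ $T'=T-P$ is Hilbert--Schmidt orthogonal to $\frac{I+F}{2}$, which is exactly the assertion.

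I do not expect a genuine analytic obstacle here; the one conceptual point to get right is that the pointwise condition on the numerical range at the coherent vectors $\ket{x\otimes x}$ is much stronger than what the stated conclusion requires, since only its average over $x$ enters the argument. If one wanted the sharper structural fact $PTP=P$ (the compression of $T$ to the symmetric subspace is the identity there), the natural route would mirror Lemma \ref{lem:product-numerical-range}: from $\bra{x\otimes x}(T-P)\ket{x\otimes x}=0$ for all $x$ one concludes $P(T-P)P=0$, using that the projectors $\kb{x\otimes x}{x\otimes x}$ linearly span all operators on the symmetric subspace (equivalently, that the bidegree-$(2,2)$ monomials $\bar x_i\bar x_j\,x_k x_l$ are linearly independent). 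The stated Hilbert--Schmidt orthogonality is an immediate corollary, but the averaging argument already settles the lemma as written.
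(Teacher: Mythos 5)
Your proof is correct and establishes the lemma exactly as stated, but by a genuinely different route from the paper's. The paper performs the block decomposition $T=\Pi_S T\Pi_S+T'$ with respect to $\Pi_S=(I+F)/2$ and $\Pi_A=(I-F)/2$, and uses the pointwise hypothesis together with a spanning argument (in the spirit of Lemma \ref{lem:product-numerical-range}) to conclude the stronger structural fact $\Pi_S T\Pi_S=\Pi_S$, from which the orthogonality of the remainder is immediate. You instead use only the \emph{average} of the hypothesis over the unit sphere: the Schur-lemma identity $\int\kb{x\otimes x}{x\otimes x}\,d\mu(x)=\tfrac{2}{d(d+1)}\Pi_S$ turns it into the scalar identity $\mathrm{Tr}(\Pi_S T)=\mathrm{Tr}(\Pi_S)$, which, as you correctly observe, is equivalent to the stated conclusion since the candidate $T'$ is forced to equal $T-\Pi_S$. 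Your route is shorter and makes transparent that the literal statement needs only the averaged hypothesis; the paper's route buys the sharper conclusion $\Pi_S T\Pi_S=\Pi_S$, which genuinely says more (for instance $T'=\kb{\phi}{\phi}-\kb{\chi}{\chi}$, with $\phi,\chi$ orthogonal unit vectors in the symmetric subspace, is Hilbert--Schmidt orthogonal to $\Pi_S$ yet has nonzero symmetric compression) and is the version that feeds the characterization of $\mathbb R$-perfect testers in Theorem \ref{th:perfect-tester-T} and the discussion of admissible $T'$ following it. Your closing paragraph sketches precisely this stronger argument, and does so carefully: the correct spanning statement is that the projectors $\kb{x\otimes x}{x\otimes x}$ span the operators on the symmetric subspace --- vanishing of a quadratic form on a set of vectors that merely spans the subspace would not suffice. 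One minor caveat: your alternative of averaging over a $d^2$-element spherical $2$-design presupposes the existence of such a design (a SIC POVM) in dimension $d$, which is open in general, so the Haar-integral version should remain the primary argument.
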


\begin{proof}
Decompose $T$ into its symmetric and anti-symmetric parts as
\[  T = \Pi_S T \Pi_S + \Pi_S T \Pi_A + \Pi_A T \Pi_S + \Pi_A T \Pi_A =: \Pi_S T \Pi_S + T' , \]
where $\Pi_S=(1+F)/2$ and $\Pi_A=(1-F)/2$ are the projectors onto the symmetric and anti-symmetric subspaces of $\mathbb{C}^d\otimes\mathbb{C}^d$. Now, for any unit vector $x\in\mathbb{C}^d$, $x\otimes x$ belongs to the symmetric subspace of $\mathbb{C}^d\otimes\mathbb{C}^d$. Hence, $\bra{x\otimes x}T'\ket{x\otimes x}=0$ and $\bra{x\otimes x}\Pi_S\ket{x\otimes x}=1$, so that by assumption $\bra{x\otimes x}\Pi_S T \Pi_S - \Pi_S\ket{x\otimes x}=0$. And since the span of the $x\otimes x$'s is actually the whole symmetric subspace of $\mathbb{C}^d\otimes\mathbb{C}^d$, this means that $\Pi_S T \Pi_S = \Pi_S$, as announced.
\end{proof}

As examples of operators $T'$ satisfying the condition of Theorem \ref{th:perfect-tester-T}, we have any operator supported on the anti-symmetric subspace of $\mathbb{C}^d\otimes\mathbb{C}^d$, in particular any multiple of $(I-F)/2$, the projector on this subspace. This gives the following family of corresponding operators $T$:
\begin{equation} \label{eq:T_delta} T_{\delta} := \frac{I+F}{2} + \delta \frac{I-F}{2} = \frac{1}{2}\left( (1+\delta)I + (1-\delta)F \right). \end{equation}
In order for these to actually be a test operator, we further need to impose that
\[ \forall\ X\in\mathcal M_d(\mathbb C), \quad \mathrm{Tr}(T_{\delta}^* X\otimes X^*) \geq 0 . \]
Now, given $X\in\mathcal M_d(\mathbb C)$,
\[ \mathrm{Tr}(T_{\delta}^* X\otimes X^*) = \frac{1}{2}\left( (1+\delta)|\mathrm{Tr}(X)|^2 + (1-\delta)\mathrm{Tr}(|X|^2) \right) , \]
and the latter quantity is always non negative if and only if $-1 \leq \delta \leq 1$. 

Note that the operator Schmidt rank of $T_{\delta}$, $-1 \leq \delta < 1$, is simply the rank of 
\[ \frac{1}{2}\left((1+\delta)d\kb{\psi}{\psi} + (1-\delta) F\right) , \] i.e.~$d^2$. This means that a tester $\mathcal{E}$ having as associated test operator $T_{\delta}$, $-1 \leq \delta < 1$, needs to be composed of at least $d^2$ operators $\{E_1,\ldots,E_n\}$.

Testers having a test operator of the form described by equation \eqref{eq:T_delta} play a central role in our paper. The two main examples of testers that we are considering, namely the realignment tester $\mathcal R$ and the SIC POVM tester $\mathcal S$, defined in Section \ref{sec:important-examples}, enter in this category. More specifically, we have $T_R=T_{-1}$ and $T_S=T_0$. In Section \ref{sec:symmetric-testers} we explain a general approach to construct testers of this kind.

\section{Construction of testers from symmetric families of operators}\label{sec:symmetric-testers}

In this section we investigate what are the conditions on a set of operators $\{E_k\}_{k=1}^n$ on $\mathbb{C}^d$ so that its associated operator $T_E$ on $\mathbb{C}^d \otimes \mathbb{C}^d$ is a linear combination of $I$ and $F$, i.e.
\[  \sum_{k=1}^n E_k\otimes E_k^* = \alpha F + \beta I . \]
In \cite{appleby} this problem was studied in the case where the $E_k$'s are Hermitian, but the result obtained there easily generalises to the non-Hermitian case. The equivalent of \cite[Theorem 1]{appleby} reads as follows.

\begin{theorem}\label{thequiv}
Let $\{E_k\}_{k=1}^{d^2}$ be a basis of operators on $\mathbb{C}^d$. Then, the following statements are equivalent
\begin{align}
\label{eq1}
& \sum_{k=1}^{d^2} E_k\otimes E_k^*=(\beta+\alpha)\frac{I+F}{2}+(\beta-\alpha)\frac{I-F}{2} = \alpha F + \beta I , \\ 
\label{eq2}
& \forall\ 1\leq k,l\leq d^2, \quad \mathrm{Tr}(E_k^*E_l) = \alpha\delta_{kl}+\gamma\mathrm{Tr}(E_k^*)\mathrm{Tr}(E_l) .
\end{align}
In this case, we have $\alpha>0$, $\alpha+d\beta>0$ and $\gamma=\beta/(\alpha+d\beta)$.
\end{theorem}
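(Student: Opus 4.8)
The plan is to prove the equivalence by translating the operator identity \eqref{eq1} into a statement about the Gram matrix of the family $\{E_k\}$, and then to extract the scalar relations \eqref{eq2} by testing \eqref{eq1} against suitable vectors. The key observation is that $T_E = \sum_k E_k \otimes E_k^*$ encodes all the pairwise overlaps $\mathrm{Tr}(E_k^* E_l)$ once we feed it appropriate product inputs, so the two conditions should be linked by a direct computation rather than by any deep structural argument.

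First I would establish the implication \eqref{eq1} $\Rightarrow$ \eqref{eq2}. Recall from the discussion around equation \eqref{eq:norm-T_E} that for any $X,Y \in \mathcal M_d(\mathbb C)$ one has $\mathrm{Tr}\big((E_k \otimes E_k^*)\, X \otimes Y\big) = \mathrm{Tr}(E_k X)\,\mathrm{Tr}(E_k^* Y)$. Summing over $k$ and using the hypothesis $\sum_k E_k \otimes E_k^* = \alpha F + \beta I$, I would compute $\sum_k \mathrm{Tr}(E_k X)\,\mathrm{Tr}(E_k^* Y)$ against the right-hand side, using that $\mathrm{Tr}(F \cdot X \otimes Y) = \mathrm{Tr}(XY)$ and $\mathrm{Tr}(I \cdot X\otimes Y) = \mathrm{Tr}(X)\mathrm{Tr}(Y)$. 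Choosing $X = E_k^*$ and $Y = E_l$ (so that $\mathrm{Tr}(E_j X) = \mathrm{Tr}(E_j E_k^*)$, etc.) and reorganising would produce a linear system in the overlaps $\mathrm{Tr}(E_k^* E_l)$; the cleanest route, though, is to contract \eqref{eq1} directly with the rank-one tensor $\ket{e_l}\bra{e_k}$ (the vectorisations of $E_l, E_k$), which turns the left side into $\sum_j |\mathrm{Tr}(E_j^* E_l)|\cdots$ — more precisely into the Gram-matrix entries — and the right side into $\alpha\,\mathrm{Tr}(E_k^* E_l) + \beta\,\mathrm{Tr}(E_k^*)\mathrm{Tr}(E_l)$ after using $F\ket{e_l} = \ket{e_l^\top}$ and the identity for $I$. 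Solving for $\mathrm{Tr}(E_k^* E_l)$ yields precisely \eqref{eq2} with $\gamma = \beta/(\alpha + d\beta)$.

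For the converse \eqref{eq2} $\Rightarrow$ \eqref{eq1}, the idea is that both sides of \eqref{eq1} are operators in $\mathcal M_d(\mathbb C) \otimes \mathcal M_d(\mathbb C)$, so it suffices to check equality of their matrix elements $\bra{i j} (\cdot) \ket{k l}$, or equivalently to check that $\sum_k E_k \otimes E_k^*$ has the correct action as a bilinear form. Since $\{E_k\}$ is assumed to be a \emph{basis}, I would exploit that the Gram data in \eqref{eq2} completely determines $\sum_k E_k \otimes E_k^*$: writing an arbitrary test operator in the basis $\{E_k \otimes E_l^*\}$ and using \eqref{eq2} to evaluate the relevant traces, one recovers $\alpha F + \beta I$. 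Concretely, I would verify that $\sum_k E_k \otimes E_k^*$ and $\alpha F + \beta I$ have identical overlaps $\mathrm{Tr}\big(\,\cdot\,(E_a^* \otimes E_b^\top)\big)$ for all $a,b$, which by the basis property forces the two operators to coincide.

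Finally, the sign and positivity constraints $\alpha > 0$ and $\alpha + d\beta > 0$ I would read off from the test-operator normalization (Lemma \ref{lem:T-Theta}) together with the positivity requirement $\mathrm{Tr}(T_\delta^* X \otimes X^*) \ge 0$ already computed in the discussion preceding this section: evaluating \eqref{eq1} on $X = I$ gives $\mathrm{Tr}$-data proportional to $\alpha + d\beta$, and on a traceless $X$ gives data proportional to $\alpha$, and positivity of these quadratic forms forces both to be strictly positive (strictness coming from $\{E_k\}$ being a genuine basis, hence spanning all of $\mathcal M_d(\mathbb C)$). The main obstacle I anticipate is bookkeeping in the converse direction: making sure the basis hypothesis is used correctly so that matching the bilinear form on the spanning set $\{E_a \otimes E_b^*\}$ genuinely pins down the operator, and keeping the partial-transpose/flip conventions consistent between $\mathrm{Tr}(E_k^* E_l)$ and the vectorised $\bra{e_k}F\ket{e_l}$ computations.
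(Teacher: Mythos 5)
The paper itself offers no proof of this theorem --- it cites \cite{appleby} for the Hermitian case and asserts that the generalization is easy --- so your proposal has to stand on its own, and as written both directions gloss over the one step where the basis hypothesis does real work. In the direction \eqref{eq1} $\Rightarrow$ \eqref{eq2}, no contraction of \eqref{eq1} produces the Gram entries \emph{linearly}. Contracting $T_E=\sum_j E_j\otimes E_j^*$ with $\kb{e_l}{e_k}$ gives $\sum_j \mathrm{Tr}(E_k^*E_jE_l\bar E_j)$ on the left (fourth-order traces, not Gram entries) and $\alpha\,\mathrm{Tr}(E_k^*E_l^\top)+\beta\,\mathrm{Tr}(E_k^*E_l)$ on the right: the flip produces a transpose, and the $I$-term gives the Gram entry, not $\mathrm{Tr}(E_k^*)\mathrm{Tr}(E_l)$. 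The right-hand side you wrote down is what comes from contracting instead $\Theta_E=(T_EF)^\Gamma=\sum_j\kb{e_j}{e_j}=\alpha I+\beta\kb{\omega}{\omega}$, with $\omega:=\sum_i\ket{ii}$; but then the left-hand side is $\sum_j\bk{e_k}{e_j}\bk{e_j}{e_l}=(G^2)_{kl}$, where $G$ is the Gram matrix. Either way you land on the \emph{quadratic} matrix equation $G^2=\alpha G+\beta\,\bar t\, t^\top$ (with $t_k:=\mathrm{Tr}(E_k)$), and ``solving for $\mathrm{Tr}(E_k^*E_l)$'' is precisely the missing heart of the proof: one needs invertibility of $G$ (this is where ``basis'' enters) and the fact that $\bar t$ is an eigenvector of $G$ with eigenvalue $\alpha+\beta d$, and only then do \eqref{eq2} and the value $\gamma=\beta/(\alpha+d\beta)$ drop out. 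A clean way to close this gap avoids the quadratic altogether: with $V:=\sum_k\kb{e_k}{k}$, invertible by hypothesis, one has $\Theta_E=VV^*$ and $G=V^*V=V^{-1}\Theta_E V$; since \eqref{eq1} gives $\Theta_E\ket{\omega}=(\alpha+\beta d)\ket{\omega}$, one gets $V^{-1}\ket{\omega}=(\alpha+\beta d)^{-1}V^*\ket{\omega}$, and conjugating $\alpha I+\beta\kb{\omega}{\omega}$ by $V$ yields exactly \eqref{eq2}, with $\alpha>0$ and $\alpha+\beta d>0$ coming from positive definiteness of $VV^*$.

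The converse has a matching gap. First, your test family $\{E_a^*\otimes E_b^\top\}$ cannot work: the overlaps $\mathrm{Tr}(E_k^*E_b^\top)$ are not determined by the data in \eqref{eq2}, since transposition is basis-dependent; one should use $\{E_a^*\otimes E_b\}$, which is also a basis of $\mathcal M_d(\mathbb C)\otimes\mathcal M_d(\mathbb C)$. Second, matching overlaps of $\sum_kE_k\otimes E_k^*$ and $\alpha F+\beta I$ against this family amounts to verifying $(G^2)_{ab}=\alpha G_{ab}+\beta\,\bar t_a t_b$, and under \eqref{eq2}, i.e.\ $G=\alpha I+\gamma\,\bar t t^\top$, this identity is \emph{not} automatic: it is equivalent to $\|t\|^2=\alpha d/(1-\gamma d)=d(\alpha+d\beta)$. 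This normalization does follow from \eqref{eq2} together with the basis hypothesis --- expand $I=\sum_kc_kE_k$, observe that $Gc=\bar t$ and $t^\top c=\mathrm{Tr}(I)=d$, and solve (this is also where one sees $\alpha\neq0$ and $\gamma d\neq1$) --- but it is an indispensable step your proposal never performs, and without it the overlap-matching argument does not close. Finally, your argument for the sign conditions is right in substance, namely positivity and nondegeneracy of $X\mapsto\sum_k|\mathrm{Tr}(E_k^*X)|^2=\alpha\|X\|_2^2+\beta|\mathrm{Tr}X|^2$ evaluated on $I$ and on traceless $X$, but it has nothing to do with tester normalization or Lemma \ref{lem:T-Theta}: no normalization of the family $\{E_k\}$ is assumed in this theorem.
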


Explicitly, the constants $\alpha$ and $\beta$ are thus given by the following formulas
 \begin{align*}
 \alpha & = \frac{1}{d^3-d}\left( d\sum_{k=1}^{d^2}\mathrm{Tr}(E_k^*E_k) - \sum_{k=1}^{d^2}\mathrm{Tr}(E_k^*)\mathrm{Tr}(E_k) \right) ,\\
 \beta & = \frac{1}{d^3-d} \left( -\sum_{k=1}^{d^2} \mathrm{Tr}(E_k^*E_k) + d\sum_{k=1}^{d^2} \mathrm{Tr}(E_k^*)\mathrm{Tr}(E_k) \right) .
\end{align*}

A family of operators $\{E_k\}_{k=1}^{d^2}$ satisfying the equivalent conditions given in Theorem \ref{thequiv} can be seen as a generalized minimal $2$-design. As we will show next, this framework actually encompasses the three examples that we mentioned in Section \ref{sec:important-examples}. 

In the case of the realignment map $\mathcal R$, where the associated set of operators is the basis of matrix units $R=\{R_{ij}\}_{i,j=1}^d=\{|i\rangle\langle j|\}_{i,j=1}^d$, we have 
\begin{align*}
& \forall\ 1\leq i,j\leq d, \quad \mathrm{Tr}(R_{ij})=\delta_{ij} \, , \\
& \forall\ 1\leq i,j,k,l\leq d, \quad \mathrm{Tr}(R_{ij}^* R_{lk})=\delta_{il}\delta_{jk}  \, . 
\end{align*}
Consequently, the parameters from Theorem \ref{thequiv} are $\alpha=1$ and $\beta=\gamma=0$, so that $T_R=F$.

In the case of the map $\mathcal{G}$, where the associated set of operators is a canonical matrix basis $G=\{G_{k}\}_{k=1}^{d^2}$, we have
\begin{align*}
& \mathrm{Tr}(G_1)=\sqrt{d} \quad \text{and} \quad \forall\ 2\leq k\leq d^2, \quad \mathrm{Tr}(G_k)=0 \, , \\
& \forall\ 1\leq k,l\leq d^2, \quad \mathrm{Tr}(G_k^*G_l)=\delta_{kl} \, .
\end{align*}
Consequently, as in the previous example, the parameters from Theorem \ref{thequiv} are $\alpha=1$ and $\beta=\gamma=0$, so that $T_G=F$.

In the case of the SIC POVM map $\mathcal S$,  where the associated set of operators is a renormalized symmetric $2$-design $S=\{S_k\}_{k=1}^{d^2}=\{\sigma |x_k\rangle\langle x_k|\}_{k=1}^{d^2}$, with $\sigma=\sqrt{(d+1)/(2d)}$, we have 
\begin{align*}
& \forall\ 1\leq k\leq d^2, \quad \mathrm{Tr}(S_k)=\sqrt{\frac{d+1}{2d}} \, , \\
& \forall\ 1\leq k,l\leq d^2, \quad \mathrm{Tr}(S_k^*S_l)=\frac{d+1}{2d} \left( \delta_{kl} + \frac{1}{d+1}(1-\delta_{kl}) \right) \, .
\end{align*}
Consequently, the parameters from Theorem \ref{thequiv} are $\alpha=\beta=1/2$ and $\gamma=1/{(d+1)}$, so that $T_S=(I+F)/2$ .

The latter example actually generalizes the situation where the $E_k$'s are coming from a so-called \emph{non-degenerate symmetric family of operators}, as defined in \cite{jivulescu2017symmetric}. This means that 
\begin{align}\nonumber
& \forall\ 1\leq k\leq d^2,\ \mathrm{Tr}(E_k)= t , \\\label{sym-fam}
& \forall\ 1\leq k,l\leq d^2,\ \mathrm{Tr}(E_k^*E_l)=a\delta_{kl}+b(1-\delta_{kl}) .
\end{align} 
In this case, equation \eqref{eq2} holds with $\alpha=a-b$ and $\gamma=b/|t|^2$, so that equation \eqref{eq1} holds with $\alpha=a-b$ and $\beta=(a-b)b/(|t|^2-db)$. Note that the operators composing the SIC POVM map $\mathcal S$ do satisfy equations \eqref{sym-fam}. 

In \cite{appleby2016} sets of operators satisfying equations \eqref{sym-fam} are studied as well. There, only Hermitian families are considered, but with any number $n$ (not necessarily equal to $d^2$) of operators. Such families are dubbed \emph{conical $2$-designs}. And it is shown that the class of conical $2$-designs includes several sub-classes of operators which are relevant for quantum information, such as arbitrary rank symmetric informationally complete measurements (SIMs) or full sets of arbitrary rank mutually unbiased measurements (MUMs).

\begin{lemma} \label{lem:norm-symmetric}
    Let $\mathcal E:S_1^d \to \ell_2^n$ be a $\mathbb C$-linear map such that $T_{E} = \alpha F + \beta I$, with $\alpha\geq 0$ and $\beta\geq -\alpha/d$. Then 
    $$\|\mathcal E\|_{S_1^d \to \ell_2^n} =  \begin{cases} \sqrt{\alpha + \beta} \quad \text{if } \beta\geq 0 \\
    \sqrt{\alpha} \quad \text{if } \beta< 0 \end{cases}.$$
\end{lemma}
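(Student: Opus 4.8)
The plan is to reduce the operator norm to a one-parameter optimization over rank-one matrices, governed entirely by the test operator $T_E$. First I would recall from the computation preceding \eqref{eq:norm-T_E} that any $\mathbb C$-linear map of this form satisfies $\|\mathcal E(X)\|_2^2 = \Tr(T_E^* \, X\otimes X^*)$, so that
$$\|\mathcal E\|_{S_1^d \to \ell_2^n}^2 = \max_{\|X\|_1 \le 1} \Tr(T_E^*\, X\otimes X^*).$$
Since $\alpha,\beta$ are real and $F,I$ are self-adjoint, we have $T_E^* = T_E = \alpha F + \beta I$. The objective $X \mapsto \|\mathcal E(X)\|_2^2$ is convex (being the square of a seminorm), so its maximum over the compact convex ball $\{\|X\|_1 \le 1\}$ is attained at an extreme point; exactly as in the proof of Lemma \ref{lem:T-Theta}, these are the rank-one operators $\kb{x}{y}$ with $x,y$ unit vectors in $\mathbb C^d$. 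It therefore suffices to maximize over such $X$.

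Next I would evaluate the quadratic form on $X=\kb{x}{y}$, using the two elementary trace identities $\Tr\bigl(F\,(A\otimes B)\bigr)=\Tr(AB)$ and $\Tr\bigl(I\,(A\otimes B)\bigr)=\Tr(A)\Tr(B)$. With $X^*=\kb{y}{x}$ and $\|x\|_2=\|y\|_2=1$, this gives $\Tr(F\, X\otimes X^*)=\Tr(XX^*)=1$ and $\Tr(I\, X\otimes X^*)=\Tr(X)\Tr(X^*)=|\bk{y}{x}|^2$, hence
$$\Tr(T_E^*\, X\otimes X^*) = \alpha + \beta\,|\bk{y}{x}|^2.$$
The problem thus collapses to optimizing $|\bk{y}{x}|^2$, which ranges over the full interval $[0,1]$ as $x,y$ vary over unit vectors.

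Finally I would split on the sign of $\beta$. When $\beta \ge 0$ the expression is maximized by taking $|\bk{y}{x}|^2$ as large as possible, i.e.\ $y=x$, yielding $\alpha+\beta$; when $\beta<0$ it is maximized by taking $|\bk{y}{x}|^2=0$, i.e.\ $y\perp x$ (possible since $d\ge 2$), yielding $\alpha$. Taking square roots gives the stated formula. I expect the only points needing care to be the reduction to rank-one extreme points and the sign-dependent choice of the optimizing pair $(x,y)$; the trace computations themselves are routine. I would also note that the hypothesis $\beta\ge-\alpha/d$ is not used in this optimization: via $|\Tr X|^2\le d\,\|X\|_2^2$ it serves only to guarantee that the form $\alpha\|X\|_2^2+\beta|\Tr X|^2$ is non-negative, ensuring $T_E$ is a legitimate test operator and that the quantity under the square root is non-negative in every case.
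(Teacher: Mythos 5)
Your proof is correct and takes essentially the same route as the paper's: both reduce to maximizing $\Tr\left(T_E^*\, X\otimes X^*\right) = \alpha\|X\|_2^2 + \beta\,|\Tr X|^2$ over the unit ball of $S_1^d$ and perform the same sign-of-$\beta$ case split, with the same extremizers (rank-one projections when $\beta\geq 0$, operators $\kb{x}{y}$ with $x\perp y$ when $\beta<0$). The only cosmetic difference is that you pass to extreme points of the $S_1^d$ ball by convexity before evaluating, whereas the paper bounds the form for general $X$ via $\|X\|_2\leq\|X\|_1$ and $|\Tr X|\leq\|X\|_1$ (resp.\ $\beta\,|\Tr X|^2\leq 0$); your closing remark on the role of the hypothesis $\beta\geq-\alpha/d$ is also accurate.
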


\begin{proof}
    We have
    $$\|\mathcal E\|_{S_1^d \to \ell_2^n}^2 = \sup_{\|X\|_1 \leq 1} \mathrm{Tr}( T_{E}^* X\otimes X^* ) = \sup_{\|X\|_1 \leq 1} \alpha \|X\|_2^2 + \beta |\Tr X|^2 .$$
    If $\beta \geq 0$, the above supremum is equal to $\alpha+\beta$. Indeed, for any $X$, $\|X\|_2 \leq \|X\|_1$ and $|\Tr X| \leq \|X\|_1$, with both inequalities being saturated by rank one projections. While if $\beta<0$, the above supremum is equal to $\alpha$. This is because, for any $X$, $\|X\|_2 \leq \|X\|_1$ and $|\Tr X| \geq 0$, with both inequalities being saturated by rank one operators of the form $\ket{x}\bra{y}$ for orthogonal unit vectors $x,y$.
\end{proof}

As a consequence of Lemma \ref{lem:norm-symmetric} we have that, if $\alpha,\beta\geq 0$ and $\alpha+\beta=1$, or $\alpha=1$ and $-1/d\leq \beta<0$, then any $\mathbb C$-linear map $\mathcal E:S_1^d \to \ell_2^n$ such that $T_{E} = \alpha F + \beta I$ is a tester. Conversely, if $T=\alpha F + \beta I$ with $\alpha,\beta$ satisfying the above conditions, then we can exhibit operators $\{E_k\}_{k=1}^{d^2}$ such that $T$ is the test operator associated to the corresponding map $\mathcal E$, i.e.~ such that $T=\sum_{k=1}^{d^2} E_k\otimes E_k^*$. The construction follows the strategy described in Section \ref{sec:test-operator}. Set $\Theta=(TF)^{\Gamma}$, i.e.
$$ \Theta=\alpha I+\beta d\kb{\psi}{\psi} . $$
$\Theta$ can be diagonalized as
$$ \Theta = (\alpha+\beta d)\kb{x_1}{x_1} + \alpha\kb{x_2}{x_2} + \cdots + \alpha\kb{x_{d^2}}{x_{d^2}} , $$
where $x_1=\psi$ and $x_2,\ldots,x_{d^2}$ are such that $\{x_1,\ldots,x_{d^2}\}$ forms an orthonormal basis of $\mathbb C^d \otimes \mathbb C^d$. Defining $\ket{e_1}=\sqrt{\alpha+\beta d}\ket{x_1}$ and $\ket{e_k}=\sqrt{\alpha}\ket{x_k}$ for $2\leq k\leq d^2$, we then have
$$ T=\sum_{k=1}^{d^2} E_k\otimes E_k^* , $$
where the $E_k$'s are the matrix versions of the $e_k$'s. Concretely, this means that $E_1=\sqrt{\alpha+\beta d}G_1$ and $E_k=\sqrt{\alpha}G_k$ for $2\leq k\leq d^2$, with $\{G_1,\ldots,G_{d^2}\}$ a canonical basis of $\mathcal M_d(\mathbb C)$ (i.e.~$G_1$ is a multiple of the identity and $G_2,\ldots,G_{d^2}$ are traceless). It is interesting to note that, in the case $\alpha=1,\beta=0$ this canonical construction yields the map $\mathcal G$, while in the case $\alpha=\beta=1/2$ it provides an alternative map having the same test operator as the map $\mathcal S$ (i.e.~being an equivalent tester to $\mathcal S$).

\section{Entanglement detection of bipartite pure states by symmetric testers}
\label{sec:pure-states}

In this section we compute the norm of the action of a given symmetric tester, as defined in Section \ref{sec:symmetric-testers}, on an arbitrary pure bipartite quantum state. We focus next on the realignment and SIC POVM testers, establishing an equality between the corresponding norms which was conjectured in \cite{shang2018enhanced}.

Let $\mathcal{E}:S_1^d\to\ell_2^n$ be a linear map, defined by
$$ \mathcal E: X\in \mathcal{M}_d(\mathbb{C}) \mapsto \sum_{k=1}^n \mathrm{Tr}(E_k^* X) \ket{k} \in \mathbb{C}^n . $$
Denote by $T_E$ its associated test operator, as defined by equation \eqref{eq:T_E}, which we assume to be symmetric, in the sense of equation \eqref{sym-fam}, i.e.
\[ T_E = \alpha F+\beta I ,\]
for some parameters $\alpha\geq 0$ and $\beta\geq-\alpha/d$. Note that, for now, we do not ask that $\mathcal E$ should be normalized to be an entanglement tester. 

Consider an arbitrary bipartite unit vector $\varphi\in\mathbb{C}^d \otimes \mathbb{C}^d$, with Schmidt decomposition \begin{equation}\label{eq:Schmidt-decomp-phi} \ket{\varphi}=\sum_{i=1}^r\sqrt{\lambda_i}\ket{e_if_i},
\end{equation}
where $\lambda_1,\ldots,\lambda_r>0$ are such that $\sum_{i=1}^r \lambda_i=1$ and $\{e_1,\ldots,e_r\}$, $\{f_1,\ldots,f_r\}$ are orthonormal families in $\mathbb{C}^d$.

\begin{proposition}
	Let $\mathcal E:S_1^d\to\ell_2^n$ be a linear map as above, which is symmetric in the sense of equation \eqref{sym-fam} with corresponding parameters $(\alpha, \beta)$. Then, for any bipartite unit vector $\varphi\in\mathbb{C}^d \otimes \mathbb{C}^d$ as above, we have
	$$\left\| \mathcal{E}^{\otimes 2}(\kb{\varphi}{\varphi}) \right\|_1 = \alpha + \beta + 2\alpha \sum_{i<j} \sqrt{\lambda_i \lambda_j}.$$
\end{proposition}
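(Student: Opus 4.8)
The plan is to reduce the projective norm to a single Schatten-$1$ norm and then exploit the rigid Gram structure forced by the hypothesis $T_E = \alpha F + \beta I$. First I would invoke the identification from Section \ref{sec:interest}: setting $u_{ij} := \mathcal E(\kb{e_i}{e_j})$ and $v_{ij} := \mathcal E(\kb{f_i}{f_j})$ in $\mathbb C^n$ and expanding $\kb{\varphi}{\varphi} = \sum_{i,j}\sqrt{\lambda_i\lambda_j}\,\kb{e_i}{e_j}\otimes\kb{f_i}{f_j}$, the output $\mathcal E^{\otimes 2}(\kb{\varphi}{\varphi}) = \sum_{i,j}\sqrt{\lambda_i\lambda_j}\,u_{ij}\otimes v_{ij}$ matricizes to
$$ M := \sum_{i,j=1}^r \sqrt{\lambda_i\lambda_j}\, u_{ij} v_{ij}^\top \in \mathcal M_n(\mathbb C), $$
whose $S_1^n$ norm is precisely the quantity $\|\mathcal E^{\otimes 2}(\kb{\varphi}{\varphi})\|_1$ to be computed.

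The heart of the argument is the Gram data of the vectors $u_{ij}$ and $v_{ij}$. Using $T_E = \alpha F + \beta I$ one obtains the identity
$$ \langle \mathcal E(\kb{a}{b}), \mathcal E(\kb{c}{d})\rangle = \bra{a\otimes d} T_E \ket{b\otimes c} = \alpha\langle a|c\rangle\langle d|b\rangle + \beta\langle a|b\rangle\langle d|c\rangle. $$
Applied to the orthonormal Schmidt families this gives $\langle u_{ij}, u_{kl}\rangle = \alpha\delta_{ik}\delta_{jl} + \beta\delta_{ij}\delta_{kl}$, and the conjugate vectors $\bar v_{ij}$ have the same (real) Gram matrix. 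Indexing by pairs, both Gram matrices coincide with $\mathcal G$, where $\mathcal G_{(ij),(kl)} = \alpha\delta_{ik}\delta_{jl} + \beta\delta_{ij}\delta_{kl}$; note this is $\alpha$ times the identity plus $\beta$ times a rank-one term supported on the diagonal pairs $(i,i)$. I would then write $M = A D B^*$ with $D = \mathrm{diag}(\sqrt{\lambda_i\lambda_j})$ and $A^*A = B^*B = \mathcal G$ (realizing the $u_{ij}$ and $\bar v_{ij}$ as the columns of $A$ and $B$), so that the nonzero squared singular values of $M$ are the nonzero eigenvalues of $\mathcal G\, D\,\mathcal G\, D = (\mathcal G D)^2$.

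The one delicate point, which I expect to be the main obstacle, is that for $\beta\neq 0$ the vectors $u_{ii}$ are \emph{not} orthogonal, so $M$ is not already in singular-value form and one cannot just read off the singular values. The resolution is that $\mathcal G D$ is a product of two positive semidefinite matrices, hence similar to $\mathcal G^{1/2} D \mathcal G^{1/2}\geq 0$ and so has nonnegative spectrum; consequently the nonzero singular values of $M$ are exactly the nonzero eigenvalues of $\mathcal G D$, giving $\|M\|_1 = \mathrm{Tr}(\mathcal G D)$. Evaluating the trace,
$$ \mathrm{Tr}(\mathcal G D) = \sum_{i,j}\sqrt{\lambda_i\lambda_j}\,(\alpha + \beta\delta_{ij}) = \alpha\Big(\sum_i \sqrt{\lambda_i}\Big)^2 + \beta\sum_i \lambda_i, $$
and using $\sum_i\lambda_i = 1$ together with $\big(\sum_i\sqrt{\lambda_i}\big)^2 = 1 + 2\sum_{i<j}\sqrt{\lambda_i\lambda_j}$ yields $\alpha + \beta + 2\alpha\sum_{i<j}\sqrt{\lambda_i\lambda_j}$, as claimed.

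As a sanity check and an alternative route, I would note that one can instead split $M$ into its diagonal ($i=j$) and off-diagonal ($i\neq j$) parts: the Gram relations show these two parts have mutually orthogonal column and row spaces, so the trace norm is additive over them. The off-diagonal part is then already in singular-value form, with singular values $\alpha\sqrt{\lambda_i\lambda_j}$ summing to $2\alpha\sum_{i<j}\sqrt{\lambda_i\lambda_j}$, while the diagonal part is handled by the same positive-semidefinite-product argument and contributes $\alpha+\beta$, reproducing the same total.
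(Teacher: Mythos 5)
Your proof is correct. It starts exactly as the paper's does --- the same matricization $M=\sum_{i,j}\sqrt{\lambda_i\lambda_j}\,\kb{u_{ij}}{\bar v_{ij}}$ and the same Gram relations $\bk{u_{ij}}{u_{kl}}=\alpha\delta_{ik}\delta_{jl}+\beta\delta_{ij}\delta_{kl}$ extracted from $T_E=\alpha F+\beta I$ --- but the two arguments finish differently. The paper first treats the special case $f_i=\bar e_i$, where $\bar v_{ij}=u_{ij}$ makes $M$ positive semidefinite, so that $\|M\|_1=\Tr M=\sum_{i,j}\sqrt{\lambda_i\lambda_j}\,(\alpha+\beta\delta_{ij})$; it then reduces the general case to this one by noting that $\{u_{ij}\}$ and $\{\bar v_{ij}\}$ have the same (real) Gram matrix, hence are related by a unitary $W_E$, and invoking unitary invariance of $\|\cdot\|_1$. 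You instead extract the singular values in one stroke from the factorization $M=ADB^*$ with $A^*A=B^*B=\mathcal G$: the nonzero singular values of $M$ are the nonzero eigenvalues of $\mathcal G^{1/2}D\mathcal G^{1/2}$, equivalently of $\mathcal G D$ (which has nonnegative spectrum as a product of two positive semidefinite matrices), whence $\|M\|_1=\Tr(\mathcal G D)$; your handling of the ``delicate point'' that the $u_{ii}$ fail to be orthogonal when $\beta\neq 0$ is exactly right. Both proofs rest on the same underlying principle --- the trace norm of $\sum_k c_k\kb{x_k}{y_k}$ depends only on the coefficients and the two Gram matrices --- but the paper implements it via positivity plus unitary invariance, which is lighter on linear algebra, while your implementation uses the $XY$ versus $YX$ spectral identity and yields as a by-product the full singular spectrum of $\mathcal E^{\otimes 2}(\kb{\varphi}{\varphi})$, not merely its trace norm. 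Your alternative block-splitting check (diagonal versus off-diagonal index pairs, which have mutually orthogonal column and row spaces) is also valid, and is in fact the variant closest in spirit to the paper's evaluation of the trace in the positive semidefinite case.
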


\begin{proof}
Start from the Schmidt decomposition \eqref{eq:Schmidt-decomp-phi}, and set $\ket{u_{ij}}:= \sum_{k=1}^n  \bra{e_j}E_k^*\ket{e_i} \ket{k}$, $\ket{v_{ij}}:= \sum_{k=1}^n \bra{f_j}E_k^*\ket{f_i} \ket{k}$, $1\leq i,j\leq r$. We then have, viewing $\mathcal{E}^{\otimes 2}(\kb{\varphi}{\varphi})$ as belonging to $\mathcal M_n(\mathbb C)$ rather than $\mathbb C^n\otimes \mathbb C^n$,
\[ \mathcal{E}^{\otimes 2}(\kb{\varphi}{\varphi}) = \sum_{i,j=1}^{r} \sqrt{\lambda_i\lambda_j} \ket{u_{ij}}\bra{\bar{v}_{ij}}
.\]

Let us begin with considering the situation where the $f_i$'s are equal to the $\bar{e}_i$'s in \eqref{eq:Schmidt-decomp-phi}. This implies that the $\bar{v}_{ij}$'s are equal to the $u_{ij}$'s, and therefore that $\mathcal{E}^{\otimes 2}(\kb{\varphi}{\varphi})$ is positive semidefinite. Hence,
\[ \left\| \mathcal{E}^{\otimes 2}(\kb{\varphi}{\varphi}) \right\|_{1} = \operatorname{Tr} \left( \mathcal{E}^{\otimes 2}(\kb{\varphi}{\varphi}) \right)= \sum_{i,j=1}^{r} \sqrt{\lambda_i\lambda_j} \|u_{ij}\|^2.\]
Observe next that, for all $1\leq i,j\leq r$, the symmetry of the map $\mathcal E$ implies that
\begin{equation}\label{eq:Gram-u}
\bk{u_{ij}}{u_{i'j'}} = \bra{e_ie_{j'}}T_E\ket{e_je_{i'}} = \alpha \delta_{ii'}\delta_{jj'} + \beta \delta_{ij} \delta_{i'j'}.
\end{equation}
Therefore, $\|u_{ij}\|^2 = \alpha + \beta \delta_{ij}$ and the conclusion follows for the special case where $u_{ij} = v_{ij}$. 

To conclude, we are just left with understanding what happens when the $f_i$'s are not equal to the $\bar{e}_i$'s. Note that the vectors $v_{ij}$ also satisfy equation \eqref{eq:Gram-u}, so the families $\{u_{ij}\}_{i,j=1}^r$ and $\{v_{ij}\}_{i,j=1}^r$ have the same Gram matrix. Thus, there exists a unitary operator $W_E$ on $\mathbb{C}^n$ mapping the $u_{ij}$'s to the $\bar{v}_{ij}$'s. Hence, using the invariance of $\|\cdot\|_1$ under multiplication by a unitary,
\[ \left\| \sum_{i,j=1}^{r} \sqrt{\lambda_i\lambda_j} \kb{u_{ij}}{\bar{v}_{ij}} \right\|_1 = \left\| \left( \sum_{i,j=1}^{r} \sqrt{\lambda_i\lambda_j} \kb{u_{ij}}{u_{ij}} \right) W_E^* \right\|_1 = \left\| \sum_{i,j=1}^{r} \sqrt{\lambda_i\lambda_j} \kb{u_{ij}}{u_{ij}} \right\|_1,\]
finishing the proof.
\end{proof}

Let us consider now the special cases of the realignment tester $\mathcal R$ and the SIC POVM tester $\mathcal S$. These maps are symmetric testers, with respective parameters $(\alpha, \beta) = (1,0)$ and $(\alpha, \beta) = (1/2, 1/2)$. In both cases, $\alpha+\beta=1$ and $\alpha > 0$. Hence, the tester provides a necessary and sufficient condition for separability of bipartite pure states, namely: for any bipartite pure state $\varphi$, $\left\| \mathcal{E}^{\otimes 2}(\kb{\varphi}{\varphi}) \right\|_1\leq 1$ if and only if $\varphi$ is separable. More precisely, we have 
\begin{align*}
\left\| \mathcal{R}^{\otimes 2}(\kb{\varphi}{\varphi}) \right\|_1 &= \sum_{1\leq i,j\leq r} \sqrt{\lambda_i\lambda_j} = 1 + \sum_{1\leq i\neq j\leq r} \sqrt{\lambda_i\lambda_j}, \\
\left\| \mathcal{S}^{\otimes 2}(\kb{\varphi}{\varphi}) \right\|_1 &= \sum_{1 \leq i \leq j \leq r} \sqrt{\lambda_i \lambda_j} = 1+\sum_{1 \leq i < j \leq r} \sqrt{\lambda_i \lambda_j}.
\end{align*}
Note that we also know from \cite[Proof of Theorem 0.1]{palazuelos2014largest} that
$$ \| \kb{\varphi}{\varphi}\|_{S_1^d \otimes_\pi S_1^d} = \sum_{1\leq i,j\leq r} \sqrt{\lambda_i\lambda_j} ,$$
so that we actually have
$$\left\| \mathcal{R}^{\otimes 2}(\kb{\varphi}{\varphi}) \right\|_1 = \| \kb{\varphi}{\varphi}\|_{S_1^d \otimes_\pi S_1^d}.$$

Moreover, the realignment map is always `better' than the SIC POVM map on pure states, in the sense that: for any pure state $\varphi$,
\[ \left\| \mathcal{R}^{\otimes 2}(\kb{\varphi}{\varphi}) \right\|_1 \geq \left\| \mathcal{S}^{\otimes 2}(\kb{\varphi}{\varphi}) \right\|_1,\]
with strict inequality as soon as $\varphi$ is entangled. More precisely: for any pure state $\varphi$,
\begin{equation}\label{eq:pure-states}
    \left\| \mathcal{R}^{\otimes 2}(\kb{\varphi}{\varphi}) \right\|_1 -1 = 2\left( \left\| \mathcal{S}^{\otimes 2}(\kb{\varphi}{\varphi}) \right\|_1 -1 \right).
\end{equation}
This proves the conjectured equality in \cite[equation (21)]{shang2018enhanced}.

\section{Entanglement detection of bipartite isotropic and Werner states by the realignment and the SIC POVM testers}
\label{sec:symmetric-states}

The goal here is to determine when the realignment and SIC POVM testers detect the entanglement of isotropic and Werner states. These are defined, respectively, as
\[ \tau_{\mu} := \mu\kb{\psi}{\psi} + (1-\mu)\frac{I}{d^2} \, , \ 0\leq\mu\leq 1  , \]
\[ \sigma_{\mu} := \mu\frac{I+F}{d(d+1)} + (1-\mu)\frac{I-F}{d(d-1)} \, , \ 0\leq\mu\leq 1  . \]
For that, let us start with computing the action of $\mathcal{R}^{\otimes 2}$ and $\mathcal{S}^{\otimes 2}$ on $I$, $\kb{\psi}{\psi}$ and $F$.

First of all,
\[ \mathcal{R}^{\otimes 2}(I) = \sum_{i,j,k,l=1}^d \mathrm{Tr}(\kb{jl}{ik}) \kb{ij}{kl} = \sum_{i,k=1}^d \kb{ii}{kk} = d \kb{\psi}{\psi} , \]
\[ \mathcal{S}^{\otimes 2}(I) = \frac{d+1}{2d} \sum_{i,j=1}^{d^2} \mathrm{Tr}(\kb{x_ix_j}{x_ix_j}) \kb{i}{j} = \frac{d+1}{2d} \sum_{i,j=1}^{d^2} \kb{i}{j} = \frac{d+1}{2d} J . \] 
Then observe that, for any operators $A,B$ on $\mathbb{C}^d$, $\mathrm{Tr}(A\otimes B \kb{\psi}{\psi})= \mathrm{Tr}(AB^\top)/d$. Hence,
\[ \mathcal{R}^{\otimes 2}(\kb{\psi}{\psi}) = \frac{1}{d} \sum_{i,j,k,l=1}^d \mathrm{Tr}(\kb{j}{i}\kb{k}{l}) \kb{ij}{kl} = \frac{1}{d} \sum_{i,j=1}^d \kb{ij}{ij} = \frac{I}{d} , \]
\begin{align*} 
\mathcal{S}^{\otimes 2}(\kb{\psi}{\psi}) & = \frac{d+1}{2d^2} \sum_{i,j=1}^{d^2} \mathrm{Tr}(\kb{x_i}{x_i}\kb{x_j}{x_j}) \kb{i}{j} \\
& =  \frac{d+1}{2d^2} \left( \left(1-\frac{1}{d+1}\right) \sum_{i=1}^{d^2} \kb{i}{i} + \frac{1}{d+1} \sum_{i,j=1}^{d^2} \kb{i}{j} \right) \\
& = \frac{1}{2d} \left( I + \frac{1}{d} J \right)  . 
\end{align*}
Finally observe that, for any operators $A,B$ on $\mathbb{C}^d$, $\mathrm{Tr}(A\otimes B F)= \mathrm{Tr}(AB)$. Hence,
\[ \mathcal{R}^{\otimes 2}(F) = \sum_{i,j,k,l=1}^d \mathrm{Tr}(\kb{j}{i}\kb{l}{k}) \kb{ij}{kl} = \sum_{i,j=1}^d \kb{ij}{ji} = F , \]
\[ \mathcal{S}^{\otimes 2}(F) = \frac{d+1}{2d} \sum_{i,j=1}^{d^2} \mathrm{Tr}(\kb{x_i}{x_i}\kb{x_j}{x_j}) \kb{i}{j} = \frac{1}{2} \left( I + \frac{1}{d} J \right) . \]

\subsection{Isotropic states} \hfill\smallskip

With these preliminary computations at hand, let us start with understanding the entanglement detection of isotropic states. For any $0\leq \mu\leq 1$, we have
\begin{align*} 
& \mathcal{R}^{\otimes 2}(\tau_{\mu}) = \frac{1}{d} \left(\mu I + (1-\mu)\kb{\psi}{\psi} \right) , \\
& \mathcal{S}^{\otimes 2}(\tau_{\mu}) = \frac{1}{2d} \left(\mu I + \frac{d+1-\mu}{d^2}J \right) . 
\end{align*}
Since $\mathcal{R}^{\otimes 2}(\tau_{\mu})$ and $\mathcal{S}^{\otimes 2}(\tau_{\mu})$ are positive semidefinite, we then simply have
\begin{align*}
& \left\| \mathcal{R}^{\otimes 2}(\tau_{\mu}) \right\|_1 = \mathrm{Tr}\left(\mathcal{R}^{\otimes 2}(\tau_{\mu})\right) = \frac{1}{d}((d^2-1)\mu + 1) , \\
& \left\| \mathcal{S}^{\otimes 2}(\tau_{\mu}) \right\|_1 = \mathrm{Tr}\left(\mathcal{S}^{\otimes 2}(\tau_{\mu})\right) = \frac{d+1}{2d}((d-1)\mu +1) .
\end{align*}
Hence,
\[ \left\| \mathcal{R}^{\otimes 2}(\tau_{\mu}) \right\|_1 >1 \ \Longleftrightarrow \ \left\| \mathcal{S}^{\otimes 2}(\tau_{\mu}) \right\|_1 >1 \ \Longleftrightarrow \ \mu>\frac{1}{d+1} . \]
As a comparison, we know that we also have $\tau_{\mu}$ entangled iff $\mu>1/(d+1)$. So both the realignment and the SIC POVM maps detect all entangled isotropic states.

What is more, we know from \cite[Theorem 11]{rudolph2005further} that
\[ \|\tau_{\mu}\|_{S_1^d\otimes_{\pi} S_1^d} = \begin{cases} 1 \text{ if } \mu\leq 1/(d+1) \\
((d^2-1)\mu +1)/d \text{ if } \mu> 1/(d+1) \end{cases} .\]
So we actually have that, for any entangled isotropic state $\tau_{\mu}$,
\[  \left\| \mathcal{R}^{\otimes 2}(\tau_{\mu}) \right\|_1 = \|\tau_{\mu}\|_{S_1^d\otimes_{\pi} S_1^d} .\]

\subsection{Werner states} \hfill\smallskip

Let us now turn to understanding the entanglement detection of Werner states. For any $0\leq \mu\leq 1$, we have
\begin{align*} 
\mathcal{R}^{\otimes 2}(\sigma_{\mu}) & = \frac{\mu}{d(d+1)}(d\kb{\psi}{\psi}+F) + \frac{1-\mu}{d(d-1)}(d\kb{\psi}{\psi}-F) \\
& = \frac{d+1-2\mu}{d^2-1}\kb{\psi}{\psi} + \frac{2\mu d-d-1}{d(d^2-1)} F \, , \\
\mathcal{S}^{\otimes 2}(\sigma_{\mu}) & = \frac{\mu}{2d(d+1)}\left(\frac{d+2}{d}J+I\right) + \frac{1-\mu}{2d(d-1)}(J-I) \\
& = \frac{2\mu d-d-1}{2d(d^2-1)} I + \frac{d^2+d-2\mu}{2d^2(d^2-1)} . 
\end{align*}
We then have to distinguish two cases. If $\mu\geq (d+1)/2d$, then
\begin{align*}
& \left\| \mathcal{R}^{\otimes 2}(\sigma_{\mu}) \right\|_1 = \frac{d+1-2\mu}{d^2-1} + \frac{2\mu d-d-1}{d(d^2-1)} \times d^2 = 2\mu -1 , \\
& \left\| \mathcal{S}^{\otimes 2}(\sigma_{\mu}) \right\|_1 = \frac{2\mu d-d-1}{2d(d^2-1)} \times d^2 + \frac{d^2+d-2\mu}{2d^2(d^2-1)} \times d^2 = \mu ,
\end{align*}
which are both smaller than $1$. While if $\mu\leq (d+1)/2d$, then
\begin{align*}
& \left\| \mathcal{R}^{\otimes 2}(\sigma_{\mu}) \right\|_1 = \frac{d+1-2\mu}{d^2-1} + \frac{d+1-2\mu d}{d(d^2-1)} \times (d^2-2) = \frac{d+2}{d}-2\mu , \\
& \left\| \mathcal{S}^{\otimes 2}(\sigma_{\mu}) \right\|_1 = \frac{d+1-2\mu d}{2d(d^2-1)} \times (d^2-2) + \frac{d^2+d-2\mu}{2d^2(d^2-1)} \times d^2 = \frac{d+1}{d}- \mu ,
\end{align*}
so that 
\[ \left\| \mathcal{R}^{\otimes 2}(\sigma_{\mu}) \right\|_1 >1 \ \Longleftrightarrow \ \left\| \mathcal{S}^{\otimes 2}(\sigma_{\mu}) \right\|_1 >1 \ \Longleftrightarrow \ \mu<\frac{1}{d} . \]
As a comparison, we know that we have $\sigma_{\mu}$ entangled iff $\mu<1/2$. So as soon as $d>2$, both the realignment and the SIC POVM maps do not detect all entangled Werner states (and they perform increasingly poorly as $d$ grows).

What is more, we know from \cite[Theorem 9]{rudolph2005further} that
\[ \|\sigma_{\mu}\|_{S_1^d\otimes_{\pi} S_1^d} = \begin{cases} 1 \text{ if } \mu\geq 1/2 \\
2(1-\mu) \text{ if } \mu< 1/2 \end{cases} .\]
So for $d=2$ we have that, for any entangled Werner state $\sigma_{\mu}$,
\[  \left\| \mathcal{R}^{\otimes 2}(\sigma_{\mu}) \right\|_1 = \|\tau_{\mu}\|_{S_1^2\otimes_{\pi} S_1^2} .\]
But the two norms do not coincide for $d>2$, even in the regime $\mu<1/d$ where the map $\mathcal R$ detects the entanglement of $\sigma_{\mu}$.

As a final comment, let us point out that, for both isotropic and Werner states, the same equality \eqref{eq:pure-states} as the one established for pure states, relating the norms of the realignment and SIC POVM maps, holds:
\[ \forall\ 0\leq\mu\leq 1, \quad \left\| \mathcal{S}^{\otimes 2}(\tau_{\mu}) \right\|_1 = \frac{\left\| \mathcal{R}^{\otimes 2}(\tau_{\mu}) \right\|_1+1}{2} \quad \text{and} \quad \left\| \mathcal{S}^{\otimes 2}(\sigma_{\mu}) \right\|_1 = \frac{\left\| \mathcal{R}^{\otimes 2}(\sigma_{\mu}) \right\|_1+1}{2} . \]

\section{Entanglement detection of bipartite pure states with white noise by the realignment and the SIC POVM testers}
\label{sec:noisy-states}

We now look at states which are the mixture of a pure state and the maximally mixed state, i.e.~given $\varphi\in\mathbb{C}^d\otimes\mathbb{C}^d$ a unit vector,
\[ \rho_{\mu} := \mu\kb{\varphi}{\varphi} + (1-\mu)\frac{I}{d^2} \, , \ 0\leq\mu\leq 1 . \]
We wonder when the realignment and SIC POVM testers detect the entanglement of such states.

Let us write $\varphi$ in its Schmidt decomposition 
\[ \ket{\varphi}=\sum_{i=1}^r\sqrt{\lambda_i}\ket{e_if_i} . \]
Note that we can assume without loss of generality that the $f_i$'s are equal to the $\bar{e}_i$'s, since any re-labelling of basis leaves $I/d^2$ invariant. If this is so, we have shown before, in Section \ref{sec:pure-states}, that
\[ \mathcal{R}^{\otimes 2}(\kb{\varphi}{\varphi}) = \sum_{i,j=1}^r \sqrt{\lambda_i\lambda_j}\kb{u_{ij}}{u_{ij}} \quad \text{and} \quad \mathcal{S}^{\otimes 2}(\kb{\varphi}{\varphi}) = \sum_{i,j=1}^r \sqrt{\lambda_i\lambda_j}\kb{v_{ij}}{v_{ij}} , \]
where for any $1\leq i,j,i',j'\leq r$,
\[ \bk{u_{ij}}{u_{i'j'}} = \delta_{ii'}\delta_{jj'} \quad \text{and} \quad \bk{v_{ij}}{v_{i'j'}} =\frac{ \delta_{ii'}\delta_{jj'} + \delta_{ij}\delta_{i'j'} }{2} . \]
What is more, we have also shown before, in Section \ref{sec:symmetric-states}, that
\[ \mathcal{R}^{\otimes 2}\left(\frac{I}{d^2}\right) = \frac{1}{d}\kb{\psi}{\psi} \quad \text{and} \quad \mathcal{S}^{\otimes 2}\left(\frac{I}{d^2}\right) = \frac{d+1}{2d^3}J . \]
As a consequence we have that, for any $0\leq \mu\leq 1$,
\begin{align*}
& \mathcal{R}^{\otimes 2}(\rho_{\mu}) = \mu \sum_{i,j=1}^r \sqrt{\lambda_i\lambda_j}\kb{u_{ij}}{u_{ij}} + (1-\mu) \frac{1}{d}\kb{\psi}{\psi} , \\ 
& \mathcal{S}^{\otimes 2}(\rho_{\mu}) = \mu \sum_{i,j=1}^r \sqrt{\lambda_i\lambda_j}\kb{v_{ij}}{v_{ij}} + (1-\mu) \frac{d+1}{2d^3}J .
\end{align*}
Since both are positive semidefinite, we then simply have
\begin{align*}
& \left\| \mathcal{R}^{\otimes 2}(\rho_{\mu}) \right\|_1 = \mathrm{Tr}\left( \mathcal{R}^{\otimes 2}(\rho_{\mu}) \right) = \mu (1 + 2f(\varphi)) + \frac{1-\mu}{d} , \\
& \left\| \mathcal{S}^{\otimes 2}(\rho_{\mu}) \right\|_1 = \mathrm{Tr}\left( \mathcal{S}^{\otimes 2}(\rho_{\mu}) \right) = \mu (1 + f(\varphi)) + \frac{(1-\mu)(d+1)}{2d} ,
\end{align*}
where we have set
\[ f(\varphi):\,= \sum_{i<j=1}^r \sqrt{\lambda_i\lambda_j} . \]

From these expressions, it is easy to see that
\[ \left\| \mathcal{R}^{\otimes 2}(\rho_{\mu}) \right\|_1 > 1 \ \Longleftrightarrow \ \left\| \mathcal{S}^{\otimes 2}(\rho_{\mu}) \right\|_1 > 1 \ \Longleftrightarrow \ \mu > \frac{d-1}{(1+2f(\varphi))d-1} . \]
And that also 
\[ \left\| \mathcal{R}^{\otimes 2}(\rho_{\mu}) \right\|_1 > \left\| \mathcal{S}^{\otimes 2}(\rho_{\mu}) \right\|_1 \ \Longleftrightarrow \ \mu > \frac{d-1}{(1+2f(\varphi))d-1} . \]
This means that the realignment and SIC POVM maps detect the entanglement of $\rho_{\mu}$ below the same amount $1-\mu$ of white noise. And in this range of $\mu$, the realignment map is `better' than the SIC POVM, in the sense that
\[ \left\| \mathcal{R}^{\otimes 2}(\rho_{\mu}) \right\|_1 > \left\| \mathcal{S}^{\otimes 2}(\rho_{\mu}) \right\|_1 . \]

As special cases, we recover the previous results on the entanglement detection of pure states ($\mu=1$) and isotropic states ($\varphi=\psi$). Indeed, for any state $\varphi\in\mathbb{C}^d\otimes\mathbb{C}^d$,
\[ 1 \leq 1+2f(\varphi) \leq d , \]
with equality in the first, resp.~second, inequality iff $\varphi$ is separable, resp.~maximally entangled. Hence, for the case $\mu=1$ we have
\[ \frac{d-1}{(1+2f(\varphi))d-1} < 1 \ \Longleftrightarrow \ 1+2f(\varphi) > 1 \ \Longleftrightarrow \ \varphi\ \text{entangled} , \]
while for the case $\varphi=\psi$ we have
\[ \mu > \frac{d-1}{d^2-1} \ \Longleftrightarrow \ \mu > \frac{1}{d+1} \ \Longleftrightarrow \ \tau_{\mu}\ \text{entangled} . \]

Note furthermore that this is yet another class of states for which the same equality \eqref{eq:pure-states} as the one established for pure states, relating the norms of the realignment and SIC POVM maps, holds: 
\[ \forall\ 0\leq\mu\leq 1, \quad \left\| \mathcal{S}^{\otimes 2}(\rho_{\mu}) \right\|_1 = \frac{\left\| \mathcal{R}^{\otimes 2}(\rho_{\mu}) \right\|_1+1}{2} . \]

\section{Entangled states which are detected by the realignment tester are detected by the SIC POVM tester}
\label{sec:conjecture}

In \cite{shang2018enhanced} the following was conjectured: Given an entangled state $\rho$ on $\mathbb C^d \otimes \mathbb C^d$, if its entanglement is detected by the matrix unit tester $\mathcal{R}:S_1^d\to\ell_2^{d^2}$, then it is necessarily detected by the SIC POVM tester $\mathcal{S}:S_1^d\to\ell_2^{d^2}$ as well, i.e.
\begin{equation} \label{eq:conjecture}
\|\mathcal{R}^{\otimes 2}(\rho)\|_{\ell_2^{d^2}\otimes_{\pi}\ell_2^{d^2}} >1 \Longrightarrow \|\mathcal{S}^{\otimes 2}(\rho)\|_{\ell_2^{d^2}\otimes_{\pi}\ell_2^{d^2}} >1 . 
\end{equation}
Here we answer this conjecture in the positive, by showing the following inequality, which clearly implies \eqref{eq:conjecture}. 
\begin{theorem}\label{thm:proof-conjecture}
	For any quantum state $\rho$ on $\mathbb C^d\otimes\mathbb C^d$, we have
\begin{equation} \label{eq:inequality}
\|\mathcal S^{\otimes 2}(\rho)\|_{\ell_2^{d^2} \otimes_\pi \ell_2^{d^2}} \geq \frac{\|\mathcal R^{\otimes 2}(\rho)\|_{\ell_2^{d^2} \otimes_\pi \ell_2^{d^2}} + 1}{2}.
\end{equation}
\end{theorem}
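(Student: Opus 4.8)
The plan is to first reduce the bipartite projective norm to a Schatten $1$-norm, then to bring the two testers to a common diagonal form and compare them by trace-norm duality. Concretely, I would use the observation of Section~\ref{sec:interest}: for bipartite testers the quantity $\|\mathcal E\otimes\mathcal F(\rho)\|_{\ell_2^{d^2}\otimes_\pi\ell_2^{d^2}}$ is nothing but the Schatten $1$-norm $\|M\|_1$ of the matrix $M\in\mathcal M_{d^2}(\mathbb C)$ with entries $M_{kl}=\Tr(E_k^*\otimes F_l^*\,\rho)$. So it suffices to prove $\|M_S\|_1\ge\tfrac12(\|M_R\|_1+1)$, where $M_R$ and $M_S$ are the matrices attached to $\mathcal R^{\otimes 2}$ and $\mathcal S^{\otimes 2}$.

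Next I would replace both testers by equivalent ones sharing a common structure. Since equivalent testers have equal output norms (Proposition~\ref{prop:equal-T-equivalent} together with the local-unitary invariance of the $\ell_2\otimes_\pi\ell_2$ norm), I may replace $\mathcal R$ by the canonical matrix-basis tester $\mathcal G$ (same test operator $F$), and $\mathcal S$ by the equivalent tester $\mathcal S'$ produced by the canonical construction of Section~\ref{sec:symmetric-testers}, whose operators are $E_1=\sqrt{(d+1)/2}\,G_1$ and $E_k=\tfrac1{\sqrt2}G_k$ for $k\ge 2$, where $G_1=I/\sqrt d$ and $G_2,\dots,G_{d^2}$ are traceless. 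Then $\mathcal S'=D\mathcal G$ with $D=\mathrm{diag}\big(\sqrt{(d+1)/2},\tfrac1{\sqrt2},\dots,\tfrac1{\sqrt2}\big)$, so at the matrix level $M_{S'}=DM_GD$ while $\|M_G\|_1=\|M_R\|_1$. Using $(M_G)_{11}=\Tr(G_1^*\otimes G_1^*\rho)=\tfrac1d\Tr\rho=\tfrac1d$, a direct block computation gives $2DM_GD=M_G+E$, where $E$ has a single entry equal to $1$ in its top-left corner (this is exactly where $\Tr\rho=1$ enters), a vanishing lower-right block, and first row/column blocks equal to $(\sqrt{d+1}-1)$ times those of $M_G$.

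It then remains to show the purely linear-algebraic inequality $\|M_G+E\|_1\ge\|M_G\|_1+1$. By convexity of the trace norm (equivalently, duality), $\|M_G+E\|_1\ge\mathrm{Re}\,\Tr(W^*(M_G+E))$ for every contraction $W$; choosing $W$ among the optimal witnesses of $M_G$, i.e.\ contractions with $\mathrm{Re}\,\Tr(W^*M_G)=\|M_G\|_1$, reduces the whole statement to exhibiting such a $W$ with $\mathrm{Re}\,\Tr(W^*E)\ge 1$. Here the plan is to use the freedom in the optimal witness on $\ker M_G$ together with the positive semidefiniteness of $\rho$, which constrains the first row and column of $M_G$ (they encode the reduced-state data that also appears in $E$), so that the corner contribution $E_{11}=1$ is not destroyed by the off-diagonal terms.

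The hard part will be precisely this last step. The target inequality is tight — it is an equality for all pure states by Section~\ref{sec:pure-states}, and also for isotropic and Werner states — so there is no room for slack: crude estimates such as pinching onto the corner $\kb{1}{1}$ and its complement, or the bound $\|DM_GD\|_1\ge\tfrac12\|M_G\|_1$ coming from $\|D^{-1}\|_\infty=\sqrt2$, are each too lossy on their own. One genuinely has to exploit how the off-diagonal blocks of $E$ are correlated with the phase (polar part) of $M_G$ through the positivity of $\rho$, which is where the real work of the proof lies.
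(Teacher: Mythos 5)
Your reduction is correct, and it is in fact a slightly cleaner route to the same pivot point as the paper's proof: the paper works directly with the SIC matrix $\hat S=\sqrt{(d+1)/(2d)}\sum_k\kb{k}{x_k\otimes\bar x_k}$ and must verify (via the orthonormality of the auxiliary vectors $y_k$) that $\sqrt2\,\hat S$ splits as $\sqrt{d+1}\,\kb{v}{\psi}+V$ with $V$ a partial isometry, whereas your use of tester equivalence (Proposition \ref{prop:equal-T-equivalent} plus local-unitary invariance of the $\pi$-norm) replaces $\mathcal S$ by $D\mathcal G$ and makes that deformation manifestly diagonal, $\sqrt2 D=\mathrm{diag}(\sqrt{d+1},1,\dots,1)$, for free. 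The genuine gap is what comes after: the inequality $\|M_G+E\|_1=\|(\sqrt2D)M_G(\sqrt2D)\|_1\ge\|M_G\|_1+1$ is the entire mathematical content of the theorem, and you explicitly leave it unproved ("the hard part will be precisely this last step"), offering only a witness-based sketch. That sketch points in a questionable direction: positivity of $\rho$ is \emph{not} what makes this step work — the inequality is a general matrix fact, valid for every $M\in\mathcal M_{d^2}(\mathbb C)$ with no positivity hypothesis, using only $(M_G)_{11}=\Tr\rho/d=1/d$. Moreover, because the inequality is saturated on large families of states, any hand-picked contraction $W$ must be simultaneously an exact optimizer for $M_G$ and pick up exactly $1$ from $E$; there is no argument given that such a $W$ exists, and constructing it by "freedom on $\ker M_G$" is essentially equivalent to reproving the inequality.

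The paper closes exactly this gap with Lemma \ref{lem:deformed-unitary}: for $S=\sum_i\gamma_i\kb{a_i}{b_i}$ with $|\gamma_i|\ge1$, one has $\|SXS^*\|_1\ge\|X\|_1+\sum_i(|\gamma_i|^2-1)\bra{b_i}X\ket{b_i}$, proved not by duality but by recasting the right-hand side as $\|\Phi(SXS^*)\|_1$ for a suitably built completely positive trace-preserving map $\Phi$ and invoking trace-norm contractivity of quantum channels (Russo--Dye). In your diagonal setting this becomes especially transparent: with $\Gamma=\sqrt2D$, the map
\begin{equation*}
\Phi(Y)\;=\;\bigl(\Gamma^{-1}Y\Gamma^{-1}\bigr)\,\oplus\,\Bigl(1-\tfrac{1}{d+1}\Bigr)\bra{1}Y\ket{1}
\end{equation*}
is a channel (the Kraus normalization reads $\Gamma^{-2}+\tfrac{d}{d+1}\kb{1}{1}=I$), and applying $\|\Phi(Y)\|_1\le\|Y\|_1$ to $Y=\Gamma M_G\Gamma$ yields precisely $\|\Gamma M_G\Gamma\|_1\ge\|M_G\|_1+d\,|(M_G)_{11}|=\|M_G\|_1+1$. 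So your setup is sound and would combine with this channel argument into a complete (and arguably streamlined) proof, but as written the proposal stops short of the theorem's crux.
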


Note that inequality \eqref{eq:inequality} was proven to be an equality for several classes of states in Sections \ref{sec:pure-states}, \ref{sec:symmetric-states} and \ref{sec:noisy-states}.
We show next that it is not the case in general. To this end, consider a product state $\rho = \rho_1 \otimes \rho_2$, for quantum states $\rho_{1,2}$ having respective purities $p_{1,2} := \Tr(\rho_{1,2}^2)$. We then have
\begin{align*}
    \|\mathcal S^{\otimes 2}(\rho)\|_{\ell_2^{d^2} \otimes_\pi \ell_2^{d^2}} &= \sqrt{\Tr\left( \frac{I+F}{2} \rho_1^{\otimes 2} \right) \Tr\left( \frac{I+F}{2} \rho_2^{\otimes 2} \right)} = \frac {\sqrt{(1+p_1)(1+p_2)}}{2},\\
    \|\mathcal R^{\otimes 2}(\rho)\|_{\ell_2^{d^2} \otimes_\pi \ell_2^{d^2}} &= \sqrt{\Tr\left( F \rho_1^{\otimes 2} \right) \Tr\left( F \rho_2^{\otimes 2} \right)} =  \sqrt{p_1p_2}.
\end{align*}
Hence, \eqref{eq:inequality} is saturated if and only if $p_1 = p_2$.

Before proving Theorem \ref{thm:proof-conjecture}, we show a key lemma. 

\begin{lemma}\label{lem:deformed-unitary}
	Let $\{a_1, \ldots, a_n\}$, $\{b_1, \ldots, b_n\}$ be two orthonormal bases of $\mathbb C^n$. For complex numbers $\gamma_1, \ldots, \gamma_n$ such that $|\gamma_i|\geq 1$ for all $1 \leq i \leq n$, define the matrix 
	$$S := \sum_{i=1}^n \gamma_i \ket{a_i}\bra{b_i}.$$
	Then, for any $X \in \mathcal M_n(\mathbb C)$, we have
	\begin{equation}\label{eq:ineq-X}
	\|SXS^*\|_1 \geq \|X\|_1 + \sum_{i=1}^n (|\gamma_i|^2-1)\bra{b_i} X \ket{b_i}.
	\end{equation}
\end{lemma}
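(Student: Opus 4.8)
The plan is to strip away all irrelevant unitary freedom so that $S$ becomes a positive diagonal matrix, and then to exhibit, through trace-norm duality, an explicit contraction that certifies the claimed lower bound. First I would write $\gamma_i=\omega_i r_i$ with $r_i=|\gamma_i|\ge 1$ and $|\omega_i|=1$, so that $S=V\Omega R$ with $V=\sum_i\kb{a_i}{b_i}$ and $\Omega=\sum_i\omega_i\kb{b_i}{b_i}$ unitary and $R=\sum_i r_i\kb{b_i}{b_i}$. By unitary invariance of the trace norm, $\|SXS^*\|_1=\|RX'R\|_1$, where $X'$ denotes the matrix of $X$ in the basis $\{b_i\}$, whose diagonal entries $\bra{b_i}X\ket{b_i}$ are exactly the ones appearing in the statement. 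It thus suffices to prove, for $R=\mathrm{diag}(r_1,\dots,r_n)$ with $r_i\ge 1$ and any $X\in\mathcal M_n(\mathbb C)$,
\[ \|RXR\|_1 \ \ge\ \|X\|_1 + \mathrm{Re}\,\mathrm{Tr}\big((R^2-I)X\big). \]
Applying this to $e^{\mathrm i\theta}X$ and optimizing over $\theta$ in fact upgrades it to $\|X\|_1+|\mathrm{Tr}((R^2-I)X)|$; in particular it yields the stated inequality whenever $\sum_i(|\gamma_i|^2-1)\bra{b_i}X\ket{b_i}=\mathrm{Tr}((R^2-I)X)$ is real, which is the case in the intended application.

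For the core inequality I would invoke the duality $\|A\|_1=\sup\{\mathrm{Re}\,\mathrm{Tr}(U^*A):\|U\|_\infty\le 1\}$ and simply guess the right test operator. Let $X=W|X|$ be a polar decomposition with $W$ unitary, so $\mathrm{Tr}(W^*X)=\|X\|_1$, and set $P:=R^{-2}$, which satisfies $0<P\le I$ because $r_i\ge 1$. Define
\[ U:=R^{-1}WR^{-1}+(I-R^{-2})=P^{1/2}WP^{1/2}+(I-P). \]
A one-line computation using cyclicity of the trace gives $\mathrm{Tr}(U^*RXR)=\mathrm{Tr}(W^*X)+\mathrm{Tr}((R^2-I)X)=\|X\|_1+\mathrm{Tr}((R^2-I)X)$, i.e.\ exactly the target right-hand side. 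Hence, as soon as $U$ is a contraction, duality delivers $\|RXR\|_1\ge\mathrm{Re}\,\mathrm{Tr}(U^*RXR)$, which is the claim.

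The crux — and the step I expect to be the main obstacle — is therefore proving $\|U\|_\infty\le 1$, since $U$ is a sum of two norm-$\le 1$ pieces that do not commute, so the triangle inequality is useless. The clean way around this is to realize $U$ as a corner of a unitary. Because $P^{1/2}$ and $(I-P)^{1/2}$ commute and their squares sum to $I$, the block operator
\[ \Sigma:=\begin{pmatrix} P^{1/2} & (I-P)^{1/2}\\ (I-P)^{1/2} & -P^{1/2}\end{pmatrix} \]
on $\mathbb C^n\oplus\mathbb C^n$ is a self-adjoint unitary ($\Sigma^2=I$). Consequently $\Sigma\,(W\oplus I)\,\Sigma$ is unitary, and expanding the block product shows that its top-left corner is precisely $P^{1/2}WP^{1/2}+(I-P)=U$. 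Since the top-left block $U$ of a unitary satisfies $\|Uv\|^2\le\|Uv\|^2+\|Cv\|^2=\|v\|^2$ (with $C$ the bottom-left block), we get $\|U\|_\infty\le 1$, completing the proof.
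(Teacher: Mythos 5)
Your proof is correct, but it takes a genuinely different route from the paper's. The paper works with a quantum channel: setting $Y:=SXS^*$, it rewrites the claim as $\|Y\|_1\geq\|S^{-1}Y(S^*)^{-1}\|_1+\sum_{i=1}^n(1-|\gamma_i|^{-2})\bra{a_i}Y\ket{a_i}$, recognizes the right-hand side as $\|\Phi(Y)\|_1$ for the block map $\Phi(Y)=\left(S^{-1}Y(S^*)^{-1}\right)\oplus\bigoplus_{i=1}^n\left(1-|\gamma_i|^{-2}\right)\bra{a_i}Y\ket{a_i}$, verifies that $\Phi$ is completely positive and trace preserving by exhibiting Kraus operators, and then invokes the contractivity of quantum channels with respect to the trace norm (a consequence of the Russo--Dye theorem, cited from the literature). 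You work instead on the dual side: after the reduction $S=V\Omega R$ to a positive diagonal matrix, you construct an explicit operator-norm contraction $U=P^{1/2}WP^{1/2}+(I-P)$ whose pairing with $RXR$ equals the desired lower bound, and you certify $\|U\|_\infty\leq 1$ by realizing $U$ as the top-left corner of the unitary $\Sigma(W\oplus I)\Sigma$. The two arguments are in fact dual to one another: transporting your contraction back through the identifications of your reduction, it becomes $(S^*)^{-1}WS^{-1}+\sum_i(1-|\gamma_i|^{-2})\kb{a_i}{a_i}$, which is precisely the adjoint $\Phi^*$ of the paper's channel evaluated at the block contraction $W\oplus I$, and contractivity of this operator is exactly what Russo--Dye asserts for the unital completely positive map $\Phi^*$. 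What your route buys is self-containedness: trace-norm duality, polar decomposition, and the corner-of-a-unitary fact replace the CP-map machinery and the external contractivity theorem. What the paper's route buys is a marginally stronger conclusion: its argument yields the right-hand side with $\sum_i(|\gamma_i|^2-1)\left|\bra{b_i}X\ket{b_i}\right|$, whereas your phase optimization yields $\left|\sum_i(|\gamma_i|^2-1)\bra{b_i}X\ket{b_i}\right|$; both settle the statement, which, as you rightly note, should be read with a real part (or absolute value) when $X$ is not self-adjoint --- the sum being real in the application of the lemma to Theorem \ref{thm:proof-conjecture}, neither proof loses anything there.
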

\begin{proof}
	First, note that the matrix $S$ is invertible, with inverse
	$$S^{-1} = \sum_{i=1}^n \gamma_i^{-1} \ket{b_i}\bra{a_i}.$$
	Writing $Y := SXS^*$, equation \eqref{eq:ineq-X} is equivalent to
	\begin{equation}\label{eq:ineq-Y}
	\|Y\|_1 \geq \|S^{-1}Y(S^*)^{-1}\|_1 + \sum_{i=1}^n \left( 1 - |\gamma_i|^{-2}\right)\bra{a_i} Y \ket{a_i}.
	\end{equation}
	Note that the right hand side of the inequality above is equal to $\|\Phi(Y)\|_1$, where the map $\Phi : \mathcal M_n(\mathbb C) \to \mathcal M_{2n}(\mathbb C)$ is given by
	$$\Phi(Y) = \left(S^{-1}Y(S^*)^{-1}\right) \oplus \left( \bigoplus_{i=1}^n \left( 1 - |\gamma_i|^{-2}\right)\bra{a_i} Y \ket{a_i} \right).$$
	Hence, equation \eqref{eq:ineq-Y} reads $\|\Phi(Y)\|_1 \leq \|Y\|_1$, which is true if $\Phi$ is a quantum channel (this is a simple consequence of the Russo-Dye theorem, as explained in \cite{perez2006contractivity}). Let us prove next that $\Phi$ is indeed a quantum channel. We have $\Phi(X) = KXK^* + \sum_{i=1}^n L_i X L_i^*$, where 
	$$K = \begin{bmatrix}
	S^{-1} \\ 0_n
	\end{bmatrix}
	\quad \text{ and } \quad 
	L_i = \begin{bmatrix}
	0_n \\ \sqrt{1 - |\gamma_i|^{-2}}\ket{i} \bra{a_i}
	\end{bmatrix}, \ 1\leq i\leq n.$$
	So the fact that $\Phi$ is completely positive is clear.
	And the trace preserving condition is also easily shown to be true, since
	$$K^*K + \sum_{i=1}^n L_i^*L_i = (S^{-1})^*S^{-1} + \sum_{i=1}^n \left(1 - |\gamma_i|^{-2}\right)\ket{a_i}\bra{a_i} =  \sum_{i=1}^n \ket{a_i}\bra{a_i} = I_n.$$
\end{proof}

We can now give the proof of the main result of this section.

\begin{proof}[Proof of Theorem \ref{thm:proof-conjecture}]
	Setting $X := \mathcal R^{\otimes 2}(\rho)$, where $X$ is viewed as belonging to $\mathcal M_{d^2}(\mathbb C)$, it is easy to see that we have
	\begin{align*}
	\mathcal S^{\otimes 2}(\rho) &= \frac{d+1}{2d} \sum_{k,l=1}^{d^2} \bra{x_k \otimes x_l} \rho \ket{x_k \otimes x_l} \ket{kl}\\
	&= \frac{d+1}{2d} \sum_{k,l=1}^{d^2} \bra{x_k \otimes \bar x_k} X \ket{\bar x_l \otimes x_l} \ket{kl}\\
	&= \frac{d+1}{2d} \sum_{k,l=1}^{d^2} \bra{x_k \otimes \bar x_k} X F  \ket{x_l \otimes \bar x_l} \ket{kl}
	\end{align*}
	where $F$ is the flip operator.  We now have $\|\mathcal R^{\otimes 2}(\rho)\|_{\ell_2^{d^2} \otimes_\pi \ell_2^{d^2}} = \|X\|_1$ and $\|\mathcal S^{\otimes 2}(\rho)\|_{\ell_2^{d^2} \otimes_\pi \ell_2^{d^2}} = \|\hat S(XF)\hat S^*\|_1$ \footnote{Note that this norm relation holds in general for any tester $\mathcal E$: $\|\mathcal E^{\otimes 2}(\rho)\|_{\ell_2^{d^2} \otimes_\pi \ell_2^{d^2}} = \|\hat E X \hat E^*\|_1$.}, where $\hat S$ is the matrix of the operator $\mathcal S$, i.e.
	$$\hat S = \sqrt\frac{d+1}{2d} \sum_{k=1}^{d^2} \ket{k}\bra{x_k \otimes \bar x_k}.$$
	Noticing that $1 = \operatorname{Tr} \rho = d \bra{\psi} X \ket{\psi}$, where $\ket{\psi}:=\sum_{i=1}^{d}\ket{ii}/\sqrt{d}$ is the maximally entangled state on $\mathbb C^d \otimes \mathbb C^d$, the inequality in the statement reads 
	$$\|\hat S (XF) {\hat S}^*\|_1 \geq \frac{\|X\|_1 + d \bra{\psi} X \ket{\psi}}{2} = \frac{\|XF\|_1 + d \bra{\psi} XF \ket{\psi}}{2}.$$
	In order to conclude, we need to show that $S := \sqrt 2 \hat S$ can be written as in equation \eqref{eq:ineq-X} from Lemma \ref{lem:deformed-unitary}, with $b_1 = \psi$, $\gamma_1 = \sqrt{d+1}$, and $\gamma_2 = \cdots = \gamma_{d^2} = 1$. 
	
	Indeed, let us set 
	$$\ket{y_k} := \sqrt\frac{d+1}{d} \ket{x_k \otimes \bar x_k} - \frac{\sqrt{d+1}-1}{d}\ket{\psi}.$$ 
	One can show, using the fact that the $\ket{x_k}\bra{x_k}$'s form a SIC POVM, that the $\ket{y_k}$'s form an orthonormal basis of $\mathbb C^{d^2}$ (for similar ideas, see \cite{gour2014construction,jivulescu2017symmetric}). We have thus
	$$S = \sqrt\frac{d+1}{d}\sum_{k=1}^{d^2} \ket{k} \bra{x_k \otimes \bar x_k} = (\sqrt{d+1}-1)\ket{v}\bra{\psi} + \underbrace{\sum_{k=1}^{d^2} \ket{k} \bra{y_k}}_{=:U},$$
	where $\ket v$ is the normalized all-ones vector from \eqref{eq:def-v}, and $U$ is a unitary operator. One can see by direct computation that $U \ket \psi= \ket v$, so we can write 
	$$S = \sqrt{d+1} \ket v \bra \psi + V,$$
	where $V$ is a partial isometry mapping $(\mathbb C \ket \psi)^\perp$ to $(\mathbb C \ket v)^\perp$. We have thus shown that $S$ can be written as in \eqref{eq:ineq-X}, finishing the proof.
\end{proof}

\section{Completeness of the family of criteria in the bipartite case}
\label{sec:completeness}

Our goal in this section is to prove that, in the bipartite case, the family of entanglement criteria that we are looking at is \emph{complete}. What we mean by this is that, given an entangled bipartite state, there always exist testers detecting its entanglement. We have already seen that this is the case for bipartite pure states, and we shall prove a similar result for multipartite pure states in Section \ref{sec:multipartite}. To be fully rigorous, what we are able to show in the case of bipartite mixed states is that our family of entanglement criteria is complete at least when extended to allow for a permutation of the indices before applying the testers, as described in Section \ref{sec:interest}. More precisely, we will prove the following result.

\begin{theorem} \label{th:completeness}
Let $\rho$ be an entangled state on $\mathbb C^d\otimes\mathbb C^d$. Then, there exists a tester $\mathcal{E}:S_1^d\to\ell_2^{d^2}$ such that
\[ \left\| \mathcal E^{\sharp} \otimes \mathcal E\left(F\rho^{\Gamma}\right) \right\|_{\ell_2^{d^2}\otimes_{\pi}\ell_2^{d^2}} > 1 , \]
where $\mathcal{E}^{\sharp}:S_1^d\to\ell_2^{d^2}$ is the tester whose operators are the adjoints of those of $\mathcal{E}$. 
\end{theorem}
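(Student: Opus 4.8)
The plan is to prove the stronger quantitative fact that, as $\mathcal{E}$ ranges over all testers $S_1^d\to\ell_2^{d^2}$, the quantity $\|\mathcal{E}^\sharp\otimes\mathcal{E}(F\rho^\Gamma)\|_{\ell_2^{d^2}\otimes_\pi\ell_2^{d^2}}$ reaches $\|\rho\|_{S_1^d\otimes_\pi S_1^d}$, which by the projective-norm characterization of separability \cite{rudolph2000separability,perez2004deciding} is $>1$ exactly when $\rho$ is entangled. First I would extract a witness: since $\|\rho\|_{S_1^d\otimes_\pi S_1^d}=\sup\{\operatorname{Re}\langle W,\rho\rangle:\|W\|_{S_\infty^d\otimes_\epsilon S_\infty^d}\le 1\}>1$ by projective--injective duality, and $\rho$ is Hermitian, there is some $W$ with $\|W\|_{S_\infty^d\otimes_\epsilon S_\infty^d}\le 1$ and $\langle W,\rho\rangle>1$. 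The whole problem is then to convert this witness into a tester.

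Next I would record the bilinear identity that drives the construction. Using that $\|\cdot\|_{\ell_2^{d^2}\otimes_\pi\ell_2^{d^2}}=\|\cdot\|_{S_1^{d^2}}$ (Section \ref{sec:interest}) and trace-norm duality, I would write $\|\mathcal{E}^\sharp\otimes\mathcal{E}(F\rho^\Gamma)\|_{1}=\sup_{\|\Xi\|_\infty\le 1}\operatorname{Re}\Tr\!\big(\Xi^*\,\mathcal{E}^\sharp\otimes\mathcal{E}(F\rho^\Gamma)\big)$, where $\Xi$ runs over contractions in $\mathcal{M}_{d^2}(\mathbb C)$. Expanding $\mathcal{E}^\sharp$ and $\mathcal{E}$ on matrix units and unwinding the flip and the partial transposition, a direct computation yields $\Tr\!\big(\Xi^*\,\mathcal{E}^\sharp\otimes\mathcal{E}(F\rho^\Gamma)\big)=\langle W_{\mathcal{E},\Xi},\rho\rangle$ with $W_{\mathcal{E},\Xi}:=\sum_{k,l}\Xi_{kl}\,E_l\otimes\bar E_k$, the $\{E_k\}$ being the operators of $\mathcal{E}$. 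Testing on product vectors $x\otimes u$, $y\otimes v$ gives $\bra{yv}W_{\mathcal{E},\Xi}\ket{xu}=\langle b\,|\,\Xi\,|\,a\rangle$ with $a_l=\bra{y}E_l\ket{x}$ and $b_k=\bra{\bar v}E_k\ket{\bar u}$; the tester normalization $\sum_k|\bra{x}E_k\ket{y}|^2\le 1$ forces $\|a\|,\|b\|\le 1$, so $\|W_{\mathcal{E},\Xi}\|_{S_\infty^d\otimes_\epsilon S_\infty^d}\le 1$. This soundness direction is exactly Corollary \ref{cor:criterion} re-read through duality, and it reduces the theorem to a surjectivity statement: \emph{every} $W$ in the unit ball of $S_\infty^d\otimes_\epsilon S_\infty^d$ must arise as some $W_{\mathcal{E},\Xi}$.

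Finally I would carry out the construction from the fixed witness $W$. Using $\|W\|_{S_\infty^d\otimes_\epsilon S_\infty^d}=\sup_{x,y,u,v}|\bra{yv}W\ket{xu}|$, the task is to factor the sesquilinear form $(x\otimes u,\,y\otimes v)\mapsto\bra{yv}W\ket{xu}$ as $\langle \hat G(y\otimes v)\,|\,\bar\Xi\,\hat G'(x\otimes u)\rangle$ through a contraction $\Xi$ and two maps $\hat G,\hat G'$ that are contractive on product vectors, i.e.\ matrices of a tester and of its $\sharp$-partner; matching all product matrix elements then forces $W_{\mathcal{E},\Xi}=W$, whence $\|\mathcal{E}^\sharp\otimes\mathcal{E}(F\rho^\Gamma)\|_1\ge|\Tr(\Xi^*\cdots)|=\langle W,\rho\rangle>1$. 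This factorization is the main obstacle. A perfect tester is useless here: its matrix is unitary, so $\{E_l\otimes\bar E_k\}$ is orthonormal and $\Xi$ becomes the coordinate matrix of $W$, whose operator norm equals the largest operator-Schmidt coefficient of $W$; imposing $\|\Xi\|_\infty\le 1$ would then demand that all operator-Schmidt coefficients of $W$ be $\le 1$, which is strictly stronger than $\|W\|_{S_\infty^d\otimes_\epsilon S_\infty^d}\le 1$ (for instance $W=I\otimes I$ has injective norm $1$ but largest operator-Schmidt coefficient $d$). One is therefore forced to use a genuinely \emph{non-perfect} tester whose operators $\{E_k\}$ are adapted to the operator-Schmidt decomposition of $W$, so that the large Schmidt coefficients are absorbed into the product-contractions $\hat G,\hat G'$ while $\Xi$ stays a contraction; proving that such a tester always exists is the heart of the argument, and it is precisely where a factorization theorem for the injective tensor norm is needed.
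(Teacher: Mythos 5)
Your framework is set up correctly and its soundness half does match the paper: you extract a witness $W$ with $\|W\|_{S_\infty^d\otimes_\epsilon S_\infty^d}\leq 1$ and $\langle W,\rho\rangle>1$, and you verify that every pair (tester $\mathcal E$, contraction $\Xi$) produces an admissible witness $W_{\mathcal E,\Xi}=\sum_{k,l}\Xi_{kl}\,E_l\otimes\bar E_k$. Your observation that perfect testers cannot work (their operators form an orthonormal basis, so $\|\Xi\|_\infty\leq 1$ would force all operator-Schmidt coefficients of $W$ to be at most $1$) is also correct. But the proposal then stops exactly where the real work begins: you reduce the theorem to the claim that \emph{every} $W$ in the unit ball of $S_\infty^d\otimes_\epsilon S_\infty^d$ arises as some $W_{\mathcal E,\Xi}$, you call this ``the heart of the argument,'' and you never prove it. This is a genuine gap, not a technicality. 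Moreover, the reduction itself is to a statement that is strictly stronger than needed and of dubious validity: the paper's concluding section points out (citing unpublished work of Aubrun) that factorization through $\ell_2$ with constant $1$ fails for general maps $S_1^d\to S_\infty^d$, which is precisely the kind of universal factorization your reduction demands.

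The idea you are missing is that one may \emph{improve the witness before factoring it}, so that only a very special class of witnesses ever needs to be factored. The paper first replaces $\Theta$ by its Hermitian part $(\Theta+\Theta^*)/2$ (still a witness since $\rho$ is Hermitian), and then by $\Theta_\lambda:=\lambda I+(1-\lambda)\Theta$ with $\lambda\in[0,1]$ large enough to make it positive semidefinite; since $\Tr(I\rho)=1$ and $\|I\|_{S_\infty^d\otimes_\epsilon S_\infty^d}=1$, the operator $\Theta_\lambda$ is again a witness with injective norm at most $1$. For a \emph{positive semidefinite} witness, the factorization you were unable to supply is immediate from Lemma \ref{lem:T-Theta}: positivity together with the injective norm normalization says exactly that $T:=\Theta^{\Gamma}F$ is a test operator, i.e.\ $T=\sum_{k=1}^{d^2}E_k\otimes E_k^*$ for a tester $\mathcal E$ whose operators are read off from the spectral decomposition of $\Theta$ (Section \ref{sec:test-operator}). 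In your notation this corresponds to taking $\Xi$ to be the identity; equivalently, one pairs $\mathcal E^{\sharp}\otimes\mathcal E(F\rho^{\Gamma})$ against the vector $u=\sum_k\ket{kk}$, which has $\ell_2^{d^2}\otimes_\epsilon\ell_2^{d^2}$ norm $1$, to get
\begin{equation*}
\left\|\mathcal E^{\sharp}\otimes\mathcal E\left(F\rho^{\Gamma}\right)\right\|_{\ell_2^{d^2}\otimes_{\pi}\ell_2^{d^2}}\ \geq\ \sum_{k=1}^{d^2}\Tr\left(E_k\otimes E_k^*\,F\rho^{\Gamma}\right)\ =\ \Tr\left((TF)^{\Gamma}\rho\right)\ =\ \Tr(\Theta\rho)\ >\ 1 .
\end{equation*}
So the positive-semidefinite reduction is not a cosmetic normalization: it is the step that replaces your unproven surjectivity claim by a statement (the characterization of test operators) that is already available, and without it your approach does not close.
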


Concretely, $F\rho^{\Gamma}$ is the following permutation of indices of $\rho$:
\[ \rho=\sum_{i,j,k,l=1}^d\rho_{ij,kl}\kb{ij}{kl} \implies F\rho^{\Gamma}=\sum_{i,j,k,l=1}^d\rho_{ij,kl}\kb{li}{kj} . \]

Before we launch into the proof of Theorem \ref{th:completeness}, let us make two basic but useful observations. 

By duality we know that if $\rho$ is entangled, i.e.~$\|\rho\|_{S_1^{d} \otimes_\pi S_1^{d}}>1$, then there exists $\Theta$ such that $\|\Theta\|_{S_{\infty}^{d} \otimes_\epsilon S_{\infty}^{d}}\leq 1$ and $\mathrm{Tr}(\Theta\rho)>1$. Now, we can assume without loss of generality that $\Theta$ is Hermitian. This is because $\hat{\Theta}:=(\Theta+\Theta^*)/2$, which is Hermitian, is also such that $\|\hat{\Theta}\|_{S_{\infty}^{d} \otimes_\epsilon S_{\infty}^{d}}\leq 1$ and $\mathrm{Tr}(\hat{\Theta}\rho)>1$. Indeed on the one hand, 
\begin{align*}  
\mathrm{Tr}(\hat{\Theta}\rho) & = \frac{1}{2}\left( \mathrm{Tr}(\Theta\rho) + \mathrm{Tr}(\Theta^*\rho) \right) \\
& = \frac{1}{2}\left( \mathrm{Tr}(\Theta\rho) + \overline{\mathrm{Tr}(\Theta\rho^*)} \right) \\
& = \frac{1}{2}\left( \mathrm{Tr}(\Theta\rho) + \overline{\mathrm{Tr}(\Theta\rho)} \right) \\
& = \mathrm{Tr}(\Theta\rho) \\
& > 1 .
\end{align*}
And on the other hand, for all $X,Y$ such that $\|X\|_1,\|Y\|_1\leq 1$,
\begin{align*} 
\left| \mathrm{Tr}(\hat{\Theta}X\otimes Y) \right| & = \left| \frac{1}{2}\left( \mathrm{Tr}(\Theta X\otimes Y) + \mathrm{Tr}(\Theta^* X\otimes Y) \right) \right| \\
& = \left| \frac{1}{2}\left( \mathrm{Tr}(\Theta X\otimes Y) + \overline{ \mathrm{Tr}(\Theta X^*\otimes Y^*) } \right) \right| \\
& \leq \frac{1}{2}\left( \left| \mathrm{Tr}(\Theta X\otimes Y) \right| + \left| \mathrm{Tr}(\Theta X^*\otimes Y^*) \right| \right) \\
& \leq \|\Theta\|_{S_{\infty}^{d} \otimes_\epsilon S_{\infty}^{d}} \\
& \leq 1 .
\end{align*}

What is more, we can also assume without loss of generality that $\Theta$ is positive semidefinite. This is because $\Theta_{\lambda}:=\lambda I+(1-\lambda)\Theta$, which is positive semidefinite for $0\leq\lambda\leq 1$ large enough, is also such that $\|\Theta_{\lambda}\|_{S_{\infty}^{d} \otimes_\epsilon S_{\infty}^{d}}\leq 1$ and $\mathrm{Tr}(\Theta_{\lambda}\rho)>1$. Indeed on the one hand, 
\[ \mathrm{Tr}(\Theta_{\lambda}\rho) = \lambda\mathrm{Tr}(\rho) + (1-\lambda)\mathrm{Tr}(\Theta\rho) > \lambda + (1-\lambda) = 1 . \]
And on the other hand, for all $X,Y$ such that $\|X\|_1,\|Y\|_1\leq 1$,
\begin{align*} 
\left| \mathrm{Tr}(\Theta_{\lambda} X\otimes Y) \right| & = \left| \lambda \mathrm{Tr}(X\otimes Y) + (1-\lambda)\mathrm{Tr}(\Theta X\otimes Y) \right| \\
& \leq \lambda \left| \mathrm{Tr}(X)\mathrm{Tr}(Y) \right| + (1-\lambda) \left| \mathrm{Tr}(\Theta X\otimes Y) \right| \\
& \leq \lambda \|X\|_1\|Y\|_1 + (1-\lambda) \|\Theta\|_{S_{\infty}^{d} \otimes_\epsilon S_{\infty}^{d}}  \\
& \leq \lambda + (1-\lambda) \\
& = 1 .
\end{align*}

We are now ready to prove Theorem \ref{th:completeness}.

\begin{proof}
Let $\Theta$ be such that $\|\Theta\|_{S_{\infty}^{d} \otimes_\epsilon S_{\infty}^{d}}\leq 1$ and $\mathrm{Tr}(\Theta\rho)>1$. As justified above, we can assume without loss of generality that $\Theta^*=\Theta$ and $\Theta\geq 0$. From Lemma \ref{lem:T-Theta}, we thus know that $T:=\Theta^{\Gamma}F$ is a test operator. This means that there exist operators $\{E_k\}_{k=1}^{d^2}$ on $\mathbb C^d$ such that $T=\sum_{k=1}^{d^2}E_k\otimes E_k^*$ and $\mathcal E:X\in\mathcal M_d(\mathbb C^d)\mapsto\sum_{k=1}^{d^2}\mathrm{Tr}(E_k^*X)\ket{k}\in\mathbb C^{d^2}$ is a tester.

Let us now prove that $\| \mathcal E^{\sharp} \otimes \mathcal E(F\rho^{\Gamma}) \|_{\ell_2^{d^2}\otimes_{\pi}\ell_2^{d^2}} > 1$. We have
\[ \mathcal E^{\sharp} \otimes \mathcal E\left(F\rho^{\Gamma}\right) = \sum_{k,l=1}^{d^2} \mathrm{Tr}\left(E_k\otimes E_l^*F\rho^{\Gamma}\right) \ket{kl} . \]
Next, observe that $\ket{u}:=\sum_{k=1}^{d^2}\ket{kk}$ is such that $\|u\|_{\ell_2^{d^2}\otimes_{\epsilon}\ell_2^{d^2}}=1$. Hence by duality,
\begin{align*}
\left\| \mathcal E^{\sharp} \otimes \mathcal E\left(F\rho^{\Gamma}\right) \right\|_{\ell_2^{d^2}\otimes_{\pi}\ell_2^{d^2}} & \geq \bk{u}{\mathcal E^* \otimes \mathcal E(F\rho^{\Gamma})} \\
& = \sum_{k=1}^{d^2} \mathrm{Tr}\left(E_k\otimes E_k^*F\rho^{\Gamma}\right) \\
& = \mathrm{Tr}\left(TF\rho^{\Gamma}\right) \\
& = \mathrm{Tr}\left((TF)^{\Gamma}\rho\right) \\
& = \mathrm{Tr}(\Theta\rho) \\
& > 1 ,
\end{align*}
which is exactly what we wanted to show.
\end{proof}

\section{Entanglement testers in the multipartite setting}
\label{sec:multipartite}

In this section we discuss the power of the realignment entanglement tester, when used on multipartite pure quantum states. In the first subsection, we show that the criterion obtained by applying several copies of the realignment tester detects \emph{all} pure entangled multipartite states. In the following two subsections we show that the multipartite realignment criterion is, in a sense, optimal in the case of the so-called $W$ state and in the case of pure multipartite states admitting a generalized Schmidt decomposition.

\subsection{Entanglement detection of multipartite pure states by the realignment tester} \label{sec:multi-general} \hfill\smallskip

We consider the entanglement criterion on $(\mathbb{C}^d)^{\otimes m}$ defined by the realignment map $\mathcal{R}^{\otimes m}$, namely: for any state $\rho$ on $(\mathbb{C}^d)^{\otimes m}$,
\[ \left\| \mathcal{R}^{\otimes m}(\rho) \right\|_{(\ell_2^{d^2})^{\otimes_{\pi}m}} > 1 \ \Longrightarrow\ \rho \ \text{entangled} . \]
We want to show that the above implication is an equivalence on the set of pure states. More precisely, we will establish the following result.

\begin{theorem} \label{th:multi-nsc}
For any unit vector $\varphi\in (\mathbb{C}^d)^{\otimes m}$,
\begin{equation} \label{eq:lb-pi-final} 
\left\| \hat{\varphi} \right\|_{(\ell_2^{d^2})^{\otimes_{\pi}m}} \geq \frac{1}{\left\| \varphi \right\|_{(\ell_2^{d})^{\otimes_{\epsilon}m}}} . 
\end{equation}
If in addition $\varphi$ is non-negative (meaning that its coefficients in the canonical basis of $(\mathbb{C}^d)^{\otimes m}$ are all non-negative), then
\begin{equation} \label{eq:lb-pi-nn} 
\left\| \hat{\varphi} \right\|_{(\ell_2^{d^2})^{\otimes_{\pi}m}} \geq \frac{1}{\left\| \varphi \right\|_{(\ell_2^{d})^{\otimes_{\epsilon}m}}^2} . \end{equation}
\end{theorem}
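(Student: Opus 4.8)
The plan is to pass from the projective norm of $\hat\varphi$ to an injective norm, and then to bound that injective norm by a product of two geometric‑measure–type quantities, one of which collapses under the non‑negativity hypothesis.

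First I would record that the realignment map $\mathcal R$ is a Hilbert‑space isometry of $S_2^d$ onto $\ell_2^{d^2}$, so that $\mathcal R^{\otimes m}$ is isometric for every reasonable crossnorm (apply Proposition \ref{prop:bound-norm-tensor-product-operators} to $\mathcal R^{\otimes m}$ and to its inverse, both of norm $1$). Since $\hat\varphi=\mathcal R^{\otimes m}(\kb{\varphi}{\varphi})$, this gives
\[ \left\|\hat\varphi\right\|_{(\ell_2^{d^2})^{\otimes_\pi m}}=\left\|\kb{\varphi}{\varphi}\right\|_{(S_2^d)^{\otimes_\pi m}}. \]
As the projective and injective norms are dual to one another, pairing $\kb{\varphi}{\varphi}$ with itself in the Hilbert–Schmidt inner product and using $\|\kb{\varphi}{\varphi}\|_{S_2^{d^m}}=1$ yields
\[ \left\|\kb{\varphi}{\varphi}\right\|_{(S_2^d)^{\otimes_\pi m}}\ \ge\ \frac{1}{N},\qquad N:=\left\|\kb{\varphi}{\varphi}\right\|_{(S_2^d)^{\otimes_\epsilon m}}. \]
Everything then reduces to the upper bound $N\le\|\varphi\|_{(\ell_2^d)^{\otimes_\epsilon m}}$ in general, and $N\le\|\varphi\|_{(\ell_2^d)^{\otimes_\epsilon m}}^{\,2}$ when $\varphi\ge 0$.

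The heart of the matter is the estimate on $N$. By definition and the self‑duality of $S_2^d$, $N=\sup|\langle\varphi|B_1\otimes\cdots\otimes B_m|\varphi\rangle|$, the supremum over $B_i\in\mathcal M_d(\mathbb C)$ with $\|B_i\|_{S_2}=1$. For fixed such $B_i$ I would expand the right input as $\ket\varphi=\sum_{\vec b}\varphi_{\vec b}\ket{\vec b}$, so that
\[ \langle\varphi|B_1\otimes\cdots\otimes B_m|\varphi\rangle=\sum_{\vec b}\varphi_{\vec b}\,\big\langle\varphi\,\big|\,B_1\ket{b_1}\otimes\cdots\otimes B_m\ket{b_m}\big\rangle. \]
Each $B_1\ket{b_1}\otimes\cdots\otimes B_m\ket{b_m}$ is a product vector of norm $\prod_i\|B_i\ket{b_i}\|_2$, so its overlap with $\varphi$ is at most $\|\varphi\|_{(\ell_2^d)^{\otimes_\epsilon m}}\prod_i\|B_i\ket{b_i}\|_2$ by the very definition of the injective norm. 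Taking moduli termwise gives
\[ N\ \le\ \|\varphi\|_{(\ell_2^d)^{\otimes_\epsilon m}}\ \sum_{\vec b}|\varphi_{\vec b}|\prod_{i}\|B_i\ket{b_i}\|_2 . \]
The remaining sum is again an injective‑norm pairing: it equals $\big\langle\,|\varphi|\,,\,\kappa_1\otimes\cdots\otimes\kappa_m\big\rangle$, where $|\varphi|$ is the entrywise modulus of $\varphi$ and $\kappa_i:=\sum_b\|B_i\ket{b}\|_2\,\ket b$ is a non‑negative vector with $\|\kappa_i\|_2=\|B_i\|_{S_2}=1$. Hence it is bounded by $\bigl\||\varphi|\bigr\|_{(\ell_2^d)^{\otimes_\epsilon m}}$, and I obtain the key inequality
\[ N\ \le\ \|\varphi\|_{(\ell_2^d)^{\otimes_\epsilon m}}\cdot\bigl\||\varphi|\bigr\|_{(\ell_2^d)^{\otimes_\epsilon m}}. \]

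To finish I would use two elementary facts about $\bigl\||\varphi|\bigr\|_{(\ell_2^d)^{\otimes_\epsilon m}}$. In general the injective norm is dominated by the Euclidean one, so $\bigl\||\varphi|\bigr\|_{(\ell_2^d)^{\otimes_\epsilon m}}\le\||\varphi|\|_2=\|\varphi\|_2=1$; the key inequality then reads $N\le\|\varphi\|_{(\ell_2^d)^{\otimes_\epsilon m}}$ and yields \eqref{eq:lb-pi-final}. If moreover $\varphi$ is non‑negative, then $|\varphi|=\varphi$, whence $\bigl\||\varphi|\bigr\|_{(\ell_2^d)^{\otimes_\epsilon m}}=\|\varphi\|_{(\ell_2^d)^{\otimes_\epsilon m}}$, so $N\le\|\varphi\|_{(\ell_2^d)^{\otimes_\epsilon m}}^{\,2}$ and we obtain \eqref{eq:lb-pi-nn}. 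The genuinely delicate step—the main obstacle—is the injective‑norm estimate: the two occurrences of $\varphi$ must be decoupled so that exactly one of them is controlled by $\|\varphi\|_\epsilon$ and the other only by $\bigl\||\varphi|\bigr\|_\epsilon$. This is precisely what the peeling of the right‑hand $\varphi$ into product vectors accomplishes, and it is what makes the non‑negativity hypothesis convert the trivial bound $\bigl\||\varphi|\bigr\|_\epsilon\le 1$ into the sharp $\bigl\||\varphi|\bigr\|_\epsilon=\|\varphi\|_\epsilon$, thereby gaining the extra factor of $\|\varphi\|_\epsilon$. The remaining points—that the supremum defining $N$ is over $B_i$ with $\|B_i\|_{S_2}=1$, and that the identification $\ell_2^{d^2}\cong S_2^d$ is compatible with the crossnorms—are routine consequences of $\mathcal R$ being an isometry.
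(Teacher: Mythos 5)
Your proof is correct, and its core is genuinely different from the paper's. Both arguments start identically: lower-bound the projective norm by duality, $\|\hat\varphi\|_{(\ell_2^{d^2})^{\otimes_\pi m}} \geq \langle\hat\varphi,\hat\varphi\rangle / \|\hat\varphi\|_{(\ell_2^{d^2})^{\otimes_\epsilon m}} = 1/N$, where (under the isometry $\ell_2^{d^2}\cong S_2^d$) the quantity $N$ is the injective norm of $\kb{\varphi}{\varphi}$ in $(S_2^d)^{\otimes_\epsilon m}$, i.e.\ the paper's $\|\varphi\otimes\bar\varphi\|_{(\ell_2^{d^2})^{\otimes_\epsilon m}}$. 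The divergence is in how $N$ is controlled. The paper treats the two cases by two separate, non-elementary ingredients: for non-negative $\varphi$ it invokes the additivity of the geometric measure of entanglement for non-negative states (\cite[Theorem 5]{zhu2010additivity}) to get $N=\|\varphi\|_\epsilon^2$, and for general $\varphi$ it proves the inequality $\|\chi\otimes\chi'\|_\epsilon\leq\|\chi\|_\epsilon$ via monotonicity of $h_{sep}$ under partial traces. You instead prove the single unified estimate $N\leq\|\varphi\|_\epsilon\cdot\bigl\||\varphi|\bigr\|_\epsilon$ by an elementary ``peeling'' argument: expand one copy of $\varphi$ in the canonical basis, bound each overlap $|\langle\varphi|B_1 b_1\otimes\cdots\otimes B_m b_m\rangle|$ by $\|\varphi\|_\epsilon\prod_i\|B_i\ket{b_i}\|_2$, and recognize the remaining sum as a pairing of $|\varphi|$ with the product of the unit vectors $\kappa_i=\sum_b\|B_i\ket{b}\|_2\ket{b}$ (indeed $\|\kappa_i\|_2=\|B_i\|_{S_2}$). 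Both conclusions then follow at once, from $\bigl\||\varphi|\bigr\|_\epsilon\leq\bigl\||\varphi|\bigr\|_2=1$ in general and $|\varphi|=\varphi$ in the non-negative case. What your route buys: it is self-contained (no appeal to the Zhu--Hayashi additivity theorem), it handles both cases uniformly, and as a by-product it gives an elementary proof of the hard direction of the multiplicativity instance the paper imports, namely $\|\varphi\otimes\bar\varphi\|_{(\ell_2^{d^2})^{\otimes_\epsilon m}}\leq\|\varphi\|_\epsilon^2$ for non-negative $\varphi$; it even interpolates between the two regimes through $\bigl\||\varphi|\bigr\|_\epsilon$. What the paper's route buys: the intermediate inequality \eqref{eq:chi-chi'} is a statement of independent interest (valid for an arbitrary second factor $\chi'$, not just $\bar\varphi$), and the argument is shorter modulo the cited literature. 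One small caveat in your write-up: the blanket claim that $\mathcal R^{\otimes m}$ is ``isometric for every reasonable crossnorm'' is stronger than what tensor products of isometries guarantee in general; but you only use the projective (and, implicitly, injective) case, which is exactly what Proposition \ref{prop:bound-norm-tensor-product-operators} applied to $\mathcal R^{\otimes m}$ and its inverse justifies, so this does not affect correctness.
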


As an immediate consequence of Theorem \ref{th:multi-nsc}, we have by Proposition \ref{prop:sep-pure}
\[ \varphi\ \text{entangled}\ \Longrightarrow\  \left\| \varphi \right\|_{(\ell_2^d)^{\otimes_{\epsilon}m}} < 1\ \Longrightarrow\ \left\| \hat{\varphi} \right\|_{(\ell_2^{d^2})^{\otimes_{\pi}m}} >1 . \]
So we indeed have shown that
\[ \varphi\ \text{entangled}\ \Longleftrightarrow\  \left\| \mathcal{R}^{\otimes m}(\kb{\varphi}{\varphi}) \right\|_{(\ell_2^{d^2})^{\otimes_{\pi}m}} >1 . \]

\begin{proof}
Define $\hat{\varphi}\in (\mathbb{C}^{d^2})^{\otimes m}$ as
\[ \ket{\hat{\varphi}} := \mathcal{R}^{\otimes m}(\kb{\varphi}{\varphi}) = \sum_{\substack{1\leq i_1,\ldots,i_m \leq d \\ 1\leq j_1,\ldots,j_m \leq d}} \bk{j_1\cdots j_m}{\varphi} \bk{\varphi}{i_1\cdots i_m} \ket{i_1j_1\cdots i_mj_m} . \]
By duality, we know that
\begin{equation} \label{eq:duality} 
\left\| \hat{\varphi} \right\|_{(\ell_2^{d^2})^{\otimes_{\pi}m}} \geq \frac{\bk{\hat{\varphi}}{\hat{\varphi}}}{\left\| \hat{\varphi} \right\|_{(\ell_2^{d^2})^{\otimes_{\epsilon}m}}} . 
\end{equation}
First observe that, for any unit vectors $a^1,\ldots,a^m\in\mathbb{C}^{d^2}$,
\[ \bk{\hat{\varphi}}{a^1\cdots a^m} = \sum_{\substack{1\leq i_1,\ldots,i_m \leq d \\ 1\leq j_1,\ldots,j_m \leq d}} \varphi_{j_1\ldots j_m} \bar{\varphi}_{i_1\ldots i_m} a^1_{i_1j_1}\cdots a^m_{i_mj_m} = \bk{\varphi\bar{\varphi}}{a^1\cdots a^m} . \]
Therefore, 
\begin{align*}
\left\| \hat{\varphi} \right\|_{(\ell_2^{d^2})^{\otimes_{\epsilon}m}} & = \max \left\{ \bk{\hat{\varphi}}{a^1\cdots a^m},\ a^1,\ldots,a^m\in\mathbb{C}^{d^2},\, \|a^1\|,\ldots,\|a^m\| \leq 1 \right\} \\
& = \max \left\{ \bk{\varphi\bar{\varphi}}{a^1\cdots a^m},\ a^1,\ldots,a^m\in\mathbb{C}^{d^2},\, \|a^1\|,\ldots,\|a^m\| \leq 1 \right\} \\
& = \left\| \varphi\otimes \bar{\varphi} \right\|_{(\ell_2^{d^2})^{\otimes_{\epsilon}m}} .
\end{align*}
Second we have
\[ \bk{\hat{\varphi}}{\hat{\varphi}} = \sum_{\substack{1\leq i_1,\ldots,i_m \leq d \\ 1\leq j_1,\ldots,j_m \leq d}} \varphi_{j_1\ldots j_m} \bar{\varphi}_{i_1\ldots i_m} \bar{\varphi}_{j_1\ldots j_m} \varphi_{i_1\ldots i_m} =  |\bk{\varphi}{\varphi} |^2 = 1 . \]
Hence, inserting the two above equalities into equation \eqref{eq:duality}, we get
\begin{equation} \label{eq:lb-pi} 
\left\| \hat{\varphi} \right\|_{(\ell_2^{d^2})^{\otimes_{\pi}m}} \geq \frac{1}{\left\| \varphi \otimes \bar{\varphi} \right\|_{(\ell_2^{d^2})^{\otimes_{\epsilon}m}}} . 
\end{equation}

In the case where $\varphi$ is non-negative, then first $\bar{\varphi}=\varphi$. And second we know from \cite[Theorem 5]{zhu2010additivity} that its geometric measure of entanglement (as defined in equation \eqref{eq:G}) is additive, which is equivalent to saying that its injective $\ell_2$-norm is multiplicative. This means that
$$ \left\| \varphi \otimes \bar{\varphi} \right\|_{(\ell_2^{d^2})^{\otimes_{\epsilon}m}} = \left\| \varphi \otimes \varphi \right\|_{(\ell_2^{d^2})^{\otimes_{\epsilon}m}} = \left\| \varphi \right\|_{(\ell_2^{d})^{\otimes_{\epsilon}m}} ^2 .$$
Inequality \eqref{eq:lb-pi-nn} is thus proven.

To deal with the general case, let us define, for any state $\rho$ on $H_1\otimes\cdots\otimes H_m$,
\begin{align*} 
h_{sep(H_1:\cdots:H_m)}(\rho) :=\, & \max\left\{ \mathrm{Tr}(\rho\, \sigma^1\otimes\cdots\otimes\sigma^m),\, \sigma^k \text{ state on } H_k,\, 1\leq k\leq m \right\} \\
=\, & \max\left\{ \mathrm{Tr}(\rho\, \kb{a^1}{a^1} \otimes\cdots\otimes \kb{a^m}{a^m}),\, a^k\in H_k,\, \|a^k\|=1, \, 1\leq k\leq m \right\} ,
\end{align*}
where the last equality is by extremality of pure product states amongst product states. We thus see that, for any unit vector $\chi\in H_1\otimes\cdots\otimes H_m$,
\[ \|\chi\|_{H_1\otimes_{\epsilon}\cdots\otimes_{\epsilon}H_m} = \sqrt{h_{sep(H_1:\cdots:H_m)}(\kb{\chi}{\chi})} . \]
Now, let $\rho,\rho'$ be states on $H_1\otimes\cdots\otimes H_m$, and assume that $\sigma^1,\ldots,\sigma^m$ are states on $H_1\otimes H_1,\ldots,H_m\otimes H_m$ such that
\[ h_{sep(H_1H_1:\cdots:H_mH_m)}(\rho\otimes\rho') = \mathrm{Tr}(\rho\otimes\rho' \sigma^1\otimes\cdots\otimes\sigma^m) . \]
We then have, denoting by $\tilde{\sigma}^k:=\mathrm{Id}\otimes\mathrm{Tr}(\sigma^k)$ the reduced state of $\sigma^k$ on $H_k$,
\begin{align*}
h_{sep(H_1:\cdots:H_m)}(\rho) & \geq \mathrm{Tr}(\rho\, \tilde{\sigma}^1\otimes\cdots\otimes\tilde{\sigma}^m) \\
& = \mathrm{Tr}(\rho\otimes I\, \sigma^1\otimes\cdots\otimes\sigma^m) \\
& \geq \mathrm{Tr}(\rho\otimes\rho' \sigma^1\otimes\cdots\otimes\sigma^m) \\
& = h_{sep(H_1H_1:\cdots:H_mH_m)}(\rho\otimes\rho') .
\end{align*}
This implies that, for any unit vectors $\chi,\chi'\in H_1\otimes\cdots\otimes H_m$,
\begin{equation} \label{eq:chi-chi'} \|\chi\|_{H_1\otimes_{\epsilon}\cdots\otimes_{\epsilon}H_m} \geq \|\chi\otimes \chi'\|_{H_1H_1\otimes_{\epsilon}\cdots\otimes_{\epsilon}H_mH_m} . \end{equation}
Coming back to equation \eqref{eq:lb-pi}, we eventually get using the above observation that
$$ \left\| \hat{\varphi} \right\|_{(\ell_2^{d^2})^{\otimes_{\pi}m}} \geq \frac{1}{\left\| \varphi \right\|_{(\ell_2^{d})^{\otimes_{\epsilon}m}}} , $$
which is exactly inequality \eqref{eq:lb-pi-final}.
\end{proof}

\begin{remark}
Note that, in the case of a not necessarily normalized vector $\varphi\in(\mathbb C^d)^{\otimes m}$, equation \eqref{eq:lb-pi} would actually take the form
$$ \left\| \hat{\varphi} \right\|_{(\ell_2^{d^2})^{\otimes_{\pi}m}} \geq \frac{\|\varphi\|_2^4}{\left\| \varphi \otimes \bar{\varphi} \right\|_{(\ell_2^{d^2})^{\otimes_{\epsilon}m}}} . $$
Now, in the case where the vectors $\chi,\chi'\in H_1\otimes\cdots\otimes H_m$ are not necessarily normalized, equation \eqref{eq:chi-chi'} would read instead
$$ \|\chi\|_{H_1\otimes_{\epsilon}\cdots\otimes_{\epsilon}H_m} \geq \frac{ \|\chi\otimes \chi'\|_{H_1H_1\otimes_{\epsilon}\cdots\otimes_{\epsilon}H_mH_m} }{ \|\chi'\|_{2} } . $$
This implies that, in the case of a not necessarily normalized vector $\varphi\in(\mathbb C^d)^{\otimes m}$, equation \eqref{eq:lb-pi-final} would be 
$$ \left\| \hat{\varphi} \right\|_{(\ell_2^{d^2})^{\otimes_{\pi}m}} \geq \frac{\|\varphi\|_2^4}{\|\varphi\|_2 \left\| \varphi \right\|_{(\ell_2^{d})^{\otimes_{\epsilon}m}}} = \frac{\|\varphi\|_2^3}{\left\| \varphi \right\|_{(\ell_2^{d})^{\otimes_{\epsilon}m}}} . $$
\end{remark}

It is instructive to see what the lower bound \eqref{eq:lb-pi-final} gives in the bipartite case. Indeed, in this case we can compute the exact values of the quantities on the left and right hand sides of the inequality. Namely, if a unit vector $\varphi\in(\mathbb{C}^d)^{\otimes 2}$ has Schmidt decomposition
\[ \ket{\varphi}=\sum_{i=1}^r\sqrt{\lambda_i}\ket{e_if_i} , \] 
then we have on the one hand
\[ \left\|  \mathcal{R}^{\otimes 2}(\kb{\varphi}{\varphi}) \right\|_{(\ell_2^{d^2})^{\otimes_{\pi}2}} = \left(\sum_{i=1}^r\sqrt{\lambda_i}\right)^2 , \]
and on the other hand
\[ \left\| \varphi \right\|_{(\ell_2^{d})^{\otimes_{\epsilon}2}} = \sqrt{\lambda_1} . \]
So the largest gap in inequality \eqref{eq:lb-pi-final} is when $\varphi$ has uniform Schmidt coefficients, in which case the left hand side is equal to $r$ while the right hand side is equal to $\sqrt{r}$.

\subsection{The example of the \texorpdfstring{$W$}{W} state} \hfill\smallskip

As an illustration of the power of the entanglement criterion based on the realignment tester on multipartite pure states, let us see what it yields when applied to the famous $W$ state, known to be the maximally entangled three-qubit pure state \cite{derksen2017}. We recall that the latter is defined as
$$ \ket{w} := \frac{1}{\sqrt{3}}\left( \ket{112} + \ket{121} + \ket{211} \right) \in (\mathbb C^2)^{\otimes 3} .$$
It is entangled, and we know from \cite[Lemma 6.2]{Friedland2018} that
$$ \| \kb{w}{w} \|_{(S_1^2)^{\otimes_{\pi}3}} = \| w \|_{(\ell_2^2)^{\otimes_{\pi}3}}^2 = \left(\frac{3}{2}\right)^2 = \frac{9}{4} .$$

We would now like to compare the above value to the value of the $(\ell_2^4)^{\otimes_{\pi}3}$ norm of $\mathcal{R}^{\otimes 3}(\kb{w}{w})$. Since $w$ is non-negative, inequality \eqref{eq:lb-pi-nn} tells us that
$$ \left\| \mathcal{R}^{\otimes 3}(\kb{w}{w}) \right\|_{(\ell_2^4)^{\otimes_{\pi}3}} \geq \frac{1}{\|w\|_{(\ell_2^2)^{\otimes_{\epsilon}3}}^2} .$$
Now, we know from \cite[Lemma 6.2]{Friedland2018} again that
$$ \| w \|_{(\ell_2^2)^{\otimes_{\epsilon}3}} = \frac{2}{3} . $$
We thus have
$$ \left\| \mathcal{R}^{\otimes 3}(\kb{w}{w}) \right\|_{(\ell_2^4)^{\otimes_{\pi}3}} \geq \frac{1}{\left(2/3\right)^2} = \frac{9}{4}, $$
which is actually an equality since on the other hand
$$ \left\| \mathcal{R}^{\otimes 3}(\kb{w}{w}) \right\|_{(\ell_2^4)^{\otimes_{\pi}3}} \leq \| \kb{w}{w} \|_{(S_1^2)^{\otimes_{\pi}3}} = \frac{9}{4} .$$

To summarize, we have shown that the realignment tester $\mathcal R$ optimally detects the entanglement of the $W$ state, in the sense that
$$ \left\| \mathcal{R}^{\otimes 3}(\kb{w}{w}) \right\|_{(\ell_2^4)^{\otimes_{\pi}3}} = \| \kb{w}{w} \|_{(S_1^2)^{\otimes_{\pi}3}} = \frac{9}{4} > 1 .$$
What is more, this is an example where our lower bound \eqref{eq:lb-pi-nn} is tight.

\begin{remark}
In general, we know from Theorem \ref{th:multi-nsc} that, for any unit vector $\phi\in(\mathbb C^d)^{\otimes m}$, the following inequalities hold: $$\|\phi\|^2_{(\ell_2^d)^{\otimes_\pi m}} = \| \kb{\phi}{\phi}\|_{(S_1^d)^{\otimes_\pi m}} \geq \|\mathcal R^{\otimes m}(\kb{\phi}{\phi})\|_{(\ell_2^{d^2})^{\otimes_\pi m}}  \geq \frac{1}{\|\phi\|_{(\ell_2^d)^{\otimes_\epsilon m}}}.$$
This shows that for all unit vectors $\phi\in(\mathbb C^d)^{\otimes m}$ for which $\|\phi\|_\pi^2 \|\phi\|_\epsilon = 1$, the realignment criterion is exact. Similarly we also have that the realignment criterion is exact for all non-negative unit vectors $\varphi$ such that $\|\phi\|_\pi \|\phi\|_\epsilon = 1$.

Note that the unit vector
$$\ket{v} = \frac 1 2 (\ket{112} + \ket{121} + \ket{211} - \ket{222}) \in (\mathbb C^2)^{\otimes 3},$$
studied in \cite[Lemma 6.1]{Friedland2018}, saturates the duality relation $\|v\|_\pi \|v\|_\epsilon = 1$. But it has negative coefficients, so we cannot guarantee the exactness of the realignment criterion in this case. 
\end{remark}

\subsection{The case of multipartite pure states having a generalized Schmidt decomposition} \hfill\smallskip

We now focus on the particular case where the unit vector $\varphi\in(\mathbb C^d)^{\otimes m}$ admits a generalized Schmidt decomposition, i.e.
\[ \ket{\varphi} = \sum_{k=1}^r \sqrt{\lambda_k} \ket{e^1_k\cdots e^m_k} , \]
where $\lambda_1,\ldots,\lambda_r>0$ are such that $\sum_{k=1}^r\lambda_k=1$ and $\{e^1_k\}_{k=1}^r,\ldots,\{e^m_k\}_{k=1}^r$ are orthonormal families in $\mathbb C^d$. 

For such multipartite pure state $\varphi$, setting again $\ket{\hat{\varphi}}:=\mathcal R^{\otimes m}(\kb{\varphi}{\varphi})$, we have,
\begin{align*}
\ket{\hat{\varphi}} & = \sum_{k,l=1}^r \sqrt{\lambda_k\lambda_l} \sum_{\substack{1\leq i_1,\ldots,i_m \leq d \\ 1\leq j_1,\ldots,j_m \leq d}} \bk{j_1\cdots j_m}{e^1_k\cdots e^m_k} \bk{e^1_l\cdots e^m_l}{i_1\cdots i_m} \ket{i_1j_1\cdots i_mj_m} \\
& = \sum_{k,l=1}^r \sqrt{\lambda_k\lambda_l} \sum_{i_1=1}^d \bk{e^1_l}{i_1}\ket{i_1} \sum_{j_1=1}^d \bk{j_1}{e^1_k}\ket{j_1} \cdots \sum_{i_m=1}^d \bk{e^m_l}{i_m}\ket{i_m} \sum_{j_m=1}^d \bk{j_m}{e^m_k}\ket{j_m} . 
\end{align*}  
Now, we just have to observe that, for each $1\leq k,l\leq r$ and $1\leq q\leq m$,
\[ \sum_{i_q=1}^d \bk{e^q_l}{i_q}\ket{i_q}=\ket{\bar{e}^q_l} \text{ and } \sum_{j_q=1}^d \bk{j_q}{e^q_k}\ket{j_q}=\ket{e^q_k} .\]
Therefore, we actually have
\[ \ket{\hat{\varphi}} = \sum_{k,l=1}^r \sqrt{\lambda_k\lambda_l}\ket{\bar{e}^1_k e^1_l\cdots \bar{e}^m_k e^m_l} . \]
We recognize in the expression above a generalized Schmidt decomposition in $(\mathbb C^{d^2})^{\otimes m}$. Hence,
\[ \left\|\hat{\varphi}\right\|_{(\ell_2^{d^2})^{\otimes_{\pi}m}} = \sum_{k,l=1}^r \sqrt{\lambda_k\lambda_l} . \]

Note that, for such multipartite state $\varphi$, we know from \cite[Proof of Theorem 0.1]{palazuelos2014largest} that
\[ \| \kb{\varphi}{\varphi} \|_{(S_1^d)^{\otimes_{\pi} m}} = \sum_{k,l=1}^r \sqrt{\lambda_k\lambda_l} . \]
So, just as in the bipartite case, this is a situation where \[ \left\| \mathcal R^{\otimes m}(\kb{\varphi}{\varphi}) \right\|_{(\ell_2^{d^2})^{\otimes_{\pi}m}} = \| \kb{\varphi}{\varphi} \|_{(S_1^d)^{\otimes_{\pi} m}} . \]

\section{Conclusions and open problems}\label{sec:conclusions}

We have introduced in this work a new paradigm for entanglement detection in bipartite and multipartite quantum systems, based on entanglement testers. The main idea is to reduce the characterization of entanglement, based on computing the projective norm in a tensor product of Schatten $1$-norm classes to that of the projective norm in a tensor product of Hilbert spaces. In other words, using entanglement testers, one reduces the entanglement problem of mixed quantum states to that of pure quantum states (which is know to be much simpler), at the cost of obtaining only a sufficient criterion for entanglement. The most symmetric entanglement testers correspond to the realignment criterion and to the SIC POVM criterion, which have been studied extensively in the literature, in the bipartite case. Our work provides a natural generalization of these criteria to the multipartite setting. 

We analyze the performance of entanglement testers, identifying the important subclass of perfect testers (the ones which detect every pure entangled state). We compare the realignment tester with the SIC POVM tester, showing an exact relation between them in the bipartite case, which allows us to prove two recent conjectures. We then show that every entangled bipartite mixed quantum state can be detected, after index permutation, by a pair of specifically tailored testers, and that the realignment tester can detect any multipartite pure entangled state. 

The main question that remains unanswered at this point is whether our family of entanglement criteria is complete, without allowing for permutation of indices before applying the testers. Concretely, we would like to know if, for any entangled state $\rho$ on $\mathbb C^{d_1}\otimes\cdots\otimes\mathbb C^{d_m}$, there exist testers $\mathcal E_i:S_1^{d_i}\to\ell_2^{n_i}$, $1\leq i\leq m$, such that $\|\mathcal E_1\otimes\cdots\otimes\mathcal E_m(\rho)\|_{\ell_2^{n_1}\otimes_\pi\cdots\otimes_\pi\ell_2^{n_m}}>1$. Even for bipartite states we are only able to show a weaker version of this statement (see Section \ref{sec:completeness}). Note that, in the bipartite case, this problem can be seen as a \emph{factorization through $\ell_2$} problem. Indeed, given an entangled state $\rho$ on $\mathbb C^d\otimes\mathbb C^d$, there exists by definition an operator $\mathcal T:S_1^d\to S_\infty^d$ with norm at most $1$ witnessing its entanglement. And in order to exhibit testers detecting its entanglement, we would need to find operators $\mathcal E,\mathcal F:S_1^d\to\ell_2^n$ with norms at most $1$ such that $\mathcal T=\mathcal E^*\mathcal F$. It can be shown that not every $\mathcal T:S_1^d\to S_\infty^d$ can be factorized in this way with constant $1$ \cite{GuillaumeCIRM}. However, this does not tell us if there exist entangled states that do not have any factorizable entanglement witness.

In a different direction, it would be worth investigating further the performance of our entanglement criteria in the multipartite setting. Indeed, the only quantitative results that we establish in this work when more then two parties are involved are for pure states. But what about the case of mixed states? Are there interesting classes of multipartite mixed states whose entanglement can be detected by the realignment or SIC POVM testers? And can we, in general, compare the respective performances of these two testers?

Finally, it could be interesting to probe the efficiency of entanglement testers $\mathcal E : S_1^{d} \to \mathbb C^{n}$ in the case where the output dimension $n$ is (much) smaller than the dimension of the input space $d^2$. Although such testers cannot be perfect, computing the projective norm of the output tensor is easier when the dimension $n$ is smaller, so the trade-off between the computational efficiency and the performance of these testers needs to be assessed. 

\bigskip

\noindent\textit{Acknowledgements.} MAJ acknowledges the support of Universit\'e Toulouse III Paul Sabatier in the form of an invited professorship, during which this work was initiated. We would like to thank Guillaume Aubrun for the extremely valuable help in understanding map factorization questions.

\bibliographystyle{alpha}
\bibliography{tensors}

\end{document}